\definecolor{Orange}{HTML}{F58137}
\definecolor{NavyBlue}{HTML}{006EB8}
\definecolor{JungleGreen}{HTML}{00A99A}
\definecolor{Salmon}{HTML}{FA8072}
\definecolor{salmon}{RGB}{234,153,153}
\definecolor{cornflowerblue}{RGB}{100,149,237}
\definecolor{sub}{HTML}{cde4ff} 
\definecolor{charcoal}{rgb}{0.21, 0.27, 0.31}
\definecolor{celadon}{rgb}{0.67, 0.88, 0.69}
\newtcolorbox{boxC}{
    colback=celadon!66,
    coltext=charcoal,
    boxrule = 0.5pt,
    colframe = white,
}
\newtheorem{theorem}{Theorem}
\newtheorem{lemma}{Lemma}
\newtheorem{corollary}{Corollary}
\newtheorem{definition}{Definition}
\DeclareMathOperator*{\argmax}{\arg\!\max}
\DeclareMathOperator{\E}{\mathbb{E}}
\DeclareMathOperator{\Prob}{\mathbb{P}}
\DeclareMathOperator{\given}{\vert}
\DeclareMathOperator{\supp}{supp}
\newcommand{\N}{\mathbb{N}}
\newcommand{\R}[1]{\mathbb{R}^{#1}}
\newcommand{\indicatorSymbol}{\mathbbm{1}}
\newcommand{\softmax}{\text{softmax}}
\newcommand{\util}{u}
\newcommand{\nProds}{n}
\newcommand{\temp}{\tau}
\newcommand{\consDist}{P_c}
\newcommand{\consumerSum}{\hat{c}}
\newcommand{\nconsumerSum}{\Bar{c}}
\newcommand{\NE}{\mathsf{NE}}
\tikzset{cross/.style={cross out, draw=black, minimum size=2*(#1-\pgflinewidth), inner sep=0pt, outer sep=0pt}, cross/.default={1pt}}
\newcommand{\spherevec}[5]{%
    \coordinate (C) at (#1, #2);
    \coordinate (T) at ($(C) + (0,0.8)$);
    \draw[fill=Orange, Orange] (C) circle (0.5mm);
    \draw[->, thick, Orange, >=latex] (C) -- ($(C) + (#3:#4 cm)$);
    \draw ($(C) + (-1,0)$) arc (180:360:1cm and 0.5cm);
    \draw[dashed] ($(C) + (-1,0)$) arc (180:0:1cm and 0.5cm);
    \draw ($(C) + (0,1)$) arc (90:270:0.5cm and 1cm);
    \draw[dashed] ($(C) + (0,1)$) arc (90:-90:0.5cm and 1cm);
    \draw (C) circle (1cm);
    \shade[ball color=blue!5!white,opacity=0.10] (C) circle (1cm);
    \node[left=4mm of T,Orange] (label) {#5};
}
\newcommand{\spheredistr}[2]{%
    \coordinate (C) at (#1, #2);
    \coordinate (T) at ($(C) + (0,1.5)$);
    \draw ($(C) + (-2,0)$) arc (180:360:2cm and 1cm);
    \draw[fill=NavyBlue, NavyBlue] (C) circle (0.5mm);
    \draw[dashed] ($(C) + (-2,0)$) arc (180:0:2cm and 1cm);
    \draw ($(C) + (0,2)$) arc (90:270:1cm and 2cm);
    \draw[dashed] ($(C) + (0,2)$) arc (90:-90:1cm and 2cm);
    \draw (C) circle (2cm);
    \foreach \a/\r/\o in {30/2/1,135/1.9/1,-115/1.8/0.9,-140/1/0.6,90/1/0.6,-30/1.9/0.8,175/1.5/0.6}{
      \draw[->, thick, NavyBlue, >=latex,opacity=\o] (C) -- ($(C) + (\a:\r cm)$);
      \draw ($(C) + (\a:\r cm)$) node[cross=0.5mm,NavyBlue,opacity=\o] {};
    }
    \shade[ball color=blue!5!white,opacity=0.10] (C) circle (2cm);
    \node[right=1cm of T,NavyBlue] (label) {$c\sim P_c$};
}
\title{Modeling Content Creator Incentives on \\ Algorithm-Curated Platforms}
\author{Jiri Hron\textsuperscript{$\star$,$\dagger$}, 
Karl Krauth\textsuperscript{$\star$,$\diamondsuit$}, 
Michael I.\ Jordan\textsuperscript{$\diamondsuit$}, 
Niki Kilbertus\textsuperscript{$\spadesuit$}, 
Sarah Dean\textsuperscript{$\clubsuit$}
\\
\textsuperscript{$\dagger$}University of Cambridge,
\textsuperscript{$\diamondsuit$}UC Berkeley,
\textsuperscript{$\spadesuit$}TU Munich \& Helmholtz Munich,
\textsuperscript{$\clubsuit$}Cornell University
}
\begin{document}

\maketitle

\begin{abstract}
    \looseness=-1
    Content creators compete for user attention. Their reach crucially depends on algorithmic choices made by developers on online platforms. To maximize exposure, many creators adapt strategically, as evidenced by examples like the sprawling search engine optimization industry. This begets competition for the finite user attention pool. We formalize these dynamics in what we call an \emph{exposure game}, a model of incentives induced by algorithms, including modern factorization and (deep) two-tower architectures. We prove that seemingly innocuous algorithmic choices---e.g., non-negative vs.\ unconstrained factorization---significantly affect the existence and character of (Nash) equilibria in exposure games. We proffer use of creator behavior models, like exposure games, for an (ex-ante) \emph{pre-deployment audit}. Such an audit can identify misalignment between desirable and incentivized content, and thus complement post-hoc measures like content filtering and moderation. To this end, we propose tools for numerically finding equilibria in exposure games, and illustrate results of an audit on the MovieLens and LastFM datasets. Among else, we find that the strategically produced content exhibits strong dependence between algorithmic exploration and content diversity, and between model expressivity and bias towards gender-based user and creator groups.
\end{abstract}

\section{Introduction}
\label{sec:introduction}

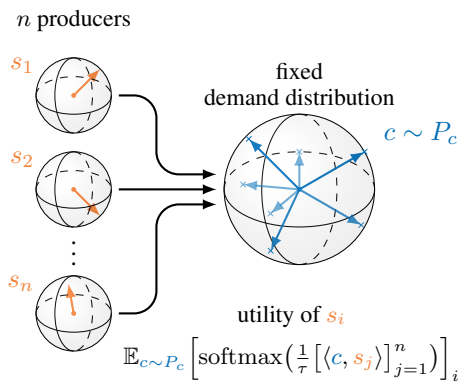
\begin{wrapfigure}{R}{7cm}
  \centering
  \vspace{-4mm}
  \begin{tikzpicture}[scale=0.5]
   \node[align=center,font=\small] at (0,2)
   {$\nProds$ producers};
   \spherevec{0}{0}{45}{1}{$s_1$}
   \spherevec{0}{-2.5}{-45}{1}{$s_2$}
   \node at (0, -4) {$\vdots$};
   \spherevec{0}{-5.8}{100}{0.8}{$s_n$}
   \node[align=center,font=\small] at (6,0.3) {fixed\\demand distribution};
   \spheredistr{6}{-2.5}
   \draw[thick, ->, >=latex,rounded corners=3mm] (1.2,0) -| (2.2,-2.1) -- (3.8,-2.1);
   \draw[thick, ->, >=latex,rounded corners=1mm] (1.2,-2.5) -- (3.8,-2.5);
   \draw[thick, ->, >=latex,rounded corners=3mm] (1.2,-5.8) -| (2.2,-2.9) -- (3.8,-2.9);
   \node[align=center,font=\small] at (5.8,-6.6) {{\small utility of {\color{Orange}$s_i$}}\\[1mm]$\E_{{\color{NavyBlue}c \sim P_c}}\Bigl[\mathrm{softmax}\bigl( \tfrac{1}{\tau} \bigl[ \langle {\color{NavyBlue}c}, {\color{Orange}s_j} \rangle\bigr]_{j=1}^n \bigr) \Bigr]_i$};
  \end{tikzpicture}
  \vspace{-2mm}
  \caption{
  \textbf{Exposure game}. Items $s_i \in S^{d-1}$ placed to maximize exposure to consumers $c \sim \consDist$.
  }
  \label{fig:schematic}
  \vspace{-1.35em}
\end{wrapfigure}

\looseness=-1
In 2018, Jonah Peretti (CEO, Buzzfeed) raised alarm when a Facebook main feed update started boosting junk and divisive content \citep{wsj2021fb}. In Poland, the same update caused an uptick in negative political messaging \citep{wsj2021fb}. Tailoring content to algorithms is not unique to social media. For example, some search engine optimization (SEO) professionals specialize on managing impacts of Google Search updates \citep{hummingbird,penguin,shahzad2020new,patil2021comparative,panda}. While motivations for adapting content range from economic to socio-political, they often translate into the same operative goal: \emph{exposure maximization}.

\looseness=-1
We study how algorithms affect exposure-maximizing content creators. We propose a novel incentive-based behavior model called an \emph{exposure game}, where producers compete for a \emph{finite} user attention pool by crafting content ranked highly by a given algorithm (\Cref{sec:setting}). When producers act strategically, a steady state---Nash equilibrium (NE)---may be reached, with no one able to unilaterally improve their exposure (utility). The content produced in a NE can thus be interpreted as what the
algorithm implicitly incentivizes. 

We focus on algorithms which model user preferences as an inner product of $d$-dimensional user and item embeddings, and rank items by the estimated preference. \Cref{sect:nash_theory} presents theoretical results on the NE induced by these algorithms. We identify cases where algorithmic changes seemingly unconnected to producer incentives---e.g., switching from non-negative to unconstrained embeddings---determine whether there are zero, one, or multiple NE. The character of NE is also affected by the level of algorithmic exploration. Perhaps counter-intuitively, we show that high levels of exploration incentivize broadly appealing content, whereas low levels lead to specialization.

\looseness=-1
In \Cref{sect:audit}, we explore how creator behavior models can facilitate a pre-deployment audit. Such an audit could be particularly useful for assessing the producer impact of algorithmic changes, which is hard to measure by A/B testing for two important reasons: 
(1)~producers cannot be easily randomized to distinct treatment groups, and 
(2)~there is often a delay between deployment and content adaptation. 
Our hope is that this new style of auditing will enable detection of misalignment between the induced and desired incentives, and thus flag issues to either immediately address, or monitor in content filtering and moderation. For demonstration, we execute a pre-deployment audit on the MovieLens and LastFM datasets using the exposure game behavior model, and matrix factorization based recommenders. We find a strong dependence between algorithmic choices like embedding dimension and level of exploration, and properties of the incetivized content such as diversity (confirming our theory), and targeting of gender-based user and creator groups.

\subsection{Setting and the exposure game incentive model}
\label{sec:setting}

We assume there is a \emph{fixed} recommender system trained on past data, and a \emph{fixed} population of users (\emph{consumers}). Together, these induce a \emph{demand distribution} $\consDist$ which represents typical traffic on the platform over a predefined period of time. Content is created by $\nProds \in \N$ \emph{producers} who try to maximize their expected exposure (\emph{utility}). Denoting consumers by $c \sim \consDist$, an item created by the $i$\textsuperscript{th} producer by $s_i$ (\emph{strategy}), $s \coloneqq ( s_i )_{i \in [\nProds]}$, and $s_{\setminus i} \coloneqq ( s_j )_{j \neq i}$, we define (expected) \emph{exposure} as the proportion of the ``user attention pool'' captured by the $i$\textsuperscript{th} producer
\begin{align}\label{eq:utility_exposure}
    \util_i (s)
    =
    \util_i (s_i, s_{\setminus i})
    \coloneqq
    \E_{c \sim \consDist} \left[
        \indicatorSymbol \{ c \text{ is exposed to } s_i \}
    \right]
    \overset{\star}{=}
    \E_{c \sim \consDist} [p_i(c)]
    \, ,
\end{align}
with $p_i(c) \geq 0$ the probability that the algorithm exposes $c$ to $s_i$ rather than any $s_{\setminus i}$. As common in game theory, we can extend from deterministic single item strategies to stochastic multi-item strategies $s_i \sim P_i$ for some distribution $P_i$. This extension is discussed in more detail in \Cref{sect:nash_theory}.

\looseness=-1
The assumption that
$
    \E [
        \indicatorSymbol \{ c \text{ is exposed to } s_i \}
    ]
    \overset{\star}{=}
    \E [p_i(c)]
$
does not explicitly model interactions not mediated by the algorithm (e.g., YouTube videos linked to by an external website). This may be a reasonable approximation for infinite feed platforms (e.g., Twitter, Facebook, TikTok) where most consumers scroll through items in the algorithm-defined order, and search engines (e.g., Google, Bing) where first-page bias is well documented \citep{craswell2008experimental}. While similar assumptions are common in the literature \citep[e.g.,][]{li2010contextual,chen2019top,ben2020content, curmei2021quantifying}, alternative interaction models are an important future research direction.

Unlike previous work (\Cref{sect:related}),  we focus on the popular class of factorization-based algorithms. These models rank items by a score estimated by the inner product of user and item embeddings $c, s_i \in \R{d}$. The larger this score, the higher the probability of exposure, which we model as
\begin{align}\label{eq:softmax_probability}
    p_{i} (c)
    =
    \frac{
        \exp ( 
            {\temp^{-1} \langle c, s_i \rangle} 
        )
    }{
        \sum_{i'=1}^{\nProds} 
            \exp ( 
                {\temp^{-1}\langle c, s_{i'} \rangle }
            )
    }
    =
    \softmax \bigl( \bigl[
        \temp^{-1} \langle c, s_{i'} \rangle
    \bigr]_{i'=1}^\nProds \bigr)_i
    \, ,
\end{align}
where $\temp \geq 0$ is a temperature parameter which controls the spread of exposure probabilities over the top scoring items. When $\temp = 0$ (i.e., hardmax), these probabilities correspond to top-1 recommendation or absolute first-position bias. Taking $\temp > 0$ models the effects of ranked position, injected randomness for exploration, and can partially adjust for user randomness and other factors which make top-ranked items receive more but not all of the traffic.

While an approximation in some settings, \Cref{eq:softmax_probability} has been directly used, e.g., by YouTube \citep{chen2019top}. We emphasize that we make no assumption on how the embeddings are obtained. Our conclusions thus apply equally to classical matrix factorization and deep learning-based systems.

We are now ready to formalize \emph{exposure games}, an incentive-based model of creator behavior.
\begin{definition}
\label{def:exposure_games} 
    An \emph{exposure game} consists of an embedding dimension $d \in \N$, a demand distribution $\consDist \in \mathcal{P}(\R{d})$, and $\nProds \in \N$ producers, each of whom chooses a strategy $s_i \in S^{d-1} = \{ v \in \R{d} \colon \| v \| = 1 \}$, to maximize their utility $u_i(s) = \E_{c \sim \consDist} [p_i(c)]$ with $p_i(c)$ as in \Cref{eq:softmax_probability} for a given $\temp \geq 0$.
\end{definition}
We restrict items $s_i$ to the unit sphere $S^{d-1}$. A norm constraint is necessary as otherwise exposure could be maximized by inflating $\| s_i \| \to \infty$, which is not observed in practice.\footnote{Possibly due to the often finite rating scale, use of gradient clipping, and various forms of regularization.} We distinguish \emph{non-negative} games where all embeddings lie in the positive orthant; this includes algorithms ranging from TF-IDF, bag-of-words, to non-negative matrix factorization \citep{lee1999learning}, topic models \citep{blei2003latent}, and constrained neural networks \citep{ayinde2017deep}. 
\begin{definition}\label{def:nonnegative}
    A \emph{non-negative exposure game} is an exposure game where the support of $\consDist$ is restricted to the positive orthant, i.e., $\consDist (\{ c \in \R{d} \colon c_j \geq 0 \, , \forall j \in [d] \} ) = 1$.
\end{definition}

\looseness=-1
We assume all producers are rational, omniscient, and fully control placement of $s_i$ in $S^{d-1}$. These assumptions are standard in both machine learning and economics literature, including in the related facility location games (see \Cref{sect:related}). They often provide a good first order approximation, and an important basis for studying the subtleties of real-world behavior. Full control is perhaps the least realistic, since producers can modify content \emph{features}, but they often do not know how these changes affect the content \emph{embedding}. This assumption has a significant advantage though: it abstracts away an explicit model of producer actions (cf.\ the variety of SEO techniques). Appropriateness of rationality and complete information are then context-dependent; they may be respectively reasonable in environments where strong profit motives or user profiling tools are common. However, investigating alternatives to each of the above assumptions is an important direction of future work.

\begin{boxC}
    \begin{wrapfigure}{R}{0.4\textwidth}%
        \centering
        \includegraphics[width=0.4\textwidth]{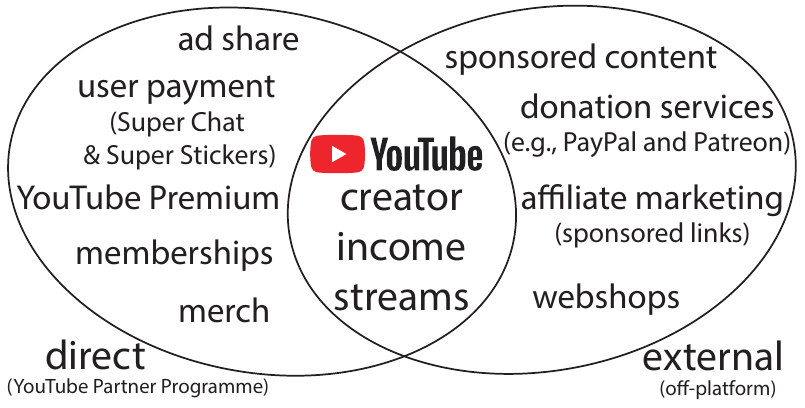}
      \vspace{-2em}
      \caption{YouTube revenue streams incentivizing exposure maximization \citep{ormen2022institutional}.
      }
      \label{fig:yt_income}
      \vspace{-1em}
    \end{wrapfigure}

    \looseness=-1
    \textbf{Box~1: How our assumptions map onto YouTube (YT) as an illustrative example.} 
    On YT, a strategy $s_i$ is an embedding of a video, with creators able to produce multiple videos (mixed strategy $s_i \sim P_i$).\\
    \textbf{Rational behavior:} YT creators receive income proportional to their view numbers (\Cref{fig:yt_income}), which motivates exposure maximization. Most creators do not earn significant income, but the majority of traffic is driven by only a few popular and high-earning creators \citep{cheng2008statistics}. This motivates focus on these few producers and their strategic behavior.\\
    \textbf{Complete information and full-control.} YT creators cannot \emph{directly} manipulate the embeddings of their videos $s_i$, or observe the user embeddings. However, popular creators have a myriad of analytic tools at their hand, with information about views, demographics (e.g., gender, age, region), acquisition channels, drivers of engagement, competition and more. They can also observe and adopt behaviors of other creators. Taking the strong monetary incentives into account, motivated creators will actively optimize their exposure using trial-and-error, making complete information and full-control an imperfect yet not unreasonable model of their behavior.
\end{boxC}

\subsection{Related work}
\label{sect:related}

\looseness=-1
Most relevant to our setup are works on the incentives of exposure-maximizing creators induced by recommender and retrieval systems \citep{ben2020content, raifer2017information,basat2017game,ben2018game,ben2019convergence,ben2019recommendation}. Interesting aspects of these works which we omit include
(i)~\emph{repeated interactions} \citep{ben2020content, raifer2017information,ben2019convergence}, 
(ii)~\emph{user welfare} \citep{ben2020content, basat2017game,ben2018game,ben2019recommendation}, and
(iii)~\emph{incomplete information} \citep{raifer2017information}.

The most important distinction of our approach is that the above works constrain creators to a predefined finite item catalog. This excludes the popular \emph{factorization-based} algorithms---ranging from standard matrix factorization \citep{koren2009matrix} to (deep) two-tower architectures \citep{huang2013learning,yi2019sampling}---whose continuous embedding space translates into an \emph{infinite} number of possible items. The only exception is \citep{ben2019recommendation} where items are represented by $[0, 1]$ scalars, which is equivalent to the special case of two-dimensional non-negative exposure games. Continuous embedding spaces were recently studied in \citep{mladenov2020optimizing,zhan2021towards}, but neither studies producer incentives or competition. \citet{mladenov2020optimizing} consider producers who decide whether to stay or leave the platform if their exposure is too low. \citet{zhan2021towards} study design of recommender systems which optimize for \emph{both} user and producer utility.

\looseness=-1
Concurrently but independently, \citet{jagadeesan2022supply} study a model equivalent to \emph{hardmax non-negative exposure games}, except the $\| s_i \| = 1$ constraint is replaced by a \emph{production cost}, yielding $\util_i(s) = \E [p_i(c)] - \| s_i \|^\beta$ for some norm $\| \cdot \|$ and $\beta \geq 1$ (higher norm interpreted as higher quality). The authors investigate how the cost function influences the economic phenomena exhibited by NE, from formation of ``genres'' (multiple directions with non-zero probability), to the possibility of realizing positive profits (utility). In contrast, we investigate how NE depend on algorithmic and environmental factors (non-negativity, exploration, dependence of exposure on ranking), and propose an algorithmic audit which leverages the creator model. While taking $\beta \to \infty$ in the \citeauthor{jagadeesan2022supply}'s cost recovers our unit norm constraint, understanding the NE behavior at the limit remains a subject of future work (e.g., pure NE exist only in our setup). Our works are thus largely complementary.

\looseness=-1
Literature on adaptive behavior in the presence of a prediction algorithm is also relevant \citep{hardt2016strategic, kleinberg2020classifiers,perdomo2020performative,jagadeesan2021alternative}. The social impact and potential disparate effects of strategic adaptation have been analyzed in \citep{milli2019social,hu2019disparate,liu2020disparate}. Most relevant for us is a recent paper by \citet{liu2022strategic} which studies strategic adaptation in the context of \emph{finite} resources (e.g., number of accepted college applicants). Unlike us, the authors assume a \emph{single} score for each competitor, who can pay \emph{cost} to improve it. A principal then \emph{designs} a reward function which allocates the finite resource based on the scores, and the authors study how different choices affect various notions of welfare. The preliminary results on multi-dimensional scores (appendix B) assume the scores and individual improvements are \emph{independent}, whereas our scores---$\langle c , s_i \rangle$ for each $c$---imply complex dependence and trade-offs.

Finally, our proposed methods for auditing recommender and information retrieval systems belong to a rapidly growing algorithm auditing toolbox. We focus on understanding  producer incentives caused by a known algorithm. Thus, we complement prior work that aims to audit these systems based upon: the degree of consumer control  \citep{curmei2021quantifying}, fairness \citep{do2021online}, compliance with regulations \citep{cen2021regulating}, and  dynamical behavior in simulations \citep{krauth2020offline, lucherini2021t} or deployed systems~\cite{haroon2022youtube}.

\section{Equilibria in exposure games}
\label{sect:nash_theory}

This section presents theoretical results on incentives in exposure games. We focus on the impact of the recommender/information retrieval model on the competitive equilibria. Throughout, we find that one of the most important factors determining existence and character of equilibria is the temperature $\temp$ (see \Cref{eq:softmax_probability}).  We thus distinguish the \emph{softmax} ($\temp > 0$) and the \emph{hardmax} ($\temp = 0$) case.

In competitive settings, a key question is whether there are equilibria in which players are satisfied with their strategies, as otherwise there may be never-ending oscillation in search for better outcomes. We thus consider several \emph{solution concepts} (i.e., definitions of equilibria) related to NE. A \emph{pure NE} (PNE) is a point in strategy space $s^\NE \in (S^{d-1})^n$ where no player $i$ can increase their utility by unilaterally deviating from $s^\NE_i \in S^{d-1}$. In other words, no content producer can increase their exposure by modifying their content. \emph{Mixed NE} (MNE) refer to the setting where players are allowed to choose randomized (\emph{mixed}) strategies $P_i \in \mathcal{P}(S^{d-1})$. Rather than selecting a single piece of content, a creator following a mixed strategy samples $s_i \sim P_i$. Alternative interpretation is that producers create multiple items, splitting their time/budget proportionally to the $P_i$-probabilities.

\looseness=-1
In later sections, we explore the weaker solution concepts of $\epsilon$-NE, \emph{local NE} (LNE), and their combination $\epsilon$-LNE. An $\epsilon$-NE is an approximate NE where no producer can unilaterally increase their utility by more than $\epsilon$ (NE are ``$0$-NE''). LNE are analogous to local optima: points where no player benefits from small deviations from their strategy. The approximate and local perspectives are relevant when deploying local search algorithms to find NE numerically (\Cref{sect:audit}).

Exposure games are symmetric, meaning that any permutation of strategies forming an equilibrium produces another equilibrium. Our statements on the existence and uniqueness of equilibria hold up to player permutation. All proofs for the results in this section are presented in the appendix.

\subsection{Pure and mixed Nash equilibria}
\label{sect:nasheq}

\looseness=-1
We begin by characterizing the existence of pure and mixed NE in general exposure games.
\begin{theorem}
\label{stm:mixed_ne_exist_hardmax} \label{stm:mixed_ne_exist_softmax} \label{stm:mixed_ne_exist}
    Every exposure game has at least one \emph{mixed} Nash equilibrium.
\end{theorem}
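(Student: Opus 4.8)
The plan is to invoke the classical existence theorem for Nash equilibria in games with infinite strategy spaces, namely the Glicksberg--Fan generalization of Nash's theorem. This result guarantees a mixed Nash equilibrium provided that (i) each player's pure-strategy space is a nonempty compact metric space, and (ii) each player's utility function is continuous jointly in all players' strategies. Our strategy spaces are $S^{d-1}$, the unit sphere in $\R{d}$, which is nonempty, compact, and metrizable, so condition (i) holds immediately from \Cref{def:exposure_games}. The bulk of the work is therefore verifying condition (ii): that $\util_i(s) = \E_{c \sim \consDist}[p_i(c)]$ is continuous in $s = (s_1, \dots, s_n) \in (S^{d-1})^n$.

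For the softmax case ($\temp > 0$) this continuity is essentially automatic. First I would note that for each fixed consumer $c$, the map $s \mapsto p_i(c)$ from \Cref{eq:softmax_probability} is a composition of the continuous inner products $\langle c, s_{i'} \rangle$, scaling by $\temp^{-1}$, and the softmax function, all of which are continuous; hence $s \mapsto p_i(c)$ is continuous. To pass the continuity through the expectation over $c \sim \consDist$, I would apply the dominated convergence theorem: since $p_i(c) \in [0,1]$ is uniformly bounded by the integrable constant $1$, any sequence $s^{(k)} \to s$ gives $p_i(c; s^{(k)}) \to p_i(c; s)$ pointwise in $c$, and dominated convergence yields $\util_i(s^{(k)}) \to \util_i(s)$. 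This establishes joint continuity of each $\util_i$ on the compact space $(S^{d-1})^n$, and the Glicksberg--Fan theorem then delivers a mixed NE.

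The main obstacle is the hardmax case ($\temp = 0$), where $p_i(c)$ is no longer continuous in $s$: at ties (consumers $c$ equidistant in inner product from two or more items) the hardmax assignment jumps discontinuously. The plan here is to argue that for the \emph{expected} utility $\util_i$ this discontinuity is invisible whenever the set of tie-inducing consumers has $\consDist$-measure zero, and to handle the measure-zero-tie condition carefully. Concretely, for generic $s$ the set $\{c : \langle c, s_i \rangle = \langle c, s_j \rangle \text{ for some } i \neq j\}$ is a finite union of hyperplanes through the origin; one must verify this is $\consDist$-null, or otherwise treat the ties via an explicit tie-breaking rule incorporated into $p_i(c)$ so that $\sum_i p_i(c) = 1$ is preserved. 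Since the definition only requires $\sum_i p_i = 1$ and $p_i \geq 0$, I would formalize hardmax as the $\temp \downarrow 0$ limit and either (a) establish continuity of $\util_i$ directly on the locus where ties are $\consDist$-null, invoking dominated convergence again off the null set, or (b) take $\temp \downarrow 0$ in the softmax MNE and extract a convergent subsequence of equilibria using compactness of the space of mixed-strategy profiles $\prod_i \mathcal{P}(S^{d-1})$ (which is compact in the weak-$*$ topology since $S^{d-1}$ is compact), then argue the limit is an equilibrium of the hardmax game. Approach (b) is cleaner as it avoids a case-by-case measure-theoretic analysis of ties, reducing the hardmax claim to a limiting argument that only needs upper/lower semicontinuity of $\util_i$ at the limit profile. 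I expect the technical crux to be verifying that the subsequential limit of softmax equilibria is genuinely a hardmax equilibrium, since the utility can only be semicontinuous rather than continuous at $\temp = 0$.
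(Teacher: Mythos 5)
Your treatment of the softmax case ($\temp > 0$) is correct and is exactly the paper's argument: joint continuity of $\util_i$ (dominated convergence over $c \sim \consDist$, as you say) plus compactness of $S^{d-1}$ yields a mixed NE via Glicksberg's theorem \citep[section~2]{glicksberg1952further}.

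The hardmax case ($\temp = 0$) is where your proposal has a genuine gap, and neither of your two routes closes it. Route (a) is a dead end: Glicksberg requires continuity of $\util_i$ at \emph{every} profile, and in the hardmax game this fails in a way no assumption on $\consDist$ can repair. At any profile with $s_i = s_j$ the tie set of consumers is all of $\R{d}$, so $\util_i$ jumps as soon as one of the coinciding players moves, and the direction of approach determines the limit; moreover the paper explicitly works with discrete demand distributions (e.g., $\consDist = \tfrac{1}{3}\sum_j \delta_{c_j}$ in the proof of \Cref{stm:no_pne}, and the empirical $\consDist$ of \Cref{sect:audit}), for which ties carry positive mass at many profiles. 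So there is no ``continuity off a $\consDist$-null set'' to exploit. Route (b) is the right \emph{kind} of idea, but the step you flag as the technical crux is not a detail---it is the entire theorem. With discontinuous payoffs, weak-$*$ limits of equilibria of approximating continuous games are in general \emph{not} equilibria of the limit game: $u_i^\temp \to u_i^0$ only pointwise (not uniformly near ties), and weak convergence of mixed strategies does not pass through a discontinuous integrand, so the softmax equilibrium inequalities cannot simply be sent to the limit. This failure mode is precisely why the discontinuous-games literature exists. The paper closes the case differently: it invokes proposition~4 of \citet{simon1987games}, an existence result tailored to games with discontinuous payoffs. The structure that makes the hardmax exposure game fall within the scope of such results---and which your proposal never uses---is that the game is constant-sum ($\sum_i \util_i \equiv 1$, so any downward payoff jump of one player is exactly compensated by upward jumps of others) and that ties are split evenly, so the payoff at a tie is an average of the one-sided limits. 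Verifying hypotheses of this type is what would be needed to complete your route (b); doing so essentially reproduces the paper's citation-based proof.
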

A key property of \emph{softmax} games is that the utilities $\util_i$ are continuous in $s$. This, and the compactness of the strategy space $S^{d - 1}$, guarantees existence of MNE \citep[section~2]{glicksberg1952further}. In the \emph{hardmax} case ($\temp=0$), we can show that MNE are guaranteed to exist through a direct application of proposition~4 due to \citet{simon1987games}. The producer utilities $\util_i$ are not differentiable in the hardmax case though, which means we cannot use gradient information to find NE as in the softmax case. The only procedure we know for finding NE in hardmax games requires solving the hitting set problem which is NP-complete \citep{dasgupta2008algorithms}. See \Cref{app:hardmax} for further discussion.

We now turn to existence of \emph{pure} NE, which is the setting where creators strategically design a \emph{single} piece of content. Unlike MNE, PNE are not guaranteed to exist even in the softmax case.

\begin{figure}[tbp]
    \centering
    \includegraphics[keepaspectratio,width=0.98\textwidth]{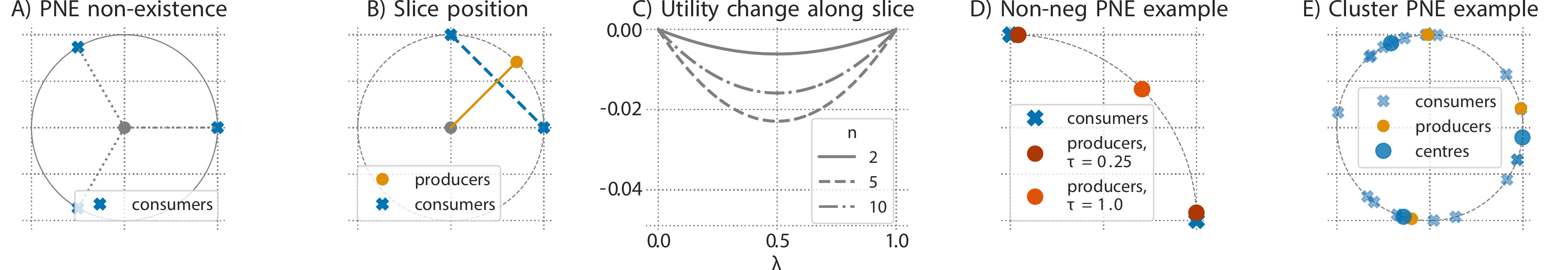}
    \vspace{-1em}
    \caption{
    \textbf{A)} A game with no PNE (see the proof of \Cref{stm:no_pne}). 
    A PNE would exist if the strategy space was \emph{convex}, and utility \emph{quasi-concave} \citep{fan1952fixed}.
    B) and C) demonstrate lack of quasi-concavity even if we allow $\| s_i \| \leq 1$:
    \textbf{B)} $\nProds-1$ producers at midpoint, $s_1$ along slice $\lambda c_1 + (1-\lambda) c_2$ (dashed line);
    \textbf{C)} 
    Change in utility along the slice in B) demonstrates lack of quasi-concavity.
    \textbf{D)} A non-negative game with very different PNE depending on $\temp$.
    \textbf{E)} PNE with ``protective positioning.''
    }
    \label{fig:example}
    \vspace{-1em}
\end{figure}

\begin{restatable}{theorem}{noPNESoftmax}\label{stm:no_pne}
    PNE need not exist in either the hardmax ($\temp=0$) or softmax ($\temp>0$) exposure games.
\end{restatable}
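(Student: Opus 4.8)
The plan is to exhibit, for each case ($\temp=0$ and $\temp>0$), a concrete small exposure game in which no pure Nash equilibrium can exist, and to verify non-existence by a case analysis on the structure of candidate PNE. Since PNE existence is easiest to break when the strategy space has a symmetry that prevents any stable configuration, I would work in low dimension---likely $d=2$ with $\nProds=2$ or $3$ producers---so that strategies live on the circle $S^{1}$ and can be parametrized by a single angle. The schematic in \Cref{fig:example}A suggests the intended construction places a small number of producers against a demand distribution $\consDist$ concentrated on a few points (e.g.\ two or three atoms $c_1, c_2, \dots$ on the positive orthant), so I would adopt that setup.

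\textbf{Hardmax case ($\temp=0$).} First I would choose $\consDist$ to be a discrete distribution supported on points $c_1, \dots, c_m$ with prescribed masses. With $\temp=0$, producer $i$'s utility is $\util_i(s) = \E_{c\sim\consDist}[\indicator{i \in \argmax_{i'} \langle c, s_{i'}\rangle}]$ (splitting ties), so utility is piecewise constant in each $s_i$ and changes only when $s_i$ crosses a consumer's bisector. The strategy of the argument is to show a \emph{best-response cycle}: for any candidate profile $s^\NE$, I would identify a producer who can strictly increase their captured mass by rotating $s_i$ to ``cover'' a high-value consumer currently won by a competitor, and show that after every such improving deviation another producer is again dissatisfied, so no fixed point exists. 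Concretely, the absence of quasi-concavity (flagged in the caption of \Cref{fig:example}) means best responses jump discontinuously, which is exactly what obstructs a fixed point; I would formalize this by checking that the only symmetric candidates fail (one producer can defect to steal a consumer) and the only asymmetric candidates also fail (the ``losing'' producer can leapfrog).

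\textbf{Softmax case ($\temp>0$).} Here $\util_i$ is smooth, so I cannot rely on discontinuous jumps; instead the obstruction must come from non-convexity of the problem, since a PNE \emph{would} be guaranteed if the utility were quasi-concave in $s_i$ over a convex domain (\Cref{fig:example}A caption, citing Fan's fixed-point theorem). The plan is to reuse a similar few-atom $\consDist$ and exhibit a configuration where, at every stationary profile, the utility of some producer along the sphere is not locally maximized---i.e.\ the would-be equilibrium is a saddle or the best response lies at an antipodal region. Panels B) and C) of \Cref{fig:example} indicate the mechanism: fixing $\nProds-1$ producers at a midpoint direction and sweeping $s_1$ along the geodesic slice between two consumer atoms produces a utility curve that is \emph{not} quasi-concave (two separated peaks), so the best response set is disconnected and no single direction can be a mutual best response for all players simultaneously. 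I would make this quantitative by computing $\util_1$ along the slice for explicit $c_1, c_2$ and a specific $\temp$, exhibiting the two-peak shape, and then arguing that symmetry forces any candidate PNE into the non-optimal valley.

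\textbf{Main obstacle.} The hard part will be the softmax case: because utilities are continuous and the failure of PNE is ``soft,'' I must rule out \emph{all} candidate profiles rather than exhibit a single discontinuity, and the sphere's curvature makes the best-response analysis genuinely two-dimensional even in $d=2$. The cleanest route is probably to pick parameters making the game fully symmetric under a reflection/rotation of the consumer atoms, so that any PNE must itself be symmetric; then I only need to check the finitely many symmetric candidates and show each admits a profitable deviation into the opposite peak identified in \Cref{fig:example}C. Verifying the two-peak (non-quasi-concave) shape for an explicit $\temp>0$ is a routine but delicate calculus computation that I would relegate to the appendix, keeping the main text focused on why disconnected best responses preclude a fixed point.
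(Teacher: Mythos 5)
Your overall shape---explicit low-dimensional counterexamples built from a few consumer atoms---matches the paper, but both halves of your plan have gaps that the paper's proof closes with ideas you are missing.

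\textbf{Hardmax.} Placing the atoms ``on the positive orthant'' with $\nProds = d = 2$ cannot work: by \Cref{stm:nonneg_pne}, non-negative hardmax games with $\nProds = d = 2$ \emph{always} have a PNE (the Hotelling/median argument), so your counterexample domain is exactly the regime where existence is guaranteed. The paper's construction necessarily leaves the orthant: three atoms at mutual angle $2\pi/3$. More importantly, exhibiting a best-response (``leapfrog'') cycle does not prove non-existence---a game can have cycles and still possess equilibria elsewhere---and your fallback of checking ``the only symmetric candidates'' and ``the only asymmetric candidates'' is a case analysis over the uncountable set $S^1 \times S^1$ with no organizing principle. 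The paper sidesteps all of this with a counting argument: with the three symmetric atoms, against \emph{any} $s_1$ player 2 has a response on the arc between the two consumers farthest from $s_1$'s favorite, capturing both; so in any PNE $\util_2(s) \geq 2/3$, symmetrically $\util_1(s) \geq 2/3$, contradicting $\util_1(s) + \util_2(s) = 1$. That one line replaces your entire case analysis.

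\textbf{Softmax.} Two genuine flaws here. First, the reduction ``the game is symmetric, hence any PNE must be symmetric'' is false: symmetry of a game only implies the \emph{set} of equilibria is invariant under relabeling, not that each equilibrium is a fixed point of the symmetry, so you cannot restrict attention to finitely many symmetric candidates. Second, a fully symmetric demand distribution backfires: for $\consDist = \tfrac12(\delta_{e_1} + \delta_{e_2})$ in $d=2$, the collapsed profile $s_1 = s_2 = \nconsumerSum$ \emph{is} a PNE for every $\temp > 0$, because $\langle e_1 + e_2, s_1 \rangle \leq \sqrt{2} = \langle e_1 + e_2, \nconsumerSum\rangle$ (Cauchy--Schwarz) forces the deviator's two exposure probabilities to sum to at most one, hence $\util_1(s_1, \nconsumerSum) \leq \tfrac12$. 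The paper does the opposite of what you propose: it uses an \emph{asymmetric} distribution $\tfrac13(2\delta_{e_1} + \delta_{e_2})$, invokes \Cref{stm:nonneg_pne}---whose proof is a first-order Riemannian-gradient argument exploiting non-negativity and the $\nProds=2$ identity $p_1(1-p_1) = p_2(1-p_2)$---to conclude that the \emph{only} possible PNE is the collapsed profile at $\nconsumerSum$, and then destroys that single candidate by letting $\temp \to 0$, where deviating toward the heavy atom drives the deviator's utility to $2/3 > 1/2$. You have no substitute for this candidate-reduction lemma, and without one the step ``rule out all stationary profiles'' is exactly the part that does not get done. Finally, note that \Cref{fig:example}B\&C are not a proof mechanism: they only explain why relaxing $S^{d-1}$ to the convex ball would not restore Fan-type existence; lack of quasi-concavity by itself does not imply non-existence of PNE.
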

\Cref{fig:example}A illustrates the non-existence result. The counter-example holds even for $n=2$ players and planar ($d=2$) strategies. A reader familiar with classic PNE results may ask if PNE would appear if we relaxed the $S^{d-1}$ strategy space to the \emph{convex} $B^d = \{ v \colon \| v \| \leq 1\}$ \citep{glicksberg1952further,debreu1952social,fan1952fixed}. This is not true as the exposure utility is not quasi-concave (\Cref{fig:example}B\&C).

\looseness=-1
We now move to \emph{non-negative} exposure games (\Cref{def:nonnegative}). For $\nProds=d=2$, non-negative hardmax exposure games are equivalent to Hotelling games \citep{hotelling1929stability}, and more generally to facility location games on a line \citep{ben2019recommendation,rocaccia2013approximate}. The next proposition lists several special cases in which we understand existence and character of PNE.
\begin{restatable}{proposition}{nonnegPNE}
\label{stm:nonneg_pne}
    A PNE always exists in $\nProds = d = 2$ non-negative \emph{hardmax} games, but may not without non-negativity or when $d > 2$.
    For $\nProds = 2$ non-negative \emph{softmax} games with 
    $\consumerSum \coloneqq \allowdisplaybreaks \tfrac{1}{\nProds} (1 - \tfrac{1}{\nProds}) \E [ c ] \allowdisplaybreaks \neq 0$, the only possible PNE is $s_1 = s_2 = \nconsumerSum$ with $\nconsumerSum \coloneqq 
    \consumerSum / \| \consumerSum \|
    $ (independently of $d$), but a PNE may not exist.
    When $\nProds > 2$, non-negative softmax games can have a PNE other than $s_1 = \cdots = s_n = \nconsumerSum$.
\end{restatable}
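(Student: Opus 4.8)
The plan is to treat the hardmax existence claim, the softmax uniqueness claim, and the two accompanying ``may not exist''/``can exist'' claims separately, since they call for different tools.

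\textbf{Hardmax, $\nProds=d=2$.} For $d=2$ I would parametrize the positive-orthant arc of $S^{1}$ by an angle $\phi\in[0,\tfrac{\pi}{2}]$ and push $\consDist$ forward to an angle measure $\mu$ on this arc. Since $\langle c,s_i\rangle=\|c\|\cos(\phi_c-\theta_i)$, the hardmax rule sends each consumer to the producer whose angle is nearest, so with producers at $\theta_1\le\theta_2$ the partition boundary is the bisector $(\theta_1+\theta_2)/2$ and the game reduces to a two-firm Hotelling game on an interval. I would then show the median-location profile---both producers at an angle $\theta^\star$ that is a median of $\mu$---is a PNE: each producer earns exposure $\tfrac12$ under equal tie-splitting, and any unilateral move to one side of $\theta^\star$ captures at most the mass on that side, which is $\le\tfrac12$ by the median property. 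For the negative clauses I would reuse the $d=\nProds=2$ counterexample of \Cref{stm:no_pne} (already hardmax and without non-negativity), and separately exhibit a non-negative hardmax game with $\nProds=2$, $d=3$ whose two-dimensional spherical-simplex strategy space defeats the one-dimensional median argument and admits no PNE.

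\textbf{Softmax uniqueness, $\nProds=2$.} Here $\util_i$ is smooth, so at any PNE each $s_i$ is a constrained maximizer of $\util_i$ on $S^{d-1}$, hence a Riemannian critical point: $\nabla_{s_i}\util_i$ is parallel to $s_i$. The key computation is $\nabla_{s_i}\util_i=\temp^{-1}\E[p_i(1-p_i)\,c]$, together with the $\nProds=2$ identity $p_1(1-p_1)=p_1p_2=p_2(1-p_2)$, which forces the two gradients to coincide: $\nabla_{s_1}\util_1=\nabla_{s_2}\util_2=:g=\temp^{-1}\E[p_1p_2\,c]$. A single vector $g$ can be parallel to both $s_1$ and $s_2$ only if $g=0$ or $s_1=\pm s_2$. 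Non-negativity rules out $g=0$: each coordinate satisfies $g_j=\temp^{-1}\E[p_1p_2 c_j]\ge0$ with $p_1p_2>0$, so $g=0$ would force $c=0$ a.s.\ and hence $\consumerSum=0$, contradicting $\consumerSum\neq0$. Thus $g$ lies in the positive orthant and $s_1,s_2\in\{\pm g/\|g\|\}$; any option placing a producer at $-g/\|g\|$ (pointing away from all consumers) is eliminated because flipping to $g/\|g\|$ raises every $\langle c,s_i\rangle$ and hence that producer's exposure. This leaves $s_1=s_2=g/\|g\|$, and substituting $p_1=p_2=\tfrac12$ back into $g$ gives $g=\temp^{-1}\consumerSum$, i.e.\ $s_1=s_2=\nconsumerSum$.

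\textbf{Softmax non-existence.} To show even this candidate can fail, I would take $\consDist$ supported on two well-separated directions in the positive orthant with unequal masses $\tfrac12+\delta$ and $\tfrac12-\delta$, and let $\temp$ be small. At $s_1=s_2=\nconsumerSum$ both producers earn $\tfrac12$, but producer $1$ can deviate to the heavy cluster's direction and, as $\temp\to0$, capture essentially its entire mass $\tfrac12+\delta>\tfrac12$ while ceding the light cluster, so the profile is not a best response. The main obstacle is that this deviation must be \emph{global}: a direct second-order computation shows the Euclidean Hessian of $\util_1$ vanishes at the symmetric profile (since $\sigma''(0)=0$) while the Riemannian Hessian equals $-\temp^{-1}\|\consumerSum\|$ times the identity on the tangent space, so $\nconsumerSum$ is in fact a strict \emph{local} maximizer. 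The counterexample therefore has to defeat it with a far-away strategy rather than a local perturbation, which is exactly what the two-cluster, low-$\temp$ construction achieves.

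\textbf{Asymmetric PNE for $\nProds>2$.} Finally, the coincidence $\nabla_{s_1}\util_1=\nabla_{s_2}\util_2$ that drove uniqueness is special to $\nProds=2$: for $\nProds>2$ the stationarity gradients $\temp^{-1}\E[p_i(1-p_i)c]$ differ across producers, so distinct directions can simultaneously satisfy the parallelism condition. I would exploit this by constructing an explicit symmetric instance---for example producers split evenly between two consumer-aligned directions, as in the ``protective positioning'' profile of \Cref{fig:example}E---and verifying it is a PNE. The obstacle mirrors the previous part: stationarity only certifies critical points, so the verification must establish that each $s_i$ is a \emph{global} best response, which I would do using the symmetry of the construction together with monotonicity of exposure in $\langle c,s_i\rangle$ to bound the gains from arbitrary deviations.
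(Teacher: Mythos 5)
Your hardmax $\nProds=d=2$ argument and your softmax $\nProds=2$ uniqueness argument are correct and essentially the paper's own: the paper likewise places both producers at a median angle of the pushed-forward consumer distribution, and likewise uses $\nabla_{s_i}\util_i = \temp^{-1}\E[p_i(c)(1-p_i(c))\,c]$ together with the identity $p_1(1-p_1)=p_1p_2=p_2(1-p_2)$ to force $g_1=g_2$, hence $s_1=s_2=\nconsumerSum$. Your softmax non-existence example (two unequal atoms, $\temp\downarrow 0$, deviate to the heavy atom) is the paper's $\consDist=\tfrac13(2\delta_{c_1}+\delta_{c_2})$ construction in all but the weights, and your observation that the Riemannian Hessian at the symmetric profile equals $-\temp^{-1}\|\consumerSum\|$ times the tangent-space projector---so only a \emph{global} deviation can defeat $\nconsumerSum$---is correct and a nice refinement the paper does not spell out at this point.

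The remaining two claims, however, are left unproven. First, for hardmax non-existence when $d>2$ you only assert that a counterexample can be exhibited; noting that the one-dimensional median argument breaks down does not show that no PNE exists. The paper's example is concrete: $d=3$, $\consDist$ uniform on the three canonical basis vectors; whatever $s_1$ is, player 2 can sit on the great circle through the two consumers farthest from $s_1$ and capture at least $\nicefrac{2}{3}$, and symmetrically for player 1, contradicting $\util_1+\util_2=1$. Second, and more seriously, your proposed witness for the $\nProds>2$ softmax claim---producers ``split evenly between two consumer-aligned directions''---is not even a critical point. With consumers at $c_1=e_1$, $c_2=e_2$ and $s_1=s_2=e_1$, $s_3=s_4=e_2$, the tangential derivative of $\util_1$ at $\theta_1=0$ is proportional to $p_1(c_2)(1-p_1(c_2))\cos(0) - p_1(c_1)(1-p_1(c_1))\sin(0) = p_1(c_2)(1-p_1(c_2)) > 0$, so producer 1 strictly gains by rotating toward $c_2$. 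Any such PNE must sit strictly \emph{between} a consumer direction and the diagonal (the ``protective positioning'' of \Cref{fig:example}E), and establishing that this offset critical point exists is the actual work: the paper parametrizes the symmetric $\nProds=4$, $d=2$ profile by an angle $\theta$, proves $f(\theta)=\langle g_1, Qs_1\rangle$ is strictly convex on $[0,\tfrac{\pi}{4}]$ for small $\temp$, locates a root $\theta^\star_\temp\in(0,\tfrac{\pi}{8})$ from the sign pattern $f(0)>0$, $f(\tfrac{\pi}{8})<0$, $f(\tfrac{\pi}{4})=0$, and only then rules out all global deviations by a case analysis over three arcs of $S^1$. Your ``symmetry plus monotonicity'' sketch supplies neither the existence of this critical point nor the global best-response check, and the profile it starts from would fail outright.
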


\Cref{fig:example}D illustrates a $4$-player non-negative exposure game. Depending on the temperature, we observe either the collapsed $s_i=\bar c$ (large $\temp$), or what we term ``protective positioning'' (small $\temp$). In \Cref{fig:example}D, players place their strategies \emph{between} a consumer and the next closest producer. \Cref{fig:example}E illustrates protective positioning for a higher number of consumers and $\nProds=3$. Here, consumers are roughly clustered around three centers (blue dots). The producer strategies are close to these centers, but again offset towards the most contested consumers.

\subsection{\texorpdfstring{$\epsilon$}{ε}-Nash equilibria}

While existence of NE is not guaranteed, the situation changes when we adopt the weaker solution concept of $\epsilon$-NE, in which no producer can unilaterally increase their utility by more than $\epsilon$.

The existence and character of such equilibria strongly depends on the temperature $\temp$. When $\temp = \infty$, exposure is equally likely $p_{i}(c) = \frac{1}{\nProds}$ for all $i$ and $c$ regardless of the adopted strategies. Thus, every strategy profile is an NE. Considering a sequence of increasing $(\temp_i)_{i \geq 1}$, we can therefore argue that the limit of any convergent sequence of NE indexed by $\temp$ is a NE at $\temp = \infty$. Interestingly, \Cref{stm:eps_pne_uniform} shows that a sufficiently large but \emph{finite} $\temp > 0$ is sufficient for existence of $\epsilon$-(P)NE. The result is constructive, showing that the $\epsilon$-PNE is parallel to the average consumer embedding.
\begin{restatable}{theorem}{epsPNEUniform}
\label{stm:eps_pne_uniform}
    For any $\epsilon > 0$ and $\consDist \in \mathcal{P}(\R{d})$
    with compact support and $\E[c] \neq 0$, $\exists \temp_0 > 0$ s.t.\ $s_1 = \ldots = s_\nProds = \nconsumerSum$ is an $\epsilon$-PNE for all $\temp \geq \temp_0$.
    Moreover, for all $\temp \geq \temp_0$, the smallest $\epsilon_\temp$ for which $\nconsumerSum$ is an $\epsilon_\temp$-PNE satisfies $\epsilon_\temp \leq \tfrac{\epsilon}{\temp}$.
    If also $\epsilon < \| \consumerSum \|$, then the set of better-responses to $\nconsumerSum$
    \begin{align}
        \Psi(\nconsumerSum) 
        \coloneqq 
        \{ 
            v \in S^{d-1} 
            \colon \! 
            u_1 (v, \nconsumerSum, \ldots, \nconsumerSum) \geq u_1 (\nconsumerSum, \nconsumerSum, \ldots, \nconsumerSum) 
        \}
        \, ,
    \end{align}
    is a subset of $B_{\delta}^d(\nconsumerSum) = \{ v \colon \| v - \nconsumerSum \| \leq \delta 
    \}$ with 
    $
    \delta 
    = 
    2\epsilon / (\| \consumerSum \| - \epsilon)
    $, 
    and $\delta \to 0$ as $\temp \to \infty$. 
\end{restatable}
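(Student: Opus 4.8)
The plan is to treat $\nconsumerSum = \consumerSum / \|\consumerSum\|$ as a candidate symmetric profile and to read off both its approximate optimality and the size of the improving set from a first-order expansion of the deviating player's utility. Fix player~$1$, set every opponent to $\nconsumerSum$, and write $v \coloneqq s_1$. Because all $\nProds - 1$ opponents contribute the identical term $e^{\langle c, \nconsumerSum\rangle / \temp}$ to the softmax denominator, the per-consumer exposure collapses to a one-dimensional sigmoid: with $g(x) \coloneqq \bigl(1 + (\nProds - 1)\, e^{-x/\temp}\bigr)^{-1}$ we have $\util_1(v, \nconsumerSum, \ldots, \nconsumerSum) = \E_c \bigl[ g(\langle c, v - \nconsumerSum\rangle) \bigr]$, which at $v = \nconsumerSum$ equals $g(0) = 1/\nProds$. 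This is the baseline I will compare every deviation against.

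The decisive structural fact is that the gradient of $\util_1$ in $v$ at $v = \nconsumerSum$ is \emph{radial}. Using $g'(0) = (\nProds - 1)/\nProds^2$ together with $\E[c] = \nProds^2 (\nProds - 1)^{-1}\consumerSum$, I would compute $\nabla_v \util_1 \big|_{\nconsumerSum} = \temp^{-1} g'(0)\, \E[c] = \temp^{-1}\consumerSum$, which is parallel to $\nconsumerSum$; its projection onto the tangent space of $S^{d-1}$ at $\nconsumerSum$ therefore vanishes, so $\nconsumerSum$ is first-order stationary. Equivalently, the exact linear term of the expansion, $\temp^{-1}\langle \consumerSum, v - \nconsumerSum\rangle = -\temp^{-1}\|\consumerSum\|\,(1 - \langle \nconsumerSum, v\rangle)$, is non-positive on $S^{d-1}$ and strictly negative away from $\nconsumerSum$.

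Given this, the $\epsilon$-PNE claim follows by controlling the quadratic remainder. Writing the deviation gain as $G(v) \coloneqq \util_1(v, \nconsumerSum, \ldots, \nconsumerSum) - 1/\nProds = -\temp^{-1}\|\consumerSum\|\,(1 - \langle \nconsumerSum, v\rangle) + R(v)$ and noting $g'' = \temp^{-2} h''(\,\cdot\,/\temp)$ for a sigmoid $h$ whose second derivative is bounded, Taylor's theorem gives $|R(v)| \le C\,\temp^{-2}\|v - \nconsumerSum\|^2$, where $C$ depends only on $\nProds$ and, through $\E[\|c\|^2]$, on the compact support of $\consDist$. Since the linear term is non-positive and $\|v - \nconsumerSum\| \le 2$, this yields $G(v) \le C\,\temp^{-2}\|v - \nconsumerSum\|^2 \le 4C\,\temp^{-2}$, so the smallest admissible slack obeys $\epsilon_\temp \le 4C\,\temp^{-2}$. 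Choosing $\temp_0 \coloneqq 4C/\epsilon$ then gives $\epsilon_\temp \le 4C\,\temp^{-2} = (4C\,\temp^{-1})\,\temp^{-1} \le \epsilon/\temp$ for all $\temp \ge \temp_0$, which is simultaneously the $\epsilon$-PNE property and the asserted rate.

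For the better-response set I would take any $v \in \Psi(\nconsumerSum)$, i.e.\ $G(v) \ge 0$, and reinsert the decomposition to obtain $\temp^{-1}\|\consumerSum\|\,(1 - \langle \nconsumerSum, v\rangle) \le R(v)$. Bounding $R(v)$ against the slack established above and rearranging produces an upper bound on $1 - \langle \nconsumerSum, v\rangle$, which I would convert through $\|v - \nconsumerSum\|^2 = 2\,(1 - \langle \nconsumerSum, v\rangle)$ into the containment $\Psi(\nconsumerSum) \subseteq B_\delta^d(\nconsumerSum)$; tracking the constants gives exactly $\delta = 2\epsilon_\temp / (\|\consumerSum\| - \epsilon_\temp)$. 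The hypothesis $\epsilon < \|\consumerSum\|$ ensures $\|\consumerSum\| - \epsilon_\temp \ge \|\consumerSum\| - \epsilon > 0$, so $\delta$ is well defined, and $\epsilon_\temp \le \epsilon/\temp \to 0$ forces $\delta \to 0$ (indeed the radial penalty eventually dominates the remainder outright, collapsing $\Psi(\nconsumerSum)$ towards $\{\nconsumerSum\}$). I expect the genuine obstacle to lie precisely here: the remainder $R(v)$ must be dominated by the radial first-order penalty \emph{uniformly} over the whole sphere, and it is compactness of $\supp \consDist$---bounding $\|c\|$ and hence the curvature term---that makes this possible; the non-smooth hardmax limit $\temp \to 0$ would demand an entirely different argument.
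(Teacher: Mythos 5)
Your proposal is correct, but it takes a genuinely different route from the paper's proof. The paper applies the mean-value theorem, writing the deviation gain as $\langle g_1', s_1 - \nconsumerSum\rangle$ with the gradient $g_1'$ evaluated at an \emph{unknown intermediate point}, and then proves a uniform-convergence lemma ($\temp\, g_1' \to \consumerSum$ uniformly over the ball, via dominated convergence and compactness of $\supp(\consDist)$); the $\epsilon$-PNE bound then follows from Cauchy--Schwarz, and the better-response containment from a Pythagorean split of $\temp\, g_1' - \consumerSum$ into components parallel and orthogonal to $\nconsumerSum$, which is exactly what produces the constant $\delta = 2\epsilon/(\|\consumerSum\|-\epsilon)$. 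You instead expand to \emph{second} order around $\nconsumerSum$ itself: the gradient there is computed exactly as $\temp^{-1}\consumerSum$, so the first-order term is the exact radial penalty $-\temp^{-1}\|\consumerSum\|\,\|v-\nconsumerSum\|^2/2$, and the Lagrange remainder is bounded by $C\temp^{-2}\|v-\nconsumerSum\|^2$ uniformly, since compact support bounds $\E[\|c\|^2]$ and the sigmoid's second derivative is universally bounded. The paper's route buys freedom from Hessian estimates and yields the stated $\delta$ formula directly; your route buys explicit constants and, because penalty and remainder are \emph{both} quadratic in $\|v-\nconsumerSum\|$, a strictly stronger conclusion: $\epsilon_\temp = O(\temp^{-2})$, and for $\temp > 2C/\|\consumerSum\|$ the penalty dominates outright, forcing $\Psi(\nconsumerSum)=\{\nconsumerSum\}$, i.e.\ $\nconsumerSum$ is an exact (strict) PNE, from which the claimed ball containment and $\delta \to 0$ follow a fortiori. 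Two small repairs to your write-up: (i) take $\temp_0 = \max\{1, 4C/\epsilon\}$ (the paper's ``w.l.o.g.\ $\temp_0 \geq 1$''), since $\epsilon_\temp \leq \epsilon/\temp$ alone does not give $\epsilon_\temp \leq \epsilon$ when $\temp < 1$; (ii) your claim that constant-tracking reproduces $\delta = 2\epsilon_\temp/(\|\consumerSum\|-\epsilon_\temp)$ is not what your decomposition actually yields---dividing your inequality $\temp^{-1}\|\consumerSum\|\,\|v-\nconsumerSum\|^2/2 \leq C\temp^{-2}\|v-\nconsumerSum\|^2$ by $\|v-\nconsumerSum\|^2$ gives the singleton conclusion above, not that formula---but this only means you prove more than the theorem asks.
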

\looseness=-1
This result shows that all $\epsilon$-improvements concentrate near the consumer average $\epsilon$-PNE as $\temp \to \infty$. Additionally, the ``consumer symmetry'' $\| \consumerSum \| = \tfrac{1}{\nProds}(1 - \tfrac{1}{\nProds}) \left\| \E[c] \right\|$ determines how quickly $\delta \to 0$. When consumers are spread approximately symmetrically w.r.t.\ the origin, the degenerate equilibrium appears only for large $\temp$. However, smaller $\temp$ are sufficient for more directionally concentrated $\consDist$. A high number of producers also slows the concentration as the appeal of $\util_i(\nconsumerSum, \ldots, \nconsumerSum) = \tfrac{1}{\nProds}$ decreases with $\nProds$. We conclude with a corollary based on our development so far.
\begin{corollary}\label{stm:temperature_influence}
    There is a fixed $\epsilon_0 > 0$ and a demand distribution $\consDist$ which---depending on the chosen $\temp$---induce zero, one, multiple, or infinitely many $\epsilon$-NE for all $\epsilon \leq \epsilon_0$.
\end{corollary}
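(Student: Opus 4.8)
The plan is to exhibit a single demand distribution $\consDist$ (with $\E[c] \neq 0$ and compact support) together with a threshold $\epsilon_0 > 0$, and then vary $\temp$ to realize each of the four cardinalities of the $\epsilon$-NE set. The key observation driving the construction is that \Cref{stm:eps_pne_uniform} already hands us one regime for free: for \emph{large} $\temp$ the collapsed profile $s_1 = \cdots = s_\nProds = \nconsumerSum$ is an $\epsilon$-PNE, and moreover the better-response set $\Psi(\nconsumerSum)$ shrinks into a ball of radius $\delta = 2\epsilon/(\|\consumerSum\| - \epsilon) \to 0$ as $\temp \to \infty$. So the ``large $\temp$'' end of the parameter range will supply the \textbf{infinitely many} case (every profile is an $\epsilon$-NE once $\temp$ is so large that $\delta$ exceeds the diameter of $S^{d-1}$, or more cleanly the $\temp = \infty$ uniform-exposure observation from the start of the subsection) and also a \textbf{one} / \textbf{multiple} case for intermediate $\temp$ where the collapsed equilibrium and its permutations are the only $\epsilon$-PNE.

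First I would fix a concrete $\consDist$ --- the natural candidate is the two-point-mass (or small-cluster) planar distribution underlying the no-PNE counterexample of \Cref{stm:no_pne} and \Cref{fig:example}A, since we already know that at $\temp = 0$ (hardmax) it admits \emph{no} PNE, and by a continuity/perturbation argument no exact PNE for small $\temp > 0$ either. The goal is to upgrade ``no PNE'' to ``no $\epsilon$-NE for some fixed $\epsilon_0$'': because the utilities are continuous in $(s, \temp)$ on the compact product space $(S^{d-1})^\nProds \times [0, \temp_1]$, the maximal unilateral improvement $\max_i \sup_{v} [u_i(v, s_{\setminus i}) - u_i(s)]$ is itself continuous and, at $\temp = 0$, bounded below by some $2\epsilon_0 > 0$ uniformly over all profiles (this is exactly the quantitative content of ``no PNE'' on a compact space). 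By compactness this strict lower bound persists for $\temp$ in a neighborhood of $0$, giving a range of small temperatures with \emph{zero} $\epsilon$-NE for all $\epsilon \leq \epsilon_0$.

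The remaining work is to interpolate: as $\temp$ increases from the ``zero $\epsilon$-NE'' regime through to the ``infinitely many'' regime, I would track the $\epsilon$-NE set and argue it passes through configurations with exactly one (up to the player-permutation symmetry the paper already invokes) and with several distinct (again up to permutation) $\epsilon$-PNE. The collapsed profile $\nconsumerSum$ becoming an $\epsilon$-PNE at a finite threshold $\temp_0$ (\Cref{stm:eps_pne_uniform}) marks the transition from zero to nonempty; choosing $\temp$ just above $\temp_0$ but where $\delta$ is still small isolates the \emph{unique} (up to permutation) equilibrium, while a slightly larger $\temp$ --- or a mild asymmetry in $\consDist$ engineered so that a second, non-collapsed near-equilibrium survives with improvement gap below $\epsilon_0$ --- yields \emph{multiple}.

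The main obstacle I anticipate is the \textbf{multiple} (finitely many, more than one) case: the theorem's construction only guarantees the collapsed $\epsilon$-PNE and its permutations, so I must either certify that a genuinely distinct profile (e.g.\ a ``protective positioning'' configuration as in \Cref{fig:example}D--E) is simultaneously an $\epsilon$-PNE while the collapsed profile still is, at a \emph{common} value of $\temp$ and a \emph{common} $\epsilon \leq \epsilon_0$, or instead tune $\consDist$ so these two regimes overlap in $\temp$. Pinning down that overlap --- showing the improvement gaps of both candidates fall below the same fixed $\epsilon_0$ for an open set of temperatures --- is the delicate quantitative step; everything else reduces to the compactness-plus-continuity bookkeeping sketched above, which I would relegate to routine estimates rather than grinding through explicitly.
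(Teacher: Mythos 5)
Your construction has genuine gaps, the first of which is factual. The two‑point‑mass measure $\consDist=\tfrac13(2\delta_{e_1}+\delta_{e_2})$ from the softmax half of \Cref{stm:no_pne} is a \emph{non‑negative} game with $\nProds=d=2$, and by \Cref{stm:nonneg_pne} every such \emph{hardmax} game has a PNE; concretely $s_1=s_2=e_1$ is one (each player gets $\tfrac12$, while any deviation surrenders the mass $\tfrac23$ at $e_1$ and yields at most $\tfrac13$). The distribution with no hardmax PNE is the equilateral‑triangle one, which is a different measure and has $\E[c]=0$, so it cannot simultaneously feed \Cref{stm:eps_pne_uniform}. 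Hence the uniform lower bound on improvements at $\temp=0$ that you want to propagate simply does not exist for your chosen $\consDist$. The propagation step is also invalid on its own terms: utilities are \emph{not} jointly continuous on $(S^{d-1})^\nProds\times[0,\temp_1]$, because the hardmax utility is discontinuous in $s$ and the softmax utilities cannot converge to it uniformly. The profile $(e_1,e_1)$ makes this concrete: its best improvement is $0$ at $\temp=0$, yet for every small $\temp>0$ a player who tilts by an angle of order $\sqrt{\temp}$ keeps a constant fraction of the mass at $e_1$ while capturing essentially all of the mass at $e_2$, an improvement bounded away from zero. So ``no $\epsilon$-NE at small $\temp$'' (which \emph{is} true for suitable distributions) must be proved by a uniform deviation‑margin argument, not by continuity up to $\temp=0$.

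The deeper gap concerns the ``one'' and ``multiple'' cases, which your plan places at positive temperatures; this cannot work. For $\temp>0$ the utilities, hence the best‑improvement function $M(s)=\max_i\sup_{v\in S^{d-1}}\bigl[\util_i(v,s_{\setminus i})-\util_i(s)\bigr]$, are continuous in $s$. Since the $\epsilon$-NE sets $\{M\le\epsilon\}$ are nested in $\epsilon$, demanding the same finite count for \emph{all} $\epsilon\le\epsilon_0$ forces that set to equal the set of exact PNE; but around an exact PNE continuity gives a whole neighborhood with $M<\epsilon$, so for $\temp>0$ (and $d\ge2$) the number of $\epsilon$-NE is always either zero or a continuum. (Note that \Cref{stm:eps_pne_uniform} only bounds better responses \emph{to} the collapsed profile; it does not say that nearby profiles fail to be $\epsilon$-NE themselves, which is exactly what your ``isolates the unique equilibrium'' step would need.) Finite counts can therefore only occur at $\temp=0$, where a finitely supported $\consDist$ makes utilities piecewise constant and improvements quantized, so that $\epsilon$-NE coincide with exact PNE for small $\epsilon_0$: exactly one for your two‑point‑mass game ($s_1=s_2=e_1$ above), or several permutation‑distinct equilibria in hardmax games with more players (e.g.\ the $2$--$2$ split profiles of four players over the atoms of $\tfrac12(\delta_{e_1}+\delta_{e_2})$). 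This is how the paper's development—it offers no standalone proof, the corollary being a synthesis of \Cref{stm:no_pne}, \Cref{stm:nonneg_pne}, \Cref{stm:eps_pne_uniform} and the $\temp=\infty$ observation—supplies those cardinalities, with small and large positive $\temp$ supplying zero and infinitely many. A consequence you should flag: since $\temp=0$ is a single game, no single distribution can exhibit all four cardinalities as $\temp$ varies (at most three are achievable), so the statement must be read with the game allowed to differ across cases; the interpolation you propose, whose ``delicate quantitative step'' you located in the multiple case, is not completable, and the same obstruction defeats the uniqueness case.
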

\Cref{stm:temperature_influence} underscores the sensitivity of exposure games to the temperature parameter $\temp$, with uniformly homogeneous content at one end (high $\temp$), and potentially persistent oscillation behavior in competition when no NE exist (low $\temp$). A higher $\temp > 0$ can be a result of \emph{algorithmic exploration} \citep{chen2019top,cesa2017boltzmann,lattimore2020bandit}, which is provably necessary for optimal performance in \emph{static} environments \citep{lattimore2020bandit}. In contrast, our results show that in environments with \emph{strategic} actors, exploration may incentivize content which is uniform and broadly appealing rather than diverse.

This may contradict the intuition that more exploration should lead to greater content diversity due to the higher exposure of niche content. One way to understand this result is the tension between randomization and the ability of niche creators to reach their audience: producers may be discouraged from creating niche content when the algorithm is exploring too much ($\temp$ high), and encouraged to mercilessly seek and protect their own niche when the algorithm performs little exploration ($\temp$ low). When the algorithm captures user preferences well, exploration is typically thought of as having negative impact on user experience through immediate reduction in quality of service as a result of suboptimal recommendations. However, the above results show secondary long-term effects. 

\subsection{Local Nash equilibria}
\label{sect:critical_points}

\looseness=-1
In a local NE, each $s_i$ is optimal on some of its neighborhood within the embedding space. Sometimes motivated as a form of bounded rationality, LNE can often be found by local search algorithms \citep[e.g.,][]{mazumdar2019finding}. Since our motivation in studying exposure games is ultimately better system understanding and audits, we are particularly interested in these algorithmic benefits.

\looseness=-1
Practical first-order algorithms for identifying LNE operate analogously to gradient descent, implying they may terminate in \emph{critical points} that are not LNE. Unlike NE, critical points always exist.
\begin{restatable}{proposition}{criticalPointsExist}
\label{stm:critical_points_exist}
    Every $\temp > 0$ exposure game with $\E[c] \neq 0$ has a critical point at $s_1 = \ldots = s_\nProds = \nconsumerSum$.
\end{restatable}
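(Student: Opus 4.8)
The plan is to verify criticality in the Riemannian sense dictated by the constraint $s_i \in S^{d-1}$. Because strategies are confined to the sphere, the relevant first-order object is not the free gradient $\nabla_{s_i}\util_i$ but its projection onto the tangent space $T_{s_i}S^{d-1} = \{ v \in \R{d} \colon \langle v, s_i \rangle = 0 \}$; the profile is critical for player $i$ exactly when this projected gradient vanishes, i.e.\ when $\nabla_{s_i}\util_i$ is \emph{parallel} to $s_i$ (purely normal to the sphere). So first I would compute $\nabla_{s_i}\util_i(s)$ in full space, then evaluate it at the symmetric profile $s_1 = \cdots = s_\nProds = \nconsumerSum$ and check that it is a scalar multiple of $\nconsumerSum$.

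For the gradient: since $\temp > 0$, each $p_i(c)$ is a smooth softmax of the scores, so I may differentiate under the expectation. Combining the standard softmax Jacobian $\partial_{z_j}\softmax(z)_i = \softmax(z)_i(\delta_{ij} - \softmax(z)_j)$ with $\nabla_{s_i}(\temp^{-1}\langle c, s_{i'}\rangle) = \temp^{-1}\delta_{i'i}\, c$, only the $i'=i$ term survives, giving $\nabla_{s_i} p_i(c) = \temp^{-1} p_i(c)\bigl(1 - p_i(c)\bigr)\, c$ and hence $\nabla_{s_i}\util_i(s) = \temp^{-1}\,\E_{c}\bigl[ p_i(c)(1 - p_i(c))\, c \bigr]$.

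Evaluating at the fully symmetric profile is what makes the argument collapse: when every $s_{i'} = \nconsumerSum$ the scores $\langle c, s_{i'}\rangle$ are equal across producers, so the softmax is uniform and $p_i(c) = \tfrac{1}{\nProds}$ for all $i$ and \emph{all} $c$. The weight $p_i(c)(1-p_i(c)) = \tfrac{1}{\nProds}(1 - \tfrac{1}{\nProds})$ is then a constant independent of $c$, so it pulls out of the expectation and decouples the $c$-dependence:
\[
    \nabla_{s_i}\util_i
    = \temp^{-1}\tfrac{1}{\nProds}\bigl(1 - \tfrac{1}{\nProds}\bigr)\E[c]
    = \temp^{-1}\consumerSum .
\]
Since $\E[c] \neq 0$ we have $\consumerSum \neq 0$, so $\nconsumerSum = \consumerSum/\|\consumerSum\|$ is well defined and $\nabla_{s_i}\util_i = \temp^{-1}\|\consumerSum\|\,\nconsumerSum$ is a positive scalar multiple of $s_i = \nconsumerSum$ itself. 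Its tangential projection therefore vanishes for every $i$, which is precisely the criticality condition.

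I expect no genuine obstacle in the calculation, since the symmetric point trivializes the softmax; the two points requiring care are conceptual rather than technical. First, one must phrase criticality for the manifold $S^{d-1}$ and not $\R{d}$: the free-space gradient need not be zero, it only needs to be normal to the sphere, which is exactly what the computation delivers. Second, interchanging gradient and expectation should be justified by noting $p_i(c)(1-p_i(c)) \leq \tfrac14$, so the integrand's gradient is dominated uniformly in $s$ by $\tfrac{1}{4\temp}\|c\|$, which is integrable whenever $\E[c]$ exists. I would also stress that the claim is only that $\nconsumerSum$ is a critical point, not an LNE—matching the proposition's framing that first-order local search can stall here even when no NE is present.
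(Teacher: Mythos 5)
Your proposal is correct and follows essentially the same route as the paper: both arguments observe that at the symmetric profile the softmax is uniform, so $p_i(c)(1-p_i(c)) = \tfrac{1}{\nProds}(1-\tfrac{1}{\nProds})$ pulls out of the expectation, making $\nabla_{s_i}\util_i \propto \E[c] \propto \nconsumerSum$, which the tangent-space projection $(I - \nconsumerSum\nconsumerSum^\top)$ annihilates. The only difference is presentational---the paper simply cites its precomputed projected-gradient formula (Equation~\ref{eq:rescaled_gradient}) while you rederive it and justify differentiating under the expectation, which is a welcome level of rigor but not a different idea.
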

As we have seen, $s_1 = \ldots = s_\nProds = \nconsumerSum$ may be an equilibrium (\Cref{stm:nonneg_pne}). To distinguish LNE from mere critical points, we use the Riemannian 
second derivative test, treating $S^{d-1}$ as a Riemannian submanifold of $\R{d}$
as usual. For background, see \citep[sections 3 \& 5]{boumal2022intromanifolds}.
\begin{definition}[{\citealp[lemma~5.41]{boumal2022intromanifolds}}]
\label{def:second_order}
    A point $s$ in strategy space 
    satisfies the \emph{second derivative test} if $\forall i$
    (1)~the \emph{Riemannian gradient} 
    $(I - s_i s_i^\top) \nabla_{s_i} \util_i (s)$ is zero, and 
    (2)~the \emph{Riemannian Hessian} 
    \[(I - s_i s_i^\top) \left[\nabla_{s_i}^2 \util_i(s)\right] (I - s_i s_i^\top) - \langle s_i, \nabla_{s_i} \util_i(s) \rangle (I-s_i s_i^\top) \, ,\]
    is strictly negative definite in the subspace perpendicular to $s_i$.
\end{definition}
This condition is sufficient but not necessary for a critical point to be an LNE. LNE which satisfy \Cref{def:second_order} are termed \emph{differentiable} NE \citep{ratliff2016characterization,balduzzi2018mechanics}. The distinction is similar to that between the flat minimum of $x^4$ at zero the more well-behaved $x^2$.

\section{Pre-deployment audit of strategic creator incentives}
\label{sect:audit}

\looseness=-1
Beyond regularly retraining on new data, online platforms continuously roll out algorithm updates. While A/B testing can detect changes in user metrics, like satisfaction or churn, prior to the full-scale deployment \citep{tang2010overlapping,hohnhold2015focusing,xu2015infrastructure,gordon2019comparison}, assessing the impact on content producers is comparatively harder due to the longer delay between the roll-out and corresponding content adaptation. Furthermore, since producers cannot be easily assigned to distinct treatment groups without limiting their content to only a subset of consumers, modern A/B testing methods must eschew making causal statements about producer impact \citep{nandy2021b,ha2020counterfactual,huszar2022algorithmic}. Undesirable results including promulgation of junk and abusive content then have to be addressed via \emph{post-hoc} measures like content filtration and moderation.

A tool for \emph{ex-ante} (pre-deployment) assessment of producer impact could thus limit the harm experienced by users, moderators, and other affected parties. We demonstrate how to utilize a creator behavior model for this purpose, using the exposure game as a concrete example. The incorporation of factorization-based algorithms in exposure games allows us to use real-world datasets and rating models. While exposure games have limitations as a behavior model, we believe our experiments provide a useful illustration of the insights the proposed audit can offer to platform developers.

\subsection{Setup}
\label{sect:experimental_setup}

We use the \texttt{MovieLens-100K} and \texttt{LastFM-360K} datasets \citep{harper2015movielens,bertin2011lastfm,shakespeare2020exploring}, implement our code in Python \citep{vanrossum2009python} and rely on \texttt{numpy} \citep{harris2020array}, \texttt{scikit-surprise} \citep{hug2020}, \texttt{pandas} \citep{reback2020pandas}, \texttt{matplotlib} \citep{hunter2007matplotlib}, \texttt{jupyter} \citep{kluyver2016jupyter}, \texttt{reclab} \citep{krauth2020offline}, and \texttt{JAX} \citep{jax2018github} packages to fit probabilistic \citep[PMF;][]{mnih2007probabilistic} and non-negative \citep[NMF;][]{lee1999learning} matrix factorization. The models are trained to predict the user ratings (centered in the PMF case). To select regularization and learning rate, we performed a two-fold 90/10 split cross-validation separately on each dataset. The tuned recommenders were then fit on the full dataset, and the resulting user embeddings, $\{ c_j \}_{j \in [m]} \subset \R{d}$, were used to construct the demand distribution $\consDist = \frac{1}{m} \sum_j \delta_{c_j}$, and evaluate the recommendation probabilities $p_i(c)$. Details in \Cref{app:experimental_setup}.

The only algorithm for finding NE in hardmax exposure games we know has exponential worst-case complexity. We thus focus on the softmax case. While search for general \emph{mixed} NE is possible in special cases \citep{fudenberg1993learning,kaniovski1995learning,hirsch1997learning}, we are not aware of any technique applicable to $\nProds$-player exposure games. We therefore focus on \emph{pure} $\epsilon$-LNE (\Cref{sect:critical_points}), where each producer creates a single new item. We employ simple gradient ascent \citep[see \Cref{app:optimizer} for comparison with gradient descent]{singh2000nash,balduzzi2018mechanics} \emph{combined with reparametrization} $s_i = \theta_i / \| \theta_i \|$ for each producer, where we iteratively update $\theta_{i,t} = \theta_{i,t-1} + \alpha \nabla_{\theta_{i,t-1}} \util_i(s_{i,t-1}, s_{\setminus i, t-1})$ for shared step size $\alpha > 0$, and
\begin{align}\label{eq:rescaled_gradient}
    \nabla_{\theta_i} \util_i(s)
    =
    \tfrac{1}{\temp \| \theta_i \|_2}
    (I - s_i s_i^\top) 
    \E [p_i(c)(1 - p_i(c)) c]
    =
    \tfrac{1}{\| \theta_i \|_2}
    (I - s_i s_i^\top)
    \, \nabla_{s_i} u_i(s)
    \, .
\end{align}
\Cref{eq:rescaled_gradient} shows the update direction is parallel to the Riemannian gradient of $\util_i(s)$ w.r.t.\ $s_i \in S^{d-1}$ (\Cref{sect:critical_points}). We also experimented with the related Riemannian gradient ascent optimizer \citep{boumal2022intromanifolds}, but abandoned it after (predictably) observing little qualitative difference. We note that the local updates themselves define \emph{better-response dynamics} linked to iterative minor content changes; investigation of their relation to real-world producer behavior is an interesting future direction.

\begin{figure}[tbp]
    \centering
    \includegraphics[keepaspectratio,width=0.88\textwidth]{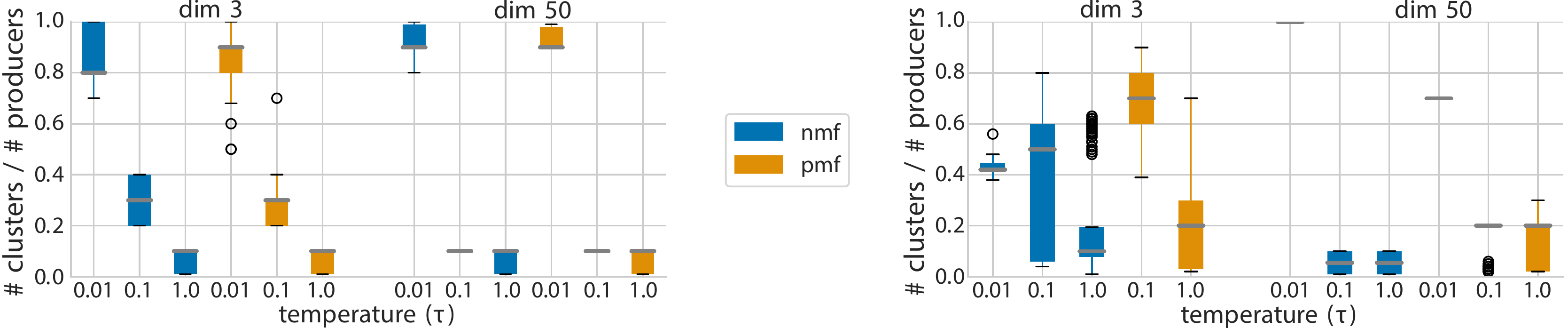}%
    \vspace{-0.5em}
    \caption{\textbf{Clustering of strategic producers depends on the exploration level $\temp$}. 
    A cluster is a set of points whose Euclidean distances from one another are less than $10^{-5}\sqrt{d}$. 
    As \Cref{stm:eps_pne_uniform} predicts, large $\temp$ (e.g., more exploration) leads to higher concentration, i.e., creating content which appeals to \emph{more} users. 
    \textbf{Left:} MovieLens.
    \textbf{Right:} LastFM.
    See \Cref{sect:empirics} for more discussion.}
    \label{fig:cluster_emergence}
    \vspace{-1em}
\end{figure}

\looseness=-1
We investigate the sensitivity of the incentivized content to the:
(i)~\emph{rating model} $\in \{\text{PMF, NMF}\}$,
(ii)~\emph{embedding dimension} $d \in \{ 3, 50 \}$, and
(iii)~\emph{temperature} $\log_{10} \tau \in \{ -2, -1, \, 0 \}$.
We further vary the number of producers $\nProds \in \{ 10, 100 \}$ to examine scenarios with different producer to consumer ratios (user count is fixed to the full 943 for MovieLens, and 13,698 for LastFM). The above values were selected in a preliminary sweep as representative of the effects presented below. For every setting, we used five random seeds for initialization of the recommender (affects $\consDist$), and for each ran the gradient ascent algorithm 10x to identify possible $\epsilon$-LNE. We applied early stopping when $\ell^2$-change in parameters between iterations dipped below $10^{-8} \cdot \sqrt{d}$; the number of iterations was set to 50K so convergence was achieved for \emph{every} run. We only report runs where the second-order Riemannian test from \Cref{sect:critical_points} did not rule out an $\epsilon$-LNE. Additional results, including those where the Riemannian test was conclusive, are in \Cref{app:add_results}.

\subsection{Results}
\label{sect:empirics}

\textbf{Emergence of clusters with growing $\temp$.}
\Cref{stm:eps_pne_uniform} shows that producers concentrate around $\nconsumerSum = \E[c] / \| \E [c] \|$ for sufficiently high $\temp$. \Cref{fig:cluster_emergence} corroborates the result on both MovieLens and LastFM, with the concentration happening already at $\temp = 1$ regardless of the embedding dimension $d$ and producer count $\nProds$. We also see that lower $\temp$ can lead to ``local clustering'' where only few producers converge onto the same strategy. We hypothesize that the \emph{simultaneous} local updates of the consumers create ``attractor zones'' where close-by producers collapse onto each other; they will remain collapsed henceforth due to equality of their gradients (by symmetry). \Cref{stm:eps_pne_uniform} does tell us collapse is to be expected for high $\temp$, and it is possible that a local version of the result with more than one clusters is true for intermediate values of $\temp$. This highlights how crucial the \emph{algorithmic choice} of $\temp$ is for the induced incentives within our model.

\looseness=-1
\textbf{Targeting of incentivized content by gender.}
The MovieLens dataset contains binarized user gender information. In \Cref{fig:movielens}, we examine targeting of incentivized content on women and men. To do so, we employ aggregate statistics of \emph{predicted} ratings. While predicted ratings may differ from actual user preferences, they do determine recommendations and thus \emph{user experience}. To help disentangle effect of exposure maximization, we also include statistics based on the original item locations (labeled by \texttt{`b'}), i.e., the content created before producers adapt to the recommender. Since the baseline embeddings need not satisfy the unit norm constraint (see \Cref{def:exposure_games}), we measure \emph{normalized} ratings $\bar{r}_i(c) \coloneqq \nicefrac{\langle c, s_i \rangle}{\| c \| \| s_i \|}$ to facilitate comparison. The normalization also alleviates the known issue of varying interpretation of ranking scales between users \citep{lynch1991contrast}.

\begin{figure}[tbp]
    \centering
    \includegraphics[keepaspectratio,width=0.9\textwidth]{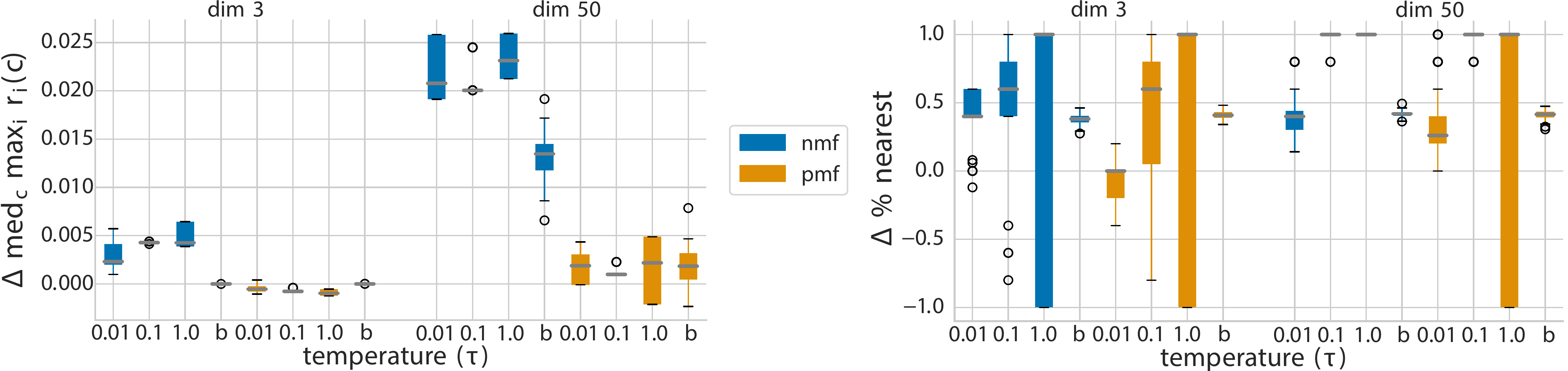}
    \vspace{-0.5em}
    \caption{\textbf{Targeting of incentivized content by gender} on MovieLens.
    \textbf{Left:} Difference between
    $\text{median}_{c \in G} \{ \max_{i \in [\nProds]} \bar{r}_i (c) \}$ for men and women (group $G$),
    with $\bar{r}_i(c)$ the normalized rating (cosine similarity between $c$ and the strategic $s_i$).
    Positive values imply bias towards men (higher median).
    Note the higher bias when $d = 50$ (more expressive algorithm);
    especially NMF incentivizes more biased content relative to the pre-adaptation baseline \texttt{`b'}.
    \textbf{Right:} Difference in proportions of 
    $s_i$ with best (normalized) rating by women/men.
    Positive values imply bias towards men (more items best-rated by men). 
    Bias again more pronounced at $d = 50$. 
    See \Cref{sect:empirics} for more discussion.
    }
    \label{fig:movielens}
    \vspace{-1em}
\end{figure}

\looseness=-1
In \Cref{fig:movielens} (left), $\text{median}_{c \in \text{men}} \{ 
\max_i \bar{r}_i (c) \} -  \text{median}_{c \in \text{women}} \{\max_i \bar{r}_i (c)\}$ measures if the incentivised content is predicted to appeal to women/men; \Cref{fig:movielens} (right) shows the fraction of creators incentivised to target women/men: $\tfrac{1}{\nProds} \sum_{i=1}^\nProds \indicatorSymbol \{ \argmax_{c} \bar{r}_i(c) \in \text{men} \} - \indicatorSymbol \{ \argmax_{c} \bar{r}_i(c) \in \text{women} \}$. Positive values signify content crafted for male audience (users are 71\% male). Higher embedding dimension results in more bias, presumably due to the larger model expressivity, and thus enables more fine-grained targeting. NMF consistently incentivizes more biased content. 

\looseness=-1
\textbf{Association between incentivized content and creator gender.}
Platform developers may want to know if some creators are being disadvantaged \citep{timesLGBTQyt,vergeTransDemonitization,rodriguez2022lgbtq}. While solutions were proposed in the static case \citep[e.g,][]{beutel2019fairness,wang2021practical}, understanding if the algorithm (de)incentivizes content by particular creator groups may limit future harm. In \Cref{fig:lastfm}, we measure the difference between the \emph{proportion of} (left) and the \emph{median distance to} (right) \emph{baseline} creator embeddings (learned by the recommender before strategic adaptation), within increasingly large neighborhoods of each strategic $s_i$. Since the baseline embeddings need not be unit norm, we use the cosine distance to define the neighborhoods.

\looseness=-1
Starting with the proportion (left), higher embedding dimension (more flexible model) incentivizes content more typical of male artists. This may be related to the higher prevalence of men in LastFM, combined with training by average loss minimization. The gender imbalance also explains why the proportion (left) stabilizes at a positive value, whereas the median distance (right) reverts to zero, as the number of considered neighbors grows. The bias is also related to the choice of rating model, where especially PMF at high temperatures results in significant advantage for male artists.

\begin{figure}[tbp]
    \centering
    \includegraphics[keepaspectratio,width=0.95\textwidth]{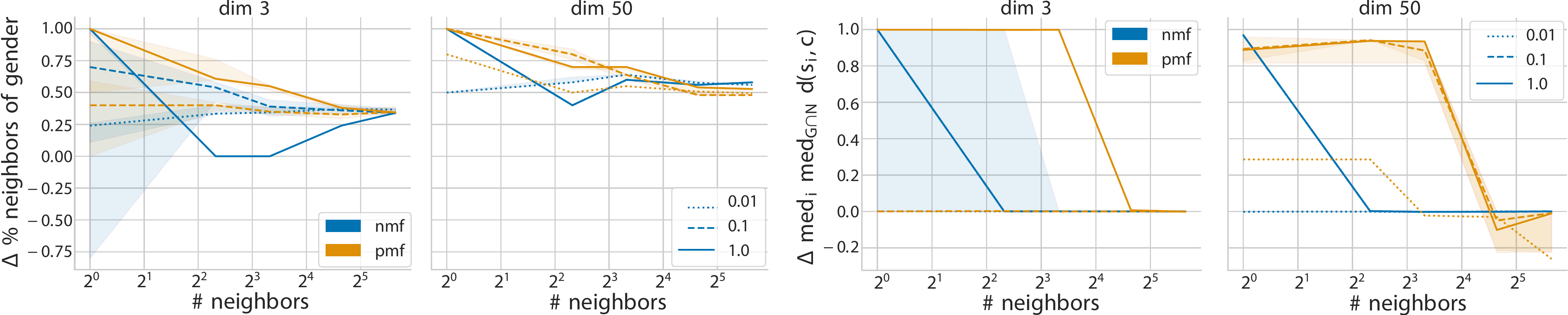}
    \vspace{-0.5em}
    \caption{\textbf{Incentivized content and creator gender} on LastFM.
    Quantifying relative difficulty of strategic adaptation for female and male content creators. Uses baseline creator embeddings (and associated gender), and their cosine distance from strategic embeddings.
    \textbf{Left:}
        Difference between fractions of male and female creators in increasingly large neighborhood of each strategic item. Values above zero imply bias towards male producers. Higher embedding dimension (model expressivity) again results in larger bias. The bias also seems to be larger for higher $\temp$ and for the PMF rating model.
    \textbf{Right:}
        Difference between median cosine distance to female and male creators within increasingly large neighborhood of each strategic item. Values above zero imply bias towards male producers. Higher bias is again associated with higher embedding dimension and the PMF rating model, but the impact of temperature $\temp$ is less pronounced. See \Cref{sect:empirics} for more discussion.
    }
    \label{fig:lastfm}
    \vspace{-1em}
\end{figure}

\section{Discussion}\label{sec:discussion}

\looseness=-1
From social media and streaming to Google Search, many of us interact with recommender and information retrieval systems every day. While the core algorithms have been developed and analyzed years ago, the socio-economic context in which they operate received comparatively little attention in the academic literature. We make two main contributions:
(a) we define \emph{exposure games}, an incentive-based model of content creators' interactions with real-world algorithms including the popular matrix factorization and two-tower systems, and
(b) we formulate a \emph{a pre-deployment audit} which employs a model of creator behavior to identify misalignment between incentivized and desirable content.

\looseness=-1
Our main theoretical contributions focus on the properties of Nash equilibria in exposure games. We found that seemingly innocuous \emph{algorithmic} choices like temperature $\temp$, embedding dimension $d$, or a non-negativity constraint on embeddings can have serious impact on the induced incentives. For example, high $\temp$ incentivizes uniform broadly appealing content, whereas low $\temp$ motivates targeting smaller consumer groups. Since higher $\temp$ is often linked to exploration, which is necessary for optimal performance in static settings \citep[e.g.,][]{lattimore2020bandit}, this result highlights the importance of considering the socio-economic context in algorithm development.

Our producer model has several limitations from assuming rationality, complete information, and full control, to taking the skill set of each producer to be the same, their utility to be linear in total exposure, and ignoring algorithmic diversification of recommendations. We also consider the attention pool as fixed and finite, neglecting the problematic reality of the modern attention economy, where online platforms constantly struggle to increase their user numbers and daily usage \citep{covington2016deep,williams2018stand,bhargava2021ethics}. Our theoretical understanding is incomplete as, e.g., our understanding of the influence of constraining embeddings to be non-negative is limited to the two-dimensional case. The empirical evaluation of our behavior model is hindered by the lack of academic access to the almost exclusively privately owned platforms \citep{greene2022barriers}.

\looseness=-1
Due to their sizable influence on individuals, societies, and economy \citep{milano2020recommender}, information and recommender systems are of critical importance from an ethical and societal perspective. While we hope that a better understanding of the incentives these algorithms create will mitigate their negative social consequences, this also entails risks. Perhaps the most important is the possibility of employing an optimizer such as the one in \Cref{sect:audit} to game a real-world algorithm. This is especially relevant to the current debate about transparency \citep[e.g.,][]{sonboli2021fairness,rieder2020towards,sinha2002role}, and the proposal to (partially) open-source the Twitter code base \citep{wired2022opentwitter}. Due to the aforementioned limitations, we also caution against treating the predictions of our incentive-based behavior model as definitive, especially given the significant complexity of many real-world algorithms and the environments in which they operate.

Going forward, we want to deepen our understanding of exposure games, and make pre-deployment audits a practical addition to the algorithm auditing toolbox. We hope this research enriches the debate about online platforms by a useful perspective for thinking about harms, (over)amplification, and design of algorithms with regard to the relevant incentives of the involved actors.

\bibliography{iclr2023_conference}
\bibliographystyle{iclr2023_conference}

\clearpage
\appendix

\section{Proofs}\label{app:proofs}

\textbf{List of abbreviations:} 
\begin{itemize}
    \item s.t.\ = such that
    \item a.s.\ = almost surely
    \item l.h.s.\ = left-hand side
    \item r.h.s.\ = right-hand side
    \item w.l.o.g.\ = without loss of generality
\end{itemize}

\subsection{Stand-alone statements}
\label{sect:standalone_proofs}

\noPNESoftmax*
\begin{proof}[Proof of \Cref{stm:no_pne}]
    \textbf{(I) Hardmax:}
    Take $\nProds = d = 2$ and $\consDist = \frac{1}{3} \sum_{j=1}^3 \delta_{c_j}$ where the angle between any $c_j \neq c_k$ is $\tfrac{2\pi}{3}$.
    Assume $s = (s_1, s_2)$ is a PNE.
    W.l.o.g.\ $c_1 = \argmax_j \langle s_1, c_j \rangle$.
    Then there is $s_2$ on the geodesic connecting $c_2$ and $c_3$ which has higher dot product with both $c_2$ and $c_3$ than $s_1$.
    Hence $\util_2(s) \geq \nicefrac{2}{3}$ by the assumption that $s$ is a PNE. 
    The same argument implies $\util_1(s) \geq \nicefrac{2}{3}$.
    This is a contradiction since $\sum_{i} \util_i(s) = 1$ by definition of the exposure utility.

    \textbf{(II) Softmax:}
    Let $\nProds = d = 2$, and $\consDist = \tfrac{1}{3} (2 \delta_{e_1} + \delta_{e_2})$ where $e_1 = [1, 0]^\top$ and $e_2 = [0, 1]^\top$.
    By \Cref{stm:nonneg_pne}, we know that the only possible PNE is $s_1 = s_2 = \nconsumerSum \propto \E [c] = [2, 1] / 3$, where both players enjoy $\util_1(s) = \util_2(s) = \tfrac{1}{2}$.
    Let $s_1' = (\nconsumerSum + \epsilon e_1) / \| \nconsumerSum + \epsilon e_1 \|$ for some $\epsilon > 0$.
    As $\temp \to 0$, $u_1(s_1', \nconsumerSum) \to \frac{2}{3}$ by continuity.
    Hence $\exists \temp_0 > 0$ s.t.\ $s_1 = s_2 = \nconsumerSum$ is not a PNE for all $\temp < \temp_0$.
\end{proof}

\epsPNEUniform*
\begin{proof}[Proof of \Cref{stm:eps_pne_uniform}]
    We w.l.o.g.\ focus on the defection strategies for $s_1$.
    By the mean-value theorem
    \begin{align*}
        \Delta
        \coloneqq
        \util_1(s_1, \nconsumerSum, \ldots , \nconsumerSum)
        -
        \util_1(\nconsumerSum, \ldots , \nconsumerSum)
        =
        \langle
            g_1' , s_1 - \nconsumerSum
        \rangle
        \, ,
    \end{align*}
    where $g_1' = \nabla_{s_1'} \util_1 (s_1', \nconsumerSum, \ldots, \nconsumerSum)$ for some $s_1'$ on the \emph{line} connecting $s_1$ and $\nconsumerSum$.
    While the rigorous argument below relies on a few technicalities, the main idea is simple: 
    as $\temp \to \infty$, $\temp \cdot g_1' \to \consumerSum = \tfrac{1}{\nProds}(1 - \tfrac{1}{\nProds}) \E [c]$ \emph{uniformly} over $s_1 \in S^{d-1}$ (\Cref{stm:unif_grad_convergence}), and thus $\temp \cdot \Delta \approx \langle \consumerSum, s_1 - \nconsumerSum \rangle \leq \| \consumerSum \| (1 - 1) = 0$.
    \begin{lemma}\label{stm:unif_grad_convergence}
        $
            \lim_{\temp \to \infty}
            \sup_{s_1 \in S^{d-1}}
                \| \temp \cdot g_1' - \consumerSum \|
            \to
            0
            \, .
        $
    \end{lemma}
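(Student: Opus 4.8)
The plan is to reduce the lemma to a single uniform estimate on the softmax gradient, which is then controlled entirely by the compactness of $\supp(\consDist)$. First I would compute the Euclidean gradient explicitly: since $p_1$ is a softmax in $s_1$, the standard softmax derivative identity gives
\[
    \nabla_{s_1} \util_1(s)
    =
    \temp^{-1}\,\E_{c \sim \consDist}\bigl[\, p_1(c)\,(1 - p_1(c))\, c \,\bigr]
    \, ,
\]
so that $\temp \cdot g_1' = \E_{c}[\, p_1(c)(1-p_1(c))\, c\,]$ with $p_1$ evaluated at the profile $(s_1', \nconsumerSum, \ldots, \nconsumerSum)$. Since the mean-value point $s_1'$ lies on the segment joining $s_1 \in S^{d-1}$ and $\nconsumerSum \in S^{d-1}$, it belongs to the convex ball $B^d$; hence the unknown location of $s_1'$ can be discarded by bounding the target quantity uniformly over all $v \in B^d$ in place of $s_1'$. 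This turns the claim into showing that $\E_c[p_1(c)(1-p_1(c))c]$, with all other players at $\nconsumerSum$, converges to $\consumerSum = \tfrac1\nProds(1-\tfrac1\nProds)\E[c]$ uniformly over $v \in B^d$.

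The key step is a uniform pointwise estimate on the scalar weight $p_1(c)(1-p_1(c))$. Writing $p_1(c) = (1 + (\nProds-1)\exp(\temp^{-1}\langle c, \nconsumerSum - v\rangle))^{-1}$ and setting $R \coloneqq \sup_{c \in \supp(\consDist)}\|c\| < \infty$ (finite by compact support), Cauchy--Schwarz together with $\|v\| \le 1$ and $\|\nconsumerSum\|=1$ gives $|\temp^{-1}\langle c, \nconsumerSum - v\rangle| \le 2R/\temp$. Thus the exponent vanishes uniformly in $c \in \supp(\consDist)$ and $v \in B^d$ as $\temp \to \infty$, forcing $p_1(c) \to \tfrac1\nProds$ and hence $p_1(c)(1-p_1(c)) \to \tfrac1\nProds(1-\tfrac1\nProds)$. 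To make this quantitative I would use that $t \mapsto (1+(\nProds-1)e^t)^{-1}$ is globally Lipschitz (its derivative is bounded by $\tfrac14$) and that $x \mapsto x(1-x)$ is $1$-Lipschitz on $[0,1]$, which composes to a bound of the form $\bigl| p_1(c)(1-p_1(c)) - \tfrac1\nProds(1-\tfrac1\nProds)\bigr| \le C R/\temp$, uniformly in $c$ and $v$, for an absolute constant $C$.

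Finally I would assemble the pieces. Using $\|\E[X]\| \le \E\|X\|$ and $\|c\| \le R$,
\[
    \sup_{v \in B^d}
    \bigl\|\, \E_c[\,p_1(c)(1-p_1(c))\,c\,] - \consumerSum \,\bigr\|
    \le
    \E_c\Bigl[\bigl| p_1(c)(1-p_1(c)) - \tfrac1\nProds(1-\tfrac1\nProds)\bigr|\,\|c\|\Bigr]
    \le
    C R^2 / \temp
    \, ,
\]
which tends to $0$ as $\temp \to \infty$; since this $B^d$-supremum dominates $\sup_{s_1 \in S^{d-1}}\|\temp g_1' - \consumerSum\|$, the lemma follows. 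The main obstacle is securing genuine \emph{uniformity}: the pointwise limit $p_1(c) \to \tfrac1\nProds$ is immediate, but one must ensure the convergence does not degrade as $\|c\|$ grows or as $v$ approaches the sphere. This is exactly where the compact-support hypothesis is indispensable, as it caps $R$; without it, large-$\|c\|$ consumers could keep the logit gaps $\temp^{-1}\langle c, \nconsumerSum - v\rangle$ bounded away from zero and spoil the uniform estimate.
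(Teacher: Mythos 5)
Your proof is correct, and its skeleton matches the paper's: both reduce the claim to showing that the weight $p_1(c)(1-p_1(c))$, with the remaining players pinned at $\nconsumerSum$, converges to $\tfrac{1}{\nProds}(1-\tfrac{1}{\nProds})$ uniformly over the mean-value point ranging in $B^d$, with compactness of $\supp(\consDist)$ supplying the bound $R = \sup_{c\in\supp(\consDist)}\|c\|$. Where you differ is in how that convergence is established. The paper argues qualitatively: monotonicity of $p_1$ in $\temp^{-1}\langle c, s_1'\rangle$, continuity of the exponential at zero to get uniformity over $B^d$, and dominated convergence to pass from pointwise-in-$c$ convergence to convergence of the expectation. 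You instead make the estimate quantitative: writing $p_1(c) = \bigl(1+(\nProds-1)\exp(\temp^{-1}\langle c, \nconsumerSum - v\rangle)\bigr)^{-1}$, then combining Cauchy--Schwarz ($|\temp^{-1}\langle c, \nconsumerSum - v\rangle|\le 2R/\temp$) with the global $\tfrac{1}{4}$-Lipschitz bound on $t \mapsto (1+(\nProds-1)e^t)^{-1}$ (its derivative is $-h(1-h)$, a nice observation) and the $1$-Lipschitz bound on $x \mapsto x(1-x)$ on $[0,1]$, to obtain $\bigl|p_1(c)(1-p_1(c)) - \tfrac{1}{\nProds}(1-\tfrac{1}{\nProds})\bigr| \le CR/\temp$ uniformly in both $c$ and $v$, hence $\sup_{s_1\in S^{d-1}}\|\temp \cdot g_1' - \consumerSum\| \le CR^2/\temp$. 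This buys an explicit non-asymptotic rate and dispenses with dominated convergence altogether, whereas the paper's version is shorter but purely asymptotic, and its interleaving of pointwise-in-$c$ with uniform-in-$s_1'$ convergence takes a moment to unpack. Your rate would also feed directly into the downstream use of the lemma in the theorem's proof, where the paper picks $\temp_0$ so that $\|\temp_0 \cdot g_1' - \consumerSum\|\le \epsilon/2$: with your bound one could write $\temp_0 = 2CR^2/\epsilon$ explicitly.
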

    \begin{proof}[Proof of \Cref{stm:unif_grad_convergence}]
        Since $\supp(\consDist)$ is compact by assumption, and $\temp \cdot g_1' = \E [p_1(c)(1 - p_1(c)) c]$,
        all we need is  $p_1(c) (1 - p_1(c)) \to \tfrac{1}{\nProds}(1 - \tfrac{1}{\nProds})$ \emph{pointwise} in $c$ (dominated convergence), and \emph{uniform} over $s_1' \in B^d = \{ v \colon \| v \| \leq 1 \}$ (mean-value theorem yields $s_1'$ on the \emph{line} connecting $s_1$ with $\nconsumerSum$).
        As
        \begin{align*}
            p_1(c)
            =
            \tfrac{
                \exp(\temp^{-1} \langle c, s_1' \rangle)
            }{
                \exp(\temp^{-1} \langle c, s_1' \rangle) + (\nProds - 1) \exp(\temp^{-1} \langle c, \nconsumerSum \rangle)
            }
            \, ,
        \end{align*}
        is monotonic in $\temp^{-1} \langle c, s_1' \rangle$, 
        and $\sup_{s_1' \in B^d} \langle c, s_1' \rangle = \| c \| < \infty$ by compactness, $p_1(c) ( 1 - p_1(c) )$ will converge to $\frac{1}{\nProds}(1 - \frac{1}{\nProds})$
        uniformly over $B^d$
        by continuity of the exponential function at zero.
    \end{proof}
    For any given $\epsilon > 0$, \Cref{stm:unif_grad_convergence} can be combined with
    \begin{align*}
        \temp \cdot
        \Delta
        \leq
            \langle
                \consumerSum
                ,
                s_1 - \nconsumerSum
            \rangle
            +
            \| \temp \cdot g_1' - \consumerSum \|
            \| s_1 - \nconsumerSum \|
        \, ,
    \end{align*}
    where $\langle \consumerSum , s_1 - \nconsumerSum \rangle \leq 0$ for all $s_1 \in S^{d-1}$ by $\nconsumerSum = \consumerSum / \| \consumerSum \|$,
    to obtain $\Delta < \varepsilon$ for a sufficiently large $\temp$.
    In particular, \Cref{stm:unif_grad_convergence} yields a $\temp_0$ such that $\| \temp_0 \cdot g_1' - \consumerSum \| \leq \epsilon / 2$, which ensures
    \begin{align*}
        \| \temp \cdot g_1' - \consumerSum \|
        \| s_1 - \nconsumerSum \|
        \leq 
        2 \| \temp \cdot g_1' - \consumerSum \|
        \leq
        \epsilon
        \, ,
    \end{align*}
    for all $\temp \geq \temp_0$.
    Hence $\nconsumerSum$ is at least an $\tfrac{\epsilon}{\temp}$-PNE for all $\temp \geq \temp_0$ (w.l.o.g.\ $\temp_0 \geq 1$).
    
    The above can be used to obtain a bound on $\delta \coloneqq \| s_1 - \nconsumerSum \|$ for $s_1 \in \Psi(\nconsumerSum)$.
    Using orthogonality
    \begin{align*}
        \Delta
        &=
        \langle 
            (I - \nconsumerSum \nconsumerSum^\top) g_1'
            ,
            s_1 - \nconsumerSum
        \rangle
        +
        \langle \nconsumerSum , g_1' \rangle
        \langle \nconsumerSum , s_1 - \nconsumerSum \rangle
        \\ 
        &\leq
        \temp^{-1}
        \| s_1 - \nconsumerSum \|
        \left[
            \left\| (I - \nconsumerSum \nconsumerSum^\top) \temp \cdot g_1' \right\|
            -
            \tfrac{1}{2}
            \langle \nconsumerSum , \temp \cdot g_1' \rangle
            \left\| s_1 - \nconsumerSum \right\|
        \right]
        \, , 
    \end{align*}
    by the triangle inequality, and
    $\langle \nconsumerSum, s_1 - \nconsumerSum \rangle = \tfrac{1}{2} (2 \langle \nconsumerSum, s_1 \rangle - 2) = -\tfrac{1}{2} \| s_1 - \nconsumerSum \|^2$ by $s_1, \nconsumerSum \in S^{d-1}$.
    The terms in the square bracket on the r.h.s.\ can be bounded using the Pythagoras' theorem 
    \begin{align*}
        \| \temp \cdot g_1' - \consumerSum \|^2
        &=
        \| (I - \nconsumerSum \nconsumerSum^\top) (\temp \cdot g_1' - \consumerSum) \|^2
        +
        \| \nconsumerSum \nconsumerSum^\top (\temp \cdot g_1' - \consumerSum) \|^2
        \\
        &=
        \| (I - \nconsumerSum \nconsumerSum^\top) \temp \cdot g_1' \|^2
        +
        | \langle \nconsumerSum, \temp \cdot g_1' \rangle - \| \consumerSum \| |^2
    \end{align*}
    where we used $(I - \nconsumerSum \nconsumerSum^\top) \consumerSum = 0$ and $\| \nconsumerSum \| = 1$.
    Because $\| \temp \cdot g_1' - \consumerSum \| < \epsilon$, the same is true for (the square roots of) both terms on the r.h.s.\ above. 
    By a simple algebraic manipulation of these inequalities
    \begin{align}\label{eq:temp_improvement_bound}
        \temp \cdot \Delta
        <
        \delta
        \left[
            \epsilon
            -
            \tfrac{\delta}{2}
            (\| \consumerSum \| - \epsilon)
        \right]
        \, .
    \end{align}
    The r.h.s.\ is positive only when $0 < \delta < 2\epsilon / (\| \consumerSum \| - \epsilon)$.
    Since $\epsilon$ in \Cref{eq:temp_improvement_bound} is only used as an upper bound on $\| \temp \cdot g_1' - \consumerSum \|$, and 
    \Cref{stm:unif_grad_convergence} tells us this norm converges to zero, $\delta \to 0$ as $\temp \to \infty$.
\end{proof}

\nonnegPNE*
\begin{proof}[Proof of \Cref{stm:nonneg_pne}]
    \textbf{(I)~Hardmax:}
    For \textbf{existence when $\nProds = d = 2$},
    let $\theta_c$ be the angle of $c$ from (w.l.o.g.) $[1,0]$, and let $A \subset C$ denote the set of angles such that for every $\theta_m \in A$, $\Prob(\theta_c \leq \theta_m) \geq \frac{1}{2}$ and $\Prob(\theta_{c} \geq \theta_m) \geq \frac{1}{2}$, with $\Prob$ implied by the underlying $\consDist$.
    Then any $(s_1 , s_2) \in A \times A$ is a PNE.
    
    For \textbf{non-existence when $d > 2$}, consider $d = 3$ and $\consDist = \tfrac{1}{3} \sum_{j=1}^3 \delta_{c_j}$ where $c_j$ are the three canonical basis vectors.
    Assume $s = (s_1, s_2)$ is a PNE.
    Disregards of $s_1$ location, there will be a point $s_2$ on the great circle connecting the two most distant points from $s_1$ (break ties arbitrarily) which is closer to both of the two.
    Hence $\util_2(s) \geq \nicefrac{2}{3}$ by the assumption that $s$ is a PNE. 
    The same argument implies $\util_1(s) \geq \nicefrac{2}{3}$.
    This is a contradiction since $\sum_{i} \util_i(s) = 1$ by definition.
    
    For \textbf{non-existence without non-negativity} in $d = 2$,
    see the hardmax part of the \Cref{stm:no_pne} proof.
    
    \textbf{(II)~Softmax:}
    In \textbf{the $\nProds = 2$ case},
    a necessary condition for $s = ( s_1, s_2 )$ to be a PNE is that the Riemannian gradients of the utility, $(I - s_i s_i^\top) g_i$ with $g_i = \nabla_{s_i} \util_i (s)$, are zero.
    Since $\nabla_{s_i} \util_i (s) = \temp^{-1} \E [ p_i(c) (1 - p_i(c)) c ]$, $g_i$ belongs to the first orthant by the definition of a non-negative game, and it is not zero (for $\temp > 0$, all probabilities lie in $(0, 1)$, and $c$ is \emph{not} a.s.\ zero since we assumed $\E[c] \neq 0$).
    Hence the Riemannian gradients can only be zero if $s_i \propto g_i$, and in particular $s_i = g_i / \| g_i \|_2$ because this is the direction which makes dot products with all vectors in the first orthant positive.
    
    \looseness=-1
    Crucially, $g_1 = g_2$ in $2$-player games due to the symmetry of $p_1(c) (1 - p_1(c)) = p_1(c) p_2(c) = p_2(c) (1 - p_2(c))$.
    Therefore at a PNE, $s_1 = s_2$ in which case $p_i(c) = \frac{1}{2}$ for all $c$.
    Thus $g_i(s) \propto \E [c]$, implying $s_1 = s_2 = \nconsumerSum$ is the only possible PNE.
    To show it may not be a PNE, consider $\consDist = \tfrac{1}{3} (2 \delta_{c_1} + \delta_{c_2})$ for arbitrary non-zero $c_1 \neq c_2$ in the first orthant.
    Then $\nconsumerSum \propto 2 c_1 + c_2$ with $u_1(\nconsumerSum, \nconsumerSum) = u_2(\nconsumerSum, \nconsumerSum) = 1 / 2$.
    Fixing $s_1 = c_1 / \| c_1 \|_2$ and taking $\tau \downarrow 0$, we get $u_1(s_1, \nconsumerSum) \to 2 / 3$, which means there exists a $\tau > 0$ for which $s_1 = c_1 / \| c_1 \|_2$ is a strict improvement over $s_1 = \nconsumerSum$ when $s_2 = \nconsumerSum$.
    
    For \textbf{the $\nProds > 2$ case}, 
    we focus on a two-dimensional $\nProds = 4$ game with $\consDist = \tfrac{1}{2} (\delta_{c_1} + \delta_{c_2})$ with $c_1 = [1, 0]^\top$ and and $c_2 = [0, 1]^\top$ (the two canonical basis vectors).
    In particular, we investigate existence of NE of the form $s_1 = s_2$ and $s_3 = s_4$.
    Since $d = 2$, the strategies are restricted to $S^1$, which means we can use polar coordinates to parameterize
    $s_i = \varphi(\theta_i) \coloneqq [\cos(\theta_i) , \sin(\theta_i)]^\top$.
    We will further restrict our attention to the symmetric case
    $\theta_1 = \theta_2 = \theta$ and $\theta_3 = \theta_4 = \tfrac{\pi}{2} - \theta$ for some $\theta \in [0, \tfrac{\pi}{4}] \eqqcolon K$.
    This allows us to define
    \begin{align*}
        Q 
        \coloneqq 
        \begin{pmatrix} 
            0 & -1 \\
            1 & \hphantom{-}0 
        \end{pmatrix}
        \, ,
    \end{align*}
    and look for values of $\theta \in K$ where (w.l.o.g.) 
    $\smash{f(\theta) \coloneqq \frac{\partial \util_1(s)}{\partial s_1} \frac{\partial s_1}{\partial \theta_1} \vert_{\theta_1 = \theta} = \langle g_1, Q s_1 \rangle}$
    is equal zero.
        Note that in the definition of $f$, all $s_i$ and $g_i$ vary with $\theta$ according to the relationship $s_i = \varphi(\theta_i)$ with $\theta_1 = \theta_2 = \theta$ and $\theta_3 = \theta_4 = \frac{\pi}{2} - \theta$.
        However, $f(\theta)$ is only the derivative of $\util_1(s)$ w.r.t.\ $\theta_1$, ignoring the dependence of $s_2$, $s_3$ and $s_4$ on $\theta$.
        This definition of $f$ means that only the roots of $f$ can possibly be NE. 
        The next lemma will help us locate these roots.
    \begin{lemma}\label{stm:diagonal_grad_convex}
        For a sufficiently small $\temp > 0$, $f \colon \theta \mapsto 
        \langle g_1, Q s_1 \rangle$ is \emph{strictly} convex on $K$.
    \end{lemma}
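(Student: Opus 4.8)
The plan is to reduce the claim to an explicit scalar inequality and then control it by splitting the behaviour of the sigmoid into a ``saturated'' regime and a boundary layer at $\theta = \tfrac\pi4$. First I would put $f$ in closed form. On the symmetric slice the scores seen by player~$1$ are $\langle c_1, s_1\rangle = \cos\theta$ and $\langle c_2, s_1\rangle = \sin\theta$, while players $3,4$ see the mirror scores, so that $p_1(c_1) = \tfrac12 a$ and $p_1(c_2) = \tfrac12(1-a)$ with $a = (1 + e^{-\temp^{-1} w})^{-1}$ the logistic function of $\temp^{-1} w$, where $w \coloneqq \cos\theta - \sin\theta$. Plugging $g_1 = \temp^{-1}\E[p_1(c)(1-p_1(c))c]$ into $f(\theta) = \langle g_1, Q s_1\rangle$ gives
\begin{align*}
    f(\theta) = \tfrac{1}{8\temp}\bigl[-a(2-a)\sin\theta + (1-a^2)\cos\theta\bigr] \eqqcolon \tfrac{1}{8\temp}\, h(\theta)\,.
\end{align*}
Since the prefactor is positive, strict convexity of $f$ on $K$ is equivalent to $h'' > 0$ on the interior $(0, \tfrac\pi4)$; at the degenerate endpoint $\theta = \tfrac\pi4$ all four items coincide, and one checks directly that $h''(\tfrac\pi4) = 0$, which is harmless for strict convexity on the closed interval.

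Writing $m \coloneqq 2a - 1 = \tanh(\tfrac{w}{2\temp}) \in [0,1]$ and $\eta \coloneqq 1 - m$, I would rewrite $h = \tfrac14[(3 - m^2) w + 2 m w']$ (with $w' = -(\sin\theta + \cos\theta)$ and the identities $w'' = -w$, $(w')^2 = 2 - w^2$), and expand about the saturated value $m = 1$ to obtain $h = -\sin\theta + \tfrac12\eta(w - w') - \tfrac14\eta^2 w$. Differentiating twice yields $h'' = \sin\theta + C(\theta)$, where $C$ collects every term carrying a factor of $\eta, \eta'$ or $\eta''$. Using $\tfrac{d\eta}{dw} = -\tfrac{\eta(2-\eta)}{2\temp}$ one finds $\eta' = O(\temp^{-1}\eta)$, $\eta'' = O(\temp^{-2}\eta)$ with $\eta'' \ge 0$ on $K$, so the dominant $O(\temp^{-2})$ part of $C$ equals $\tfrac{\eta(2-\eta)(w')^2}{2\temp^2}\,B(\theta)$ with $B(\theta) = \cos\theta\,m - \tfrac14 w\eta(4 - 3\eta)$. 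The heart of the saturated regime is the inequality $B > 0$ on $[0,\tfrac\pi4)$: writing $x = \tfrac{w}{2\temp}$, the negative part is $O(\temp)$ because $w\eta = 2\temp\,x(1-\tanh x)$ is uniformly $O(\temp)$, whereas $\cos\theta\,m$ is either bounded below or, when $w = o(\temp)$, scales like $\temp^{-1} w$ and still dominates. Wherever the sigmoid is saturated this makes $C > 0$, hence $h'' = \sin\theta + C > 0$; note this also settles $\theta = 0$, where $\sin\theta$ vanishes but $C$ does not.

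The region $\theta \to \tfrac\pi4$ is a genuine boundary layer of width $\sim\temp$ in which $m$ is $O(1)$ and the sigmoid derivatives no longer decay. Setting $t = \tfrac\pi4 - \theta$ and rescaling $t = \sqrt2\,\temp\rho$ sends $m \to \tanh\rho$, and after the chain rule $\tfrac{d^2}{d\theta^2} = \tfrac{1}{2\temp^2}\tfrac{d^2}{d\rho^2}$ one reads off the leading behaviour $h''(\theta) = \tfrac{1}{\sqrt2\,\temp^2}\,\mathrm{sech}^2\rho\,\tanh\rho + O(\temp^{-1})$, strictly positive for $\rho > 0$, i.e.\ for $\theta < \tfrac\pi4$.

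The main obstacle is the matching between these two scales and the two endpoints. Near $\theta = 0$ the leading term $\sin\theta$ degenerates, so positivity rests entirely on the exponentially small yet sign-definite correction $C$; near $\theta = \tfrac\pi4$ the sigmoid derivatives blow up like $\temp^{-1}$ and $\temp^{-2}$, and one must show the positive $O(\temp^{-2})$ term dominates the negative $O(\temp^{-1})$ contribution \emph{uniformly} all the way down to the degenerate point (where both the leading coefficient and $B$ vanish together). Making ``sufficiently small $\temp$'' quantitative thus amounts to choosing $\temp$ small enough that the positive $O(\temp^{-2})$ terms beat the $O(\temp^{-1})$ ones throughout the boundary layer, and that $\sin\theta + C > 0$ on its complement.
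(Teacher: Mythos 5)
Your reduction is correct as far as it goes: I verified the closed form $p_1(c_1)=\tfrac a2$, $p_1(c_2)=\tfrac{1-a}2$ and $f=\tfrac1{8\temp}h$ with $h=-a(2-a)\sin\theta+(1-a^2)\cos\theta$, the identity $h=\tfrac14[(3-m^2)w+2mw']$, the expansion $h=-\sin\theta+\tfrac{\eta}{2}(w-w')-\tfrac{\eta^2}{4}w$, the leading part of $C$ being $\tfrac{\eta(2-\eta)(w')^2}{2\temp^2}B$ with $B=m\cos\theta-\tfrac14 w\eta(4-3\eta)$, and the uniform bound $w\eta=O(\temp)$ that gives $B>0$ on $[0,\tfrac\pi4)$ for small $\temp$. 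This is a genuinely different route from the paper, which never derives a scalar closed form: there, $f'$ and $f''$ are computed by differentiating $u_1$ through $s_1$ only (via $\nabla_{s_1}^2u_1$ and its $\theta_1$-derivative), and positivity of the dominant $\temp^{-3}$ term is asserted on the grounds that $p_1(c)$ saturates to zero or one as $\temp\to0$. On the symmetric slice that premise fails for $c_1$, since $s_1=s_2$ forces $p_1(c_1)\le\tfrac12$; your decomposition handles exactly this regime correctly, so your framework is arguably the more careful of the two.

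There is, however, a genuine gap, and it is the one you name but do not close: the innermost region at $\theta=\tfrac\pi4$. Your layer estimate $h''=\tfrac{1}{\sqrt2\temp^2}\mathrm{sech}^2\rho\tanh\rho+O(\temp^{-1})$ yields positivity only where $\mathrm{sech}^2\rho\tanh\rho\gtrsim\temp$, i.e.\ $\rho\gtrsim\temp$, equivalently $\tfrac\pi4-\theta\gtrsim\temp^2$; for smaller $\rho$ the positive term is itself $O(\temp^{-1})$ or smaller, and your concluding prescription---``choose $\temp$ small enough that the positive $O(\temp^{-2})$ terms beat the $O(\temp^{-1})$ ones throughout the boundary layer''---cannot work, because no choice of $\temp$ makes $\rho/(\sqrt2\temp^2)$ dominate a generic $c/\temp$ remainder uniformly as $\rho\downarrow0$. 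What closes the argument is structure, not bookkeeping: from your own closed form, $\theta\mapsto\tfrac\pi2-\theta$ sends $w\mapsto-w$, hence $a\mapsto1-a$, and swaps $\sin\theta\leftrightarrow\cos\theta$, so $h(\tfrac\pi2-\theta)=-h(\theta)$. Thus $h''$ is \emph{odd} about $\tfrac\pi4$, so the $O(\temp^{-1})$ remainder in your layer expansion also vanishes at $\rho=0$, and the comparison becomes one of slopes rather than values. Concretely, $h''(\tfrac\pi4)=0$ and $h'''(\tfrac\pi4)=-\tfrac{1}{2\temp^3}+O(\temp^{-2})<0$; since $\tfrac{d}{d\rho}[\mathrm{sech}^2\rho\tanh\rho]=\mathrm{sech}^2\rho\,(1-3\tanh^2\rho)$ is bounded below near $\rho=0$, one gets $h'''<0$ on an interval $[\tfrac\pi4-c\temp,\tfrac\pi4]$ for a fixed constant $c>0$, and integrating backwards from $\tfrac\pi4$ gives $h''>0$ on $[\tfrac\pi4-c\temp,\tfrac\pi4)$. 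This overlaps the region your saturated and layer estimates already cover, and with that single addition your proof is complete.
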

    \begin{proof}[Proof of \Cref{stm:diagonal_grad_convex}]
        It is sufficient to prove that $f'' > 0$ on $K$.
        For this, observe $f'(\theta) = \| Q s_1 \|_{H_1}^2 - \langle g_1, s_1 \rangle$ where $H_1 \coloneqq \nabla_{s_1}^2 u_1(s)$, 
        and $f''(\theta) = \| Q s_1 \|_{\nabla_{\theta_1} H_1}^2 - \langle Q s_1,3 H_1 s_1 + g_1 \rangle \geq \| Q s_1 \|_{\nabla_{\theta_1} H_1}^2 - 3 \| H_1 \|_2 - \| g_1 \|_2$,
        where by construction
        \begin{gather*}
            g_1 
            =
            \tfrac{1}{2\temp}
            {\scriptsize
            \begin{pmatrix}
                p_1(c_1) (1-p_1(c_1)) \\
                p_1(c_2) (1-p_1(c_2))
            \end{pmatrix}
            }
            \, , \\
            H_1 
            =
            \tfrac{1}{2\temp^2}
            {\scriptsize
            \begin{pmatrix}
                (1 - 2p_1(c_1)) p_1(c_1) (1-p_1(c_1)) & 0 \\
                0 & (1 - 2p_1(c_2)) p_1(c_2) (1-p_1(c_2))
            \end{pmatrix}
            }
            \\
            \nabla_{\theta_1}
            H_1
            =
            \tfrac{1}{2\temp^3}
            {\scriptsize
            \begin{pmatrix}
                (1 - 6 p_1(c_1) (1 - p_1(c_1)) p_1(c_1) (1-p_1(c_1)) & 0 \\
                0 & (1 - 6 p_1(c_2) (1-p_1(c_2)) p_1(c_2) (1-p_1(c_2))
            \end{pmatrix}
            }
            \, ,
        \end{gather*}
        Hence $\| Q s_1 \|_{\nabla_{\theta_1} H_1}^2 \sim \temp^{-3}$, $\| H_1 \|_2 \sim \temp^{-2}$, and $\| g_1 \|_2 \sim \temp^{-1}$, implying that for $\temp$ low enough, the positive term $\| Q s_1 \|_{\nabla_{\theta_1} H_1}^2$ dominates (using that all expressions share the term $p_1(c) (1-p_1(c))$, and thus after dividing and observing $\tau \to 0$ gives $p_1(c)$ close to either one or zero, we get that all the terms scale as $p_1(c) (1 - p_1(c)) / \temp^k$ for the appropriate $k \in \{ 1, 2, 3 \}$).
    \end{proof}
    \Cref{stm:diagonal_grad_convex} implies there are at most two NE ($f$ is strictly convex, and $f(\theta) = 0$ is a necessary condition).
    At $\theta = \tfrac{\pi}{4}$, $s_1 = s_2 = s_3 = s_4 = \nconsumerSum$ by definition, which we know is a critical point, so $f(\tfrac{\pi}{4}) = 0$.
    Since $g_1 \propto \E [ p_1(c)(1 - p_1(c)) c] \neq 0$ for any $\temp > 0$ ($c$ cannot be a.s.\ $0$ by $\E[c] \neq 0$ and the non-negativity assumption), 
    $f(0) = \langle g_1 , Q s_1 \rangle > 0$ ($s_1 = \varphi(0) = e_1 = [1, 0]^\top$).
    The other possible root of $f$ thus could only be in the interior $(0, \tfrac{\pi}{4})$ of $K$.
    For small enough $\temp$, moving from $\theta = \tfrac{\pi}{8}$ towards $e_1 = [1, 0]^\top$ will increase utility, implying $f(\tfrac{\pi}{8}) < 0$.
    Hence there exists $\temp > 0$ and $\theta_\temp^\star \in (0, \tfrac{\pi}{8})$ s.t.\ $f(\theta_\temp^\star) = 0$ by the mean value theorem.
    
    So far we have established that $s_1 = s_2 = \varphi(\theta_\temp^\star)$, $s_3 = s_4 = \varphi(\tfrac{\pi}{2} - \theta_\temp^\star)$ is a \emph{local} NE for the corresponding small $\temp$.
    By symmetry, it is sufficient to check if there is a defection strategy for $s_1$.
    Any defection to $\theta_1 \in (\theta_\temp^\star, \tfrac{\pi}{2} - \theta_\temp^\star]$ will result in $p_1(c) < \tfrac{1}{4}$ for both $c=c_1, c_2$, and thus worse utility. 
    Defection to $(\tfrac{\pi}{2} - \theta_\temp^\star, \tfrac{\pi}{2}]$ will not yield utility greater than defection to $[0, \theta_\temp^\star)$ since $s_3 = s_4 = \varphi(\tfrac{\pi}{2} - \theta_\temp^\star)$, so it is sufficient to focus on $\theta_1 \in [0, \theta_\temp^\star)$.
    Here
    \begin{align*}
        \nabla_{\theta_1} \util_1 (s)
        &=
        \langle g_1, Q s_1 \rangle
        \\
        &\propto
        p_1(c_2) (1-p_1(c_2)) \cos(\theta_1)
        -
        p_1(c_1) (1-p_1(c_1)) \sin(\theta_1)
        \\
        &\geq 
        p_1(c_1) (1-p_1(c_1))
        [
            \cos(\theta_1)
            -
            \sin(\theta_1)
        ]
        \, ,
    \end{align*}
    since $p_1(c_1)$ grows quicker than $p_1(c_2)$ decays.
    By construction,
    $\theta_\temp^\star < \tfrac{\pi}{4}$,
    and we know $\cos(\theta) - \sin(\theta) > 0$ for $\theta \in [0, \tfrac{\pi}{4})$.
    In other words, the utility of $s_1$ is strictly increasing on $\theta_1 \in [0, \theta_\temp^\star)$, i.e., none of the corresponding $s_1 = \varphi(\theta_1)$ is an improvement.
    Hence $s_1 = s_2 = \varphi(\theta_\temp^\star)$, $s_3 = s_4 = \varphi(\tfrac{\pi}{2} - \theta_\temp^\star)$ is a NE.
    (The construction is illustrated in \Cref{fig:twoloc_pne}.)
\end{proof}

\begin{figure}[tbp]
    \begin{center}
        \includegraphics[keepaspectratio,width=0.225\textwidth]{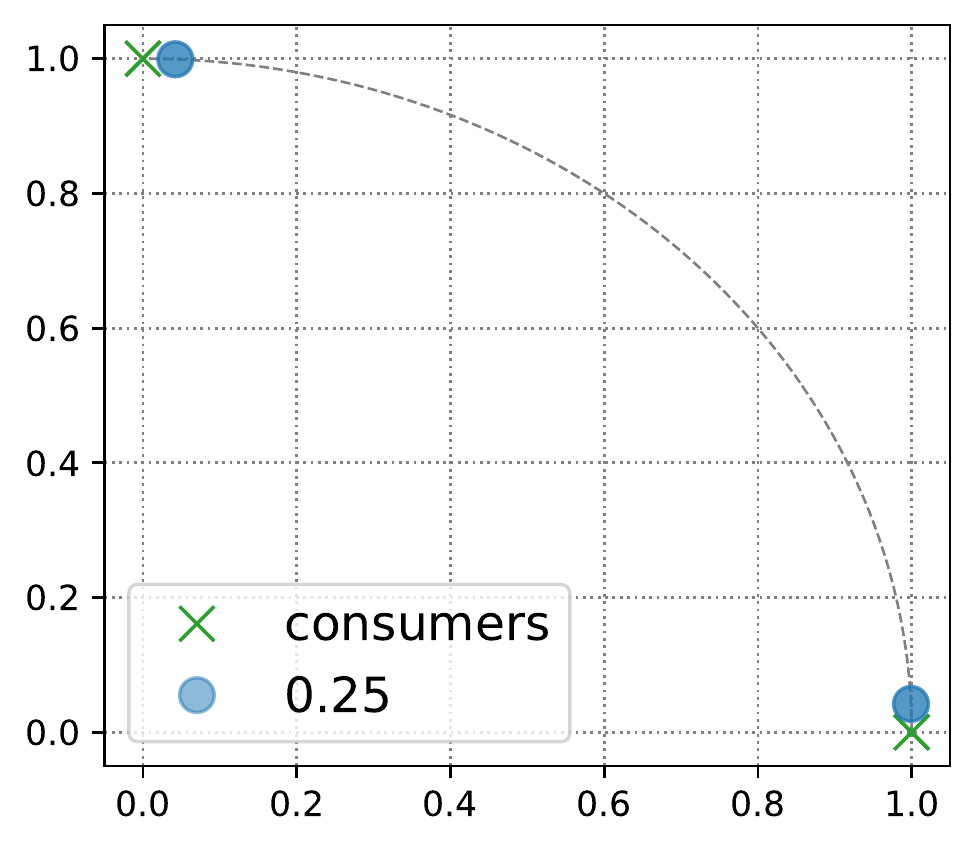}
        \includegraphics[keepaspectratio,width=0.225\textwidth]{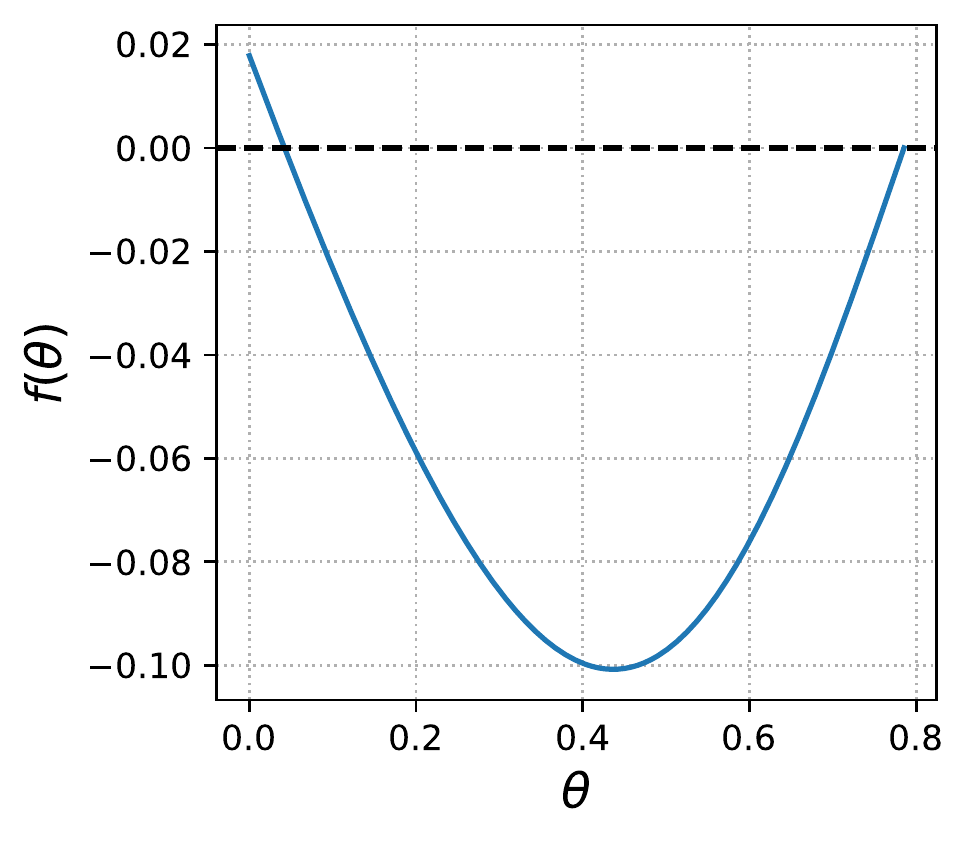}
        \includegraphics[keepaspectratio,width=0.5\textwidth]{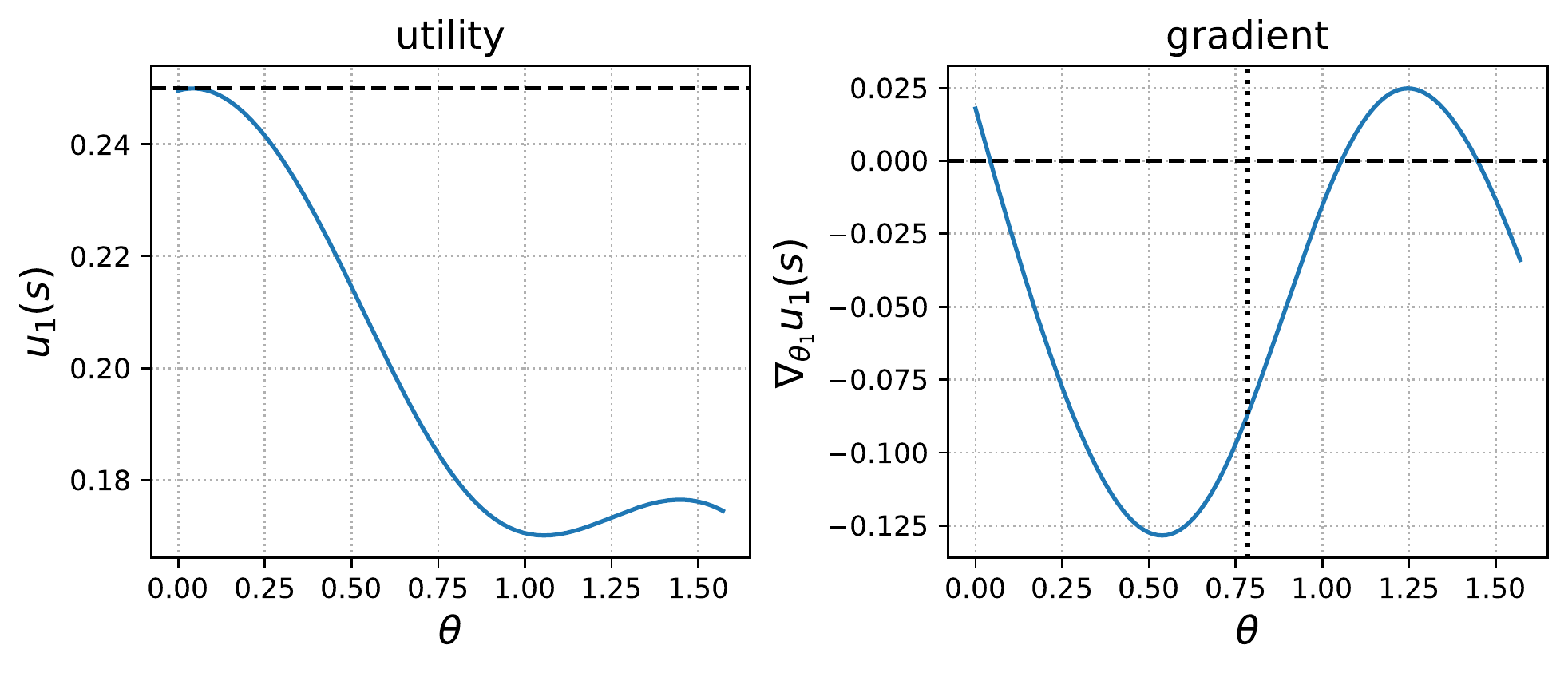}
    \end{center}
    \centering
    \caption{\looseness=-1
    \textbf{$\nProds > 2$ softmax case from the proof of \Cref{stm:nonneg_pne}.}
    \textbf{Left:} Symmetric PNE location (here for $\temp = \frac{1}{4}$). 
    \textbf{Middle left:} Plot of $f(\theta) = \langle g_1(s), Q s_1 \rangle$ with $s_1 = s_2 = \varphi(\theta)$ and $s_3 = s_4 = \varphi(\tfrac{\pi}{2} - \theta)$.
    \textbf{Right:} Plot of utility and its gradient for all possible defection strategies $s_1 = \varphi(\theta_1)$ with $s_2, s_3, s_4$ kept put in the positions defined by $\theta_\temp^\star$ from the left plots.
    Vertical line shows $\frac{\pi}{4}$ (right end of $K$).}
    \label{fig:twoloc_pne}
\end{figure}

\criticalPointsExist*
\begin{proof}[Proof of \Cref{stm:critical_points_exist}]
    When $s_1 = \cdots = s_n = \nconsumerSum$, the gradient from \Cref{eq:rescaled_gradient} is the same for all producers, and it is proportional to $(I - \nconsumerSum \nconsumerSum^\top) \consumerSum$. 
    This is equal to zero by $\nconsumerSum = \nicefrac{\consumerSum}{\| \consumerSum \|}$.
\end{proof}

\subsection{Inline statements}
\label{sect:inline_proofs}

\begin{lemma}\label{stm:mixed_ne_equilateral_triangle}
    The distribution from the part (I) of the proof of \Cref{stm:no_pne}---$d=2$, $\consDist = \frac{1}{3} \sum_{j=1}^3 \delta_{c_j}$ with $c_j \neq c_k$ $\tfrac{2\pi}{3}$ apart---admits a \emph{mixed} NE $P_1 = P_2 = \consDist$ at $\temp = 0$.
\end{lemma}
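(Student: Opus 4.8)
The plan is to combine the zero-sum-like constraint $\sum_i p_i(c) = 1$ with symmetry to pin down the equilibrium value, and then verify directly that no producer can beat it. Since exposure utilities sum to one, $\util_1(\consDist,\consDist) + \util_2(\consDist,\consDist) = 1$, and the symmetric profile gives each player $\util_i = \tfrac12$ (this uses uniform tie-splitting in the hardmax rule, so that $\sum_i p_i(c) = 1$ holds even when both producers sample the same atom). Because $\util_1(P_1,\consDist)$ is an expectation, hence \emph{linear}, in $P_1$, the best-response value against $\consDist$ equals $\sup_{s_1 \in S^1} \util_1(s_1,\consDist)$. It therefore suffices to show this supremum equals $\tfrac12$ and is attained on $\supp(\consDist)$; the statement then follows by symmetry in player $2$.

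Next I would compute the best-response value explicitly. With $P_2 = \consDist$ and $c \sim \consDist$, writing $p_1(c_j; s_1, c_k)$ for the hardmax indicator (ties split evenly) of $\langle c_j, s_1\rangle > \langle c_j, c_k\rangle$,
\[
    \util_1(s_1, \consDist) = \tfrac19 \sum_{j=1}^3 \sum_{k=1}^3 p_1(c_j; s_1, c_k) \, .
\]
I split the double sum according to whether $k = j$ or $k \neq j$, using $\langle c_j, c_k\rangle = 1$ for $k = j$ and $\langle c_j, c_k\rangle = \cos\tfrac{2\pi}{3} = -\tfrac12$ for $k \neq j$. The $k = j$ terms vanish unless $s_1 = c_j$, in which case they contribute a single tie of $\tfrac12$; each $k \neq j$ term equals $w_j \coloneqq \indicator{\langle c_j, s_1\rangle > -\tfrac12} + \tfrac12\indicator{\langle c_j, s_1\rangle = -\tfrac12}$, independent of $k$. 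This yields $\util_1(s_1,\consDist) = \tfrac19\bigl[\tfrac12\sum_j \indicator{s_1 = c_j} + 2\sum_j w_j\bigr]$.

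The geometric heart is a counting claim: for every $s_1 \in S^1$, at most two of the three $c_j$ satisfy $\langle c_j, s_1\rangle > -\tfrac12$, the third obeying $\langle c_j, s_1\rangle \leq -\tfrac12$, with equality only when $s_1$ coincides with some $c_j$. This holds because the three $c_j$ cut $S^1$ into three arcs of angular width $\tfrac{2\pi}{3}$, so the angular distance from any $s_1$ to its farthest $c_j$ lies in $[\tfrac{2\pi}{3}, \pi]$, forcing $\langle c_{\mathrm{far}}, s_1\rangle \leq \cos\tfrac{2\pi}{3} = -\tfrac12$. Consequently $\sum_j w_j = 2$ for generic $s_1$ (two strict wins, one strict loss), giving $\util_1(s_1,\consDist) = \tfrac49 < \tfrac12$; and at $s_1 = c_j$ the two neighbouring points sit \emph{exactly} at $-\tfrac12$ (each contributing $w = \tfrac12$) while the coincident point contributes $w = 1$, so again $\sum_j w_j = 2$, and adding the $\tfrac12$ collision tie gives $\util_1(c_j,\consDist) = \tfrac19(\tfrac12 + 4) = \tfrac12$.

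Hence $\sup_{s_1}\util_1(s_1,\consDist) = \tfrac12$, attained exactly on $\{c_1,c_2,c_3\} = \supp(\consDist)$; since $\consDist$ places all its mass on these maximizers, it is a best response to itself, and by symmetry $(\consDist,\consDist)$ is a mixed NE. I expect the main obstacle to be the tie-breaking bookkeeping at the boundary $s_1 = c_j$: one must carefully track both the $\langle c_j, s_1\rangle = -\tfrac12$ edge cases and the three-way collision term to confirm the best-response value is \emph{exactly} $\tfrac12$ and \emph{attained}, rather than merely bounded by it. Establishing attainment on $\supp(\consDist)$ is precisely what upgrades $\consDist$ from an approximate to a genuine best response, and it depends on the uniform tie-splitting convention used in the equilibrium-value computation.
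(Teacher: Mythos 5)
Your proof is correct and takes essentially the same route as the paper's: both reduce to checking pure deviations $s_1$ against the mixed strategy $\consDist$ (the paper implicitly, you via explicit linearity), and both arrive at the same values---exactly $\tfrac12$ when $s_1 \in \supp(\consDist)$ and $\tfrac49 < \tfrac12$ for any $s_1$ off the support---so that $\consDist$ places all its mass on best responses. The only difference is bookkeeping: the paper conditions on the opponent's atom $s_2 = c_k$ and argues geometrically per realization, whereas you exchange the order of summation and count, per consumer $c_j$, whether $\langle c_j, s_1 \rangle$ clears the threshold $-\tfrac12$; your handling of the tie cases at $s_1 = c_j$ is, if anything, more explicit than the paper's.
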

\begin{proof}[Proof of \Cref{stm:mixed_ne_equilateral_triangle}]
    By symmetry, $\util_i(\consDist, \consDist) = \tfrac{1}{2}$, $\forall i$.
    Since for any $s_1 \in \supp(\consDist) = \{ c_1, c_2, c_3 \}$
    \begin{align*}
        \E_{c, s_2} [u_1(s_1, s_2) \given s_1]
        =
        \tfrac{1}{3} [
            u_1(s_1, c_1) + u_1(s_1, c_2) + u_1(s_1, c_3)
        ]
        =
        \tfrac{1}{3} [
            1 \cdot \tfrac{1}{2}
            +
            2 \cdot \tfrac{1}{3} (1 + \tfrac{1}{2} + 0)
        ]
        = 
        \tfrac{1}{2}
        \, ,
    \end{align*}
    all we need is to show that $\E_{c, s_2} [u_1(s_1, s_2) \given s_1] \leq \tfrac{1}{2}$ for any $s_1 \notin \supp(\consDist)$.
    W.l.o.g.\ assume $s_1$ lies on the geodesic connecting $c_1$ and $c_3$ (i.e., on the arc opposite of $c_2$). 
    Such an $s_1$ is closer to $c_1$ and $c_3$ than $c_2$ ($u_1(s_1, c_2) = \tfrac{2}{3}$), but is further from $c_1$ and $c_2$ (resp.\ $c_3$ and $c_2$) than $c_3$ (resp.\ $c_1$).
    Hence
    \begin{align*}
        \E_{c, s_2} [u_1(s_1, s_2) \given s_1]
        =
        \tfrac{1}{3} [
            u_1(s_1, c_1) + u_1(s_1, c_2) + u_1(s_1, c_3)
        ]
        =
        \tfrac{1}{3} [
            1 \cdot \tfrac{2}{3}
            +
            2 \cdot \tfrac{1}{3}
        ]
        =
        \tfrac{4}{9}
        \, .
    \end{align*}
    Since $\tfrac{4}{9} < \tfrac{1}{2}$, $s_1$ has no incentive to move any of its mass away from $\supp(\consDist)$.
\end{proof}

\section{Hardmax games}
\label{app:hardmax}

In this section we present two different algorithms for finding mixed Nash equilibria in two-player hardmax games. We note that the set of allowable mixed strategies must be restricted in some way since certain distributions with support on the unit-sphere $S^{d-1}$ require infinite storage. Hence, our first algorithm finds a mixed NE for a discretized strategy space, while our second algorithm considers settings where $P_c$ is discrete and finds a mixed NE with support over a finite number of pure strategies in the original non-discretized space, assuming such a mixed NE exists.

We caution that both of these algorithms can only find mixed NE for small exposure games due to their poor scaling properties. We list them here to highlight the difficulty of solving hardmax games when compared to the softmax setting and to serve as inspiration for future research into more efficient algorithms. 

\subsection{Discretized Games}

We first consider the setting where both players may only choose mixed strategies with support over a finite subset $A = \{s^{(1)}, s^{(2)}, \ldots, s^{(m)}\}\subset S^{d-1}$ of pure strategies. This setting includes embeddings that are represented using floating point numbers although $A$ will be very large. In this case the mixed strategy of the players can be expressed as an $m$-dimensional probability vector $\pi_i$ with $\pi_{ij} = P_i\left(s^{(j)}\right)$. Since there are a finite set of pure strategies a mixed NE is guaranteed to exist \citep{nash1950equilibrium}. Furthermore since this is a two-player constant-sum game we can find a mixed NE by solving the following linear program \citep{dorfman1951application}
\begin{equation*}
\begin{aligned}
& \underset{\alpha}{\text{maximize}}
& & \alpha \\
& \text{subject to}
& & U x \geq \alpha\mathbf{1} \\
&& & \mathbf{1}^\top  x = 1\\
&& & x_i \ge 0, \; i = 1, \ldots, m,
\end{aligned}
\end{equation*}
where $U_{ij} = u_1\left(s^{(i)}, s^{(j)}\right)$.
The strategies where $\pi_1 = \pi_2 = x$ correspond to a mixed NE. 
While such a problem is simple to formulate and solve, the number of possible strategies grows rapidly with $d$ for most discretization schemes. 
For example, we might create a uniform grid of $k$ points over each spherical coordinate, in which case we will have $m = k^{d-1}$ pure strategies to consider.

\subsection{Finite Support}

\looseness=-1
Next, we consider the setting where both players choose mixed strategies with support over at most $m$ pure strategies, and the support of $\consDist$ is over $l$ points, $\supp(P_c) = \{c_1, c_2, \ldots, c_l\}$. Unlike in the discretized case, the players may choose any pure strategy that lies on $S^{d-1}$. We begin by outlining a method that, given a mixed strategy $P$, finds all pure strategies $D$ that dominate it: $\mathbb{E}_{s_1 \sim P}[u_1(s_1, s_2) - u_2(s_1, s_2)] < 0$ for all $s_2 \in D$. By symmetry, we w.l.o.g.\ assume Player~1 provides the mixed strategy. 
We will then use this method as a subroutine to find a mixed NE.

\begin{lemma}\label{stm:mixed_ne_condition}
    $(P_1, P_2)$ is a mixed NE if and only if $\mathbb{E}_{s_1 \sim P_1}[u_1(s_1, s)] \geq \frac{1}{2}$ and $\mathbb{E}_{s_2 \sim P_2}[u_2(s, s_2)] \geq \frac{1}{2}$ for all pure strategies $s \in S^{d-1}$.
\end{lemma}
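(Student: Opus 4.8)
The plan is to lean entirely on two algebraic structural facts that hold even in the discontinuous hardmax case, so that no continuity is needed: the game is \emph{constant-sum}, $u_1(s_1,s_2)+u_2(s_1,s_2)=1$ for all pure $(s_1,s_2)$ (hence also in expectation under any mixed profile), and it is \emph{symmetric}, $u_2(x,y)=u_1(y,x)$. Write $v \coloneqq \E_{s_1\sim P_1,\,s_2\sim P_2}[u_1(s_1,s_2)]$ for player~1's utility at $(P_1,P_2)$. Because the mixed-extension utility is linear in each player's distribution, every statement about mixed deviations reduces to a statement about pure strategies, which is what lets the pure-strategy conditions in the lemma control arbitrary mixed deviations.

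For the ``if'' direction I would first extract the value. The first condition at $s=s_2$, averaged over $s_2\sim P_2$, gives $v\ge\tfrac12$; the second condition, averaged over $s_1\sim P_1$, gives $\E[u_2]\ge\tfrac12$, and since $\E[u_2]=1-v$ by constant-sum this yields $v\le\tfrac12$, so $v=\tfrac12$. To rule out profitable deviations, I rewrite the second condition via constant-sum as $\E_{s_2\sim P_2}[u_1(s,s_2)]\le\tfrac12$ for every pure $s$; by linearity every mixed deviation $Q_1$ then satisfies $\E_{s_1\sim Q_1,\,s_2\sim P_2}[u_1]\le\tfrac12=v$, so player~1 cannot improve, and the first condition symmetrically caps player~2. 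Hence $(P_1,P_2)$ is a Nash equilibrium.

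For the ``only if'' direction, assume $(P_1,P_2)$ is a NE with value $v$. Player~2 having no profitable pure deviation against $P_1$ means $\E_{s_1\sim P_1}[u_2(s_1,s)]\le 1-v$ for all pure $s$, i.e.\ $\E_{s_1\sim P_1}[u_1(s_1,s)]\ge v$; likewise player~1's stability against $P_2$ gives $\E_{s_2\sim P_2}[u_2(s,s_2)]\ge 1-v$ for all pure $s$. These are exactly the two claimed inequalities once $v=\tfrac12$ is established, so everything hinges on pinning the value.

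The main obstacle is precisely this last step: the NE inequalities above are invariant under the choice of $v$, so symmetry must be used to fix $v=\tfrac12$. The idea is to apply $u_2(x,y)=u_1(y,x)$ to rewrite the second inequality as $\E_{s_2\sim P_2}[u_1(s_2,s)]\ge 1-v$, then integrate the first inequality against $s\sim P_1$ and the rewritten second against $s\sim P_2$. Each resulting integral has the self-play form $u_1(P,P)=\E_{s_1,s\sim P}[u_1(s_1,s)]$, which equals $\tfrac12$ for any distribution $P$: since $s_1$ and $s$ are exchangeable i.i.d.\ draws and $u_1(s_1,s)+u_1(s,s_1)=u_1(s_1,s)+u_2(s_1,s)=1$ by symmetry and constant-sum, we get $2u_1(P,P)=1$. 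This forces $\tfrac12\ge v$ and $\tfrac12\ge 1-v$ simultaneously, hence $v=\tfrac12$, completing the forward direction and the proof.
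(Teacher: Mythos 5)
Your proof is correct, and its overall skeleton matches the paper's: both directions reduce mixed deviations to pure ones via linearity, and both lean on the constant-sum identity $u_1+u_2=1$ together with the symmetry $u_2(x,y)=u_1(y,x)$. The one place you genuinely go beyond the paper is the ``only if'' direction: the paper simply asserts ``by definition'' that at a mixed NE player~1's strategy guarantees $\mathbb{E}_{(s_1,s_2)\sim(P_1,P)}[u_1(s_1,s_2)]\geq\tfrac{1}{2}$ against every opponent strategy $P$, which silently bundles two nontrivial facts---that the equilibrium value of this symmetric constant-sum game is $\tfrac{1}{2}$, and that equilibrium strategies are security strategies guaranteeing that value. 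You prove both: the security-strategy property falls out of rewriting the \emph{opponent's} no-deviation condition through $u_2 = 1-u_1$, and the value is pinned to $\tfrac{1}{2}$ by your self-play exchangeability argument, $2\,u_1(P,P)=\mathbb{E}[u_1(s_1,s)+u_1(s,s_1)]=\mathbb{E}[u_1(s_1,s)+u_2(s_1,s)]=1$, applied to $P_1$ and $P_2$ separately. This makes your write-up strictly more self-contained than the paper's, at no cost in generality (it needs only boundedness and joint measurability of $u_1$, so it covers the discontinuous hardmax case, which is exactly the setting where the lemma is used).
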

\begin{proof}[Proof of \Cref{stm:mixed_ne_condition}]
    Assume $(P_1, P_2)$ is a mixed NE, then by definition $\mathbb{E}_{(s_1, s_2) \sim (P_1, P)}[u_1(s_1, s_2)] \geq \frac{1}{2}$ for all mixed strategies $P \in \mathcal{P}(S^{d-1})$, since each pure strategy is also a mixed strategy it follows that $\mathbb{E}_{s_1 \sim (P_1, s)}[u_1(s_1, s)] \geq \frac{1}{2}$ for all $s \in S^{d-1}$. Similarly for Player~2.
    
    Now assume we have two mixed strategies $(P_1, P_2)$ such that $\mathbb{E}_{s_1 \sim P_1}[u_1(s_1, s)] \geq \frac{1}{2}$ and $\mathbb{E}_{s_2 \sim P_2}[u_2(s, s_2)] \geq \frac{1}{2}$ for all $s \in S^{d-1}$. Given a mixed strategy $P \in \mathcal{P}(S^{d-1})$ it follows that
    \begin{align*}
        \mathbb{E}_{(s_1, s_2) \sim (P_1, P)}[u_1(s_1, s_2)] &= \mathbb{E}_{s_2 \sim P} [\mathbb{E}_{s_1 \sim P_1}[u_1(s_1, s_2)]]\\
        &= \int_{S^{d-1}} \mathbb{E}_{s_1 \sim P_1}[u_1(s_1, s_2)] dP(s_2)\\
        &\ge \int_{S^{d-1}} \frac{1}{2} dP(s_2) = \frac{1}{2}.\\
    \end{align*}
    Similarly for Player~2.
\end{proof}
Lemma~\ref{stm:mixed_ne_condition} allows us to only consider pure strategies when checking if strategies are mixed NE. 

Now given a mixed strategy $P$ with finite support $\supp(P) = \{s^{(1)}, s^{(2)}, \ldots, s^{(m)}\}$ we can find every subset of $S^{d-1}$ that does not satisfy the condition in Lemma~\ref{stm:mixed_ne_condition}.
By noting that any arbitrary strategy $s$ can be either closer, farther, or at the same distance from a consumer as a given $s^{(i)}$; we see that each $s^{(i)}$ partitions $S^{d-1}$ into $3^l$ disjoint partitions based upon the distance of the strategies to each consumer $c_k$. That is, $\mathcal{X}^{(i)} = \{X^{(i)}_1, X^{(i)}_2, \ldots, X^{(i)}_{3^l}\}$, with $X^{(i)}_j$ satisfying
\begin{align*}
j_k =
    \begin{cases}
         2 \texttt{ if } \langle s^{(i)}, c_k \rangle > \langle s, c_k \rangle\\
         1 \texttt{ if } \langle s^{(i)}, c_k \rangle = \langle s, c_k \rangle\\
         0 \texttt{ if } \langle s^{(i)}, c_k \rangle < \langle s, c_k \rangle,
    \end{cases}
\end{align*}
for all pure strategies $s \in X^{(i)}_j$, where $j_k$ is the $k$-th digit in the ternary representation of $j$. By considering all $m$ partitions created by the strategies in $\supp(P)$, we can further partition the space into $3^{lm}$ disjoint partitions $\mathcal{Y} = \{Y_1, Y_2, \ldots, Y_{3^{lm}}\}$ with $Y_{i} = \bigcap_{j=1}^m X^{(j)}_{i_j}$ where $i_j$ is the $j$-th digit of the $3^l$-ary representation of $i$.

For every $Y \in \mathcal{Y}$ we have $\mathbb{E}_{s_1 \sim P}[u_1(s_1, s)] = \mathbb{E}_{s_1 \sim P}[u_1(s_1, s')]$ for all $s, s' \in Y$ by construction. Thus, we can find the set of all pure strategies $D$ that dominate $P$ by iterating over $\mathcal{Y}$, testing a single point in each partition, and taking unions:
\[Z = \left\{Y \in \mathcal{Y} : s \in Y \implies \mathbb{E}_{s_1 \sim P}[u_1(s_1, s)] < \frac{1}{2}\right\},\; D = \bigcup_{Y \in Z} Y.\]
It follows from Lemma~\ref{stm:mixed_ne_condition} that $(P, P)$ is a mixed NE if and only if $D$ is empty.

Finally, we outline a method to find mixed NE. We first note that for every positive integer $m$, every pure strategy $s \in S^{d-1}$ defines a feasible set $F_s$ of all mixed strategies with support over at most $m$ pure strategies that are not dominated by $s$, that is:
\[F_s = \left\{P = \sum_{i=1}^m \pi_i \delta_{s^{(i)}} : \sum_{i=1}^m \pi_i u_1(s^{(i)}, s) \ge \frac{1}{2}\right\},\]
where $\pi$ is an $m$-dimensional probability vector.
It follows from Lemma~4 that if $P$ is mixed strategy with support over at most $m$ points then $(P, P)$ is a mixed NE if and only if $P \in \bigcap_{s \in S^{d-1}} F_s.$ We can frame finding such a strategy $P$ as an optimization problem
\begin{equation*}
\begin{aligned}
& \underset{\mathbf{P} \subset \mathcal{P}}{\text{minimize}}
& & |\mathbf{P}| \\
& \text{subject to}
& & \mathbf{P} \cap F_s \neq \emptyset ,\; s \in S^{d-1},
\end{aligned}
\end{equation*}
where $\mathcal{P}$ is the set of all mixed strategies with support over at most $m$ pure strategies. An optimal solution with more than one element in $\mathbf{P}$ indicates that there does not exist a mixed strategy with support over $m$ points or fewer, whereas if $|\mathbf{P}| = 1$ then $(P, P)$ is a mixed strategy where $P$ is the singleton element in $\mathbf{P}$.

This is an instance of the \emph{implicit hitting set problem}. Hence, we can use the algorithm proposed in Section~2.1 of \citet{chandrasekaran2011algorithms} to solve the above optimization problem. Their algorithm assumes an oracle that, given a proposed subset $\mathbf{P} \subseteq \mathcal{P}$ will return a subset $F_s$ that is not hit $\mathbf{P} \cap F_s = \emptyset$ or will certify $\mathbf{P}$ as a feasible solution to the above optimization problem. We can easily achieve this by finding all dominating pure NE using our proposed method above for each $P \in \mathbf{P}$ and taking the intersection of the resulting sets. If the intersection is empty then $\mathbf{P}$ is a feasible solution, otherwise every element in the intersection represents a subset $F_s$ that has not been hit by $\mathbf{P}$.

\section{Experiments}
\label{app:experiments}

\subsection{Setup}
\label{app:experimental_setup}

The LastFM dataset was
preprocessed by \citet{shakespeare2020exploring}.
Original larger scale sweep was executed with $\nProds \in \{ 10, 25, 100, 500 , 1500 \}$, $d \in \{ 3, 10, 50, 100 \}$, stepsize in $\{ 10^{-3} , 10^{-2}, 10^{-1} \}$, and $\temp \in \{ 10^{-2}, 10^{-1}, 0.25, 0.5, 1.0 \}$.
We only used 2 random seeds for the recommender, and 3 random seeds for our LNE-finding algorithm (i.e., 6 runs in total per configuration).
For the reported results, stepsize sweep was restricted to $\{ 10^{-2}, 10^{-1} \}$; the number of steps was upper bounded by 50,000 (all runs have successfully converged to a fixed point as mentioned).
While our code contains an option to \texttt{scale\_lr\_by\_temperature} (see the \texttt{config.py} file in the provided code), which multiplies the stepsize by $\temp$ before its use, we did not use this option in the experiments.

The second-order Riemannian test (\Cref{def:second_order}) is implemented in \texttt{manifold.py}.
Defining the tangent space projection $\Pi_i \coloneqq (I - s_i s_i^\top)$,
we consider a candidate strategy profile $s \in (S^{d-1})^\nProds$ as \emph{violating} the second order test if any of the Riemannian gradients $\Pi_i \nabla_{s_i} u_i (s)$ had $\ell^2$-norm higher than $10^{-5} \cdot \sqrt{d}$, \emph{or} the Riemannian Hessian $\Pi_i [\nabla_{s_i}^2 u_i(s)] \Pi_i - \langle s_i, \nabla_{s_i} u_i (s) \rangle \Pi_i$ had a strictly positive eigenvalue (no tolerance used here).

The final MovieLens and LastFM experiments were run on 72 AWS machines, each with 4 CPU cores, for 5 hours.
Including preliminary and failed runs, we used over 50K CPU hours.

\subsection{Optimizer}
\label{app:optimizer}

The gradient ascent optimization technique \citep{singh2000nash,balduzzi2018mechanics} we employ is very similar to standard gradient descent algorithm from machine learning literature.
Here we provide a short description of the similarities and differences between the two.

\looseness=-1
The optimizer we use \emph{simultaneously} runs $\nProds$ independent gradient descent optimizers, each following the gradient of the utility $\util_i(s)$ w.r.t.\ $\theta_i$, $i \in [\nProds]$, as described around \Cref{eq:rescaled_gradient} (recall $s_i = \nicefrac{\theta_i}{ \| \theta_i \|}$).
$\theta_{i, t+1}$ is obtained using $\theta_{j, t}$ for all $j \neq i$, i.e., the locations of the other producers from the last step.
All $\nProds$ optimizers execute these steps at the same time, iterating until \emph{all} of them converge.
See \texttt{optimisation.py}, particularly the \texttt{optax\_minimisation} method, for more details.

\subsection{Additional plots}
\label{app:add_results}

\Cref{app:lne_plots} contains plots where the second-order test confirmed and LNE. 
\Cref{app:other_plots} then offers comparison to a third ranking algorithm: standard matrix factorization \citep[MF;][]{koren2009matrix}, i.e., PMF with additional bias terms.
The bias terms effect interpretation of $\temp$ values, and we also ignore them when running the LNE-finding algorithm.
This makes the comparison with PMF and NMF difficult, which is why we excluded MF from the main text.
Results in \Cref{app:other_plots} again contain runs where the second-order test did not rule out a LNE.

\subsubsection{LNE confirmed by the second-order test}
\label{app:lne_plots}

As mentioned, the plots shown in the main body of the paper are for runs where the second-order Riemannian test did not rule out that the found pure strategy profile is a LNE.
Here we show exactly the same plots with only the runs where the test confirmed a LNE.
The difference is that here we exclude the runs where the Riemannian Hessian had at least one zero eigenvalue associated with a direction perpendicular to $s_i$, for at least one $i \in [n]$.
As you see below, this had little effect on the LastFM results, but has non-negligibly reduced the number of admitted runs for MovieLens. 

\begin{figure}[h]
    \centering
    \includegraphics[keepaspectratio,width=0.49\textwidth]{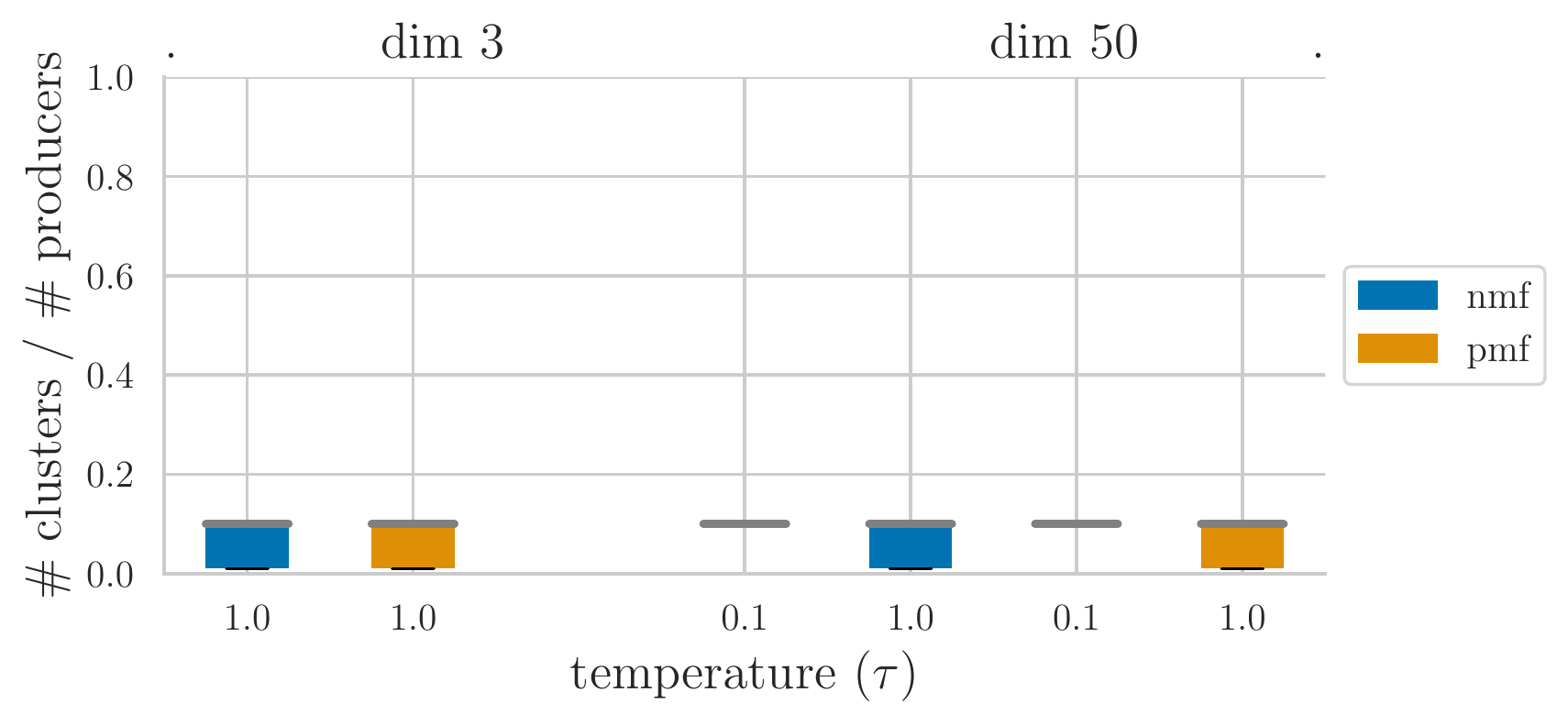}%
    \includegraphics[keepaspectratio,width=0.49\textwidth]{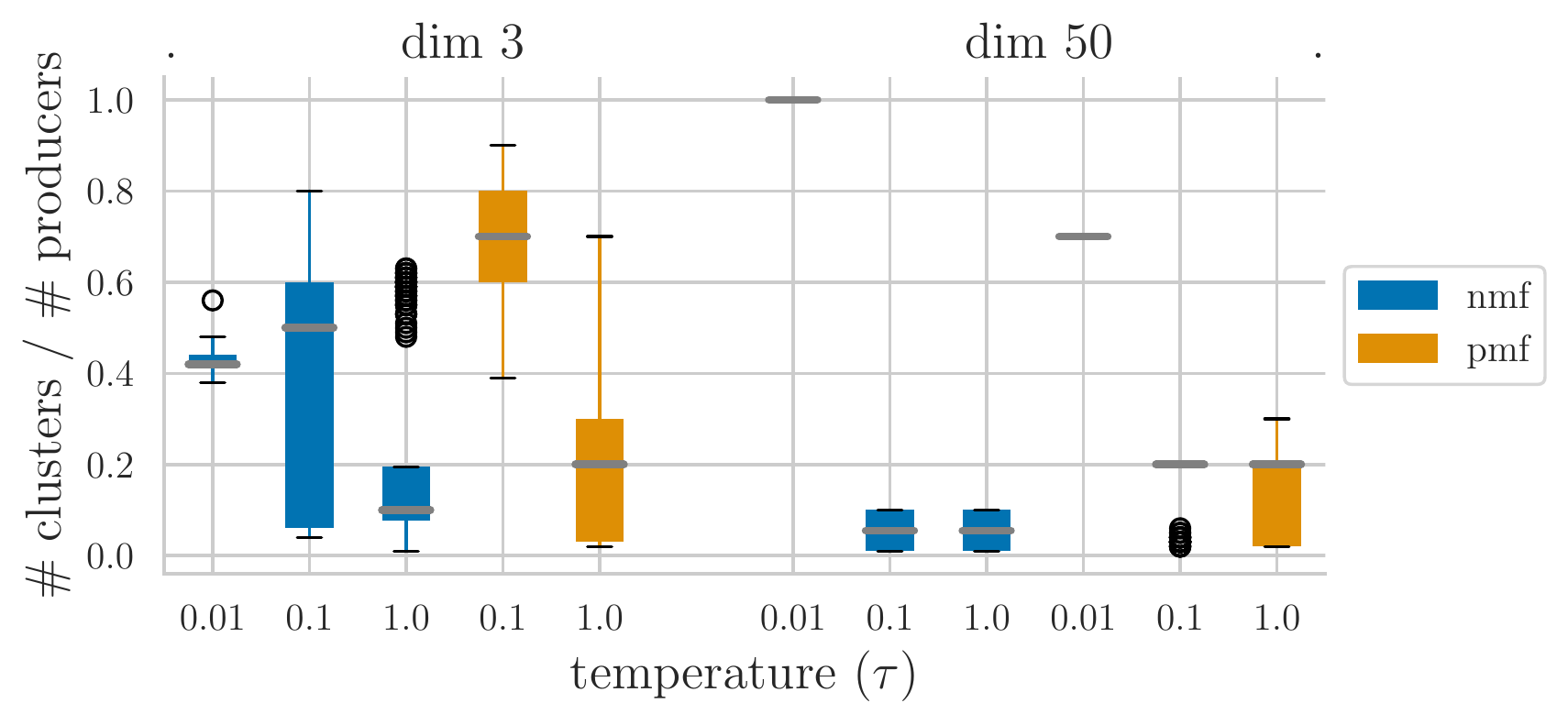}
    \caption{
    A counterpart to \Cref{fig:cluster_emergence} with runs where LNE test was inconclusive excluded.}
    \label{fig:lne_cluster_emergence}
\end{figure}

\begin{figure}[h]
    \centering
    \includegraphics[keepaspectratio,width=0.485\textwidth]{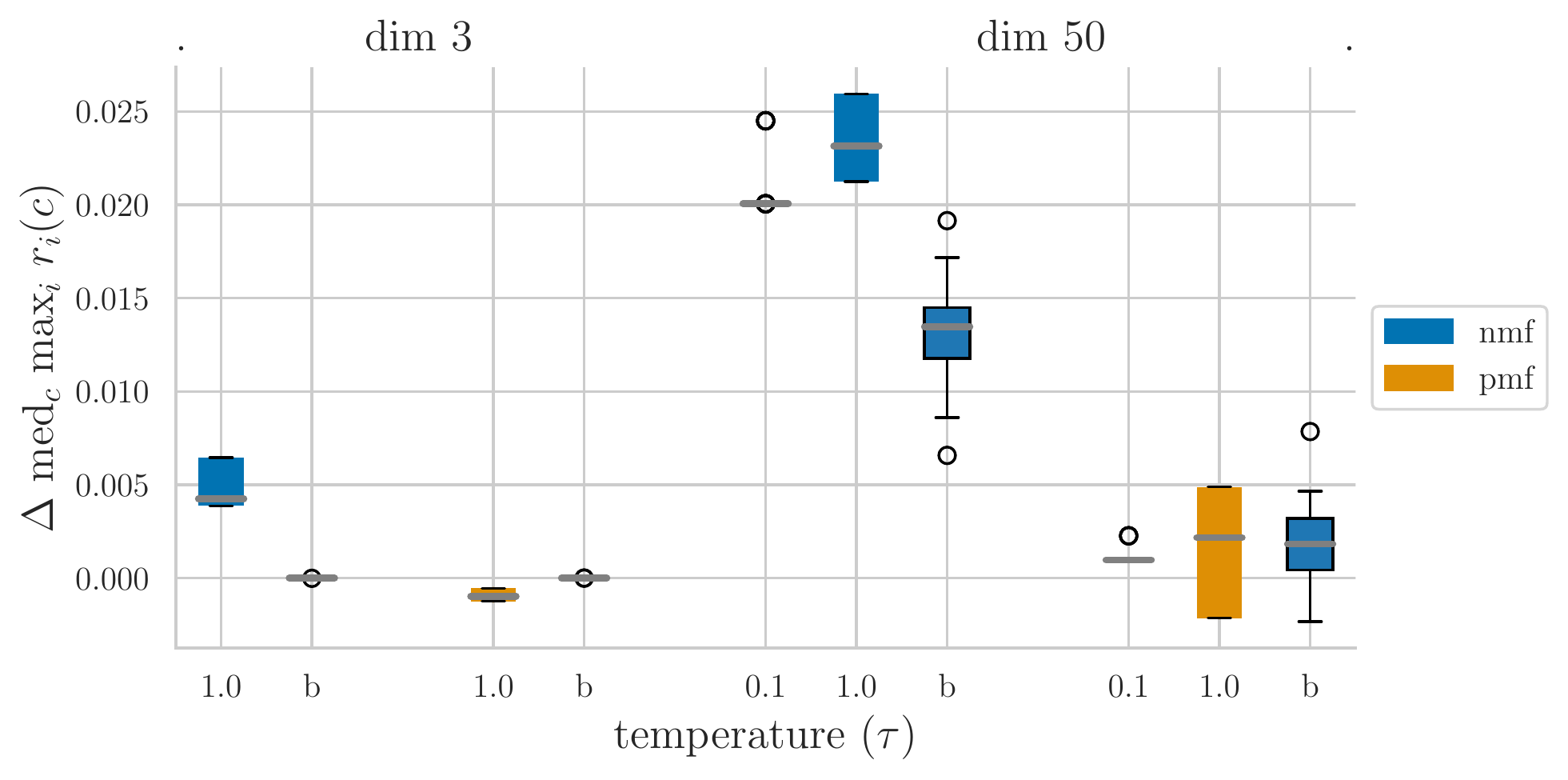}%
    \includegraphics[keepaspectratio,width=0.485\textwidth]{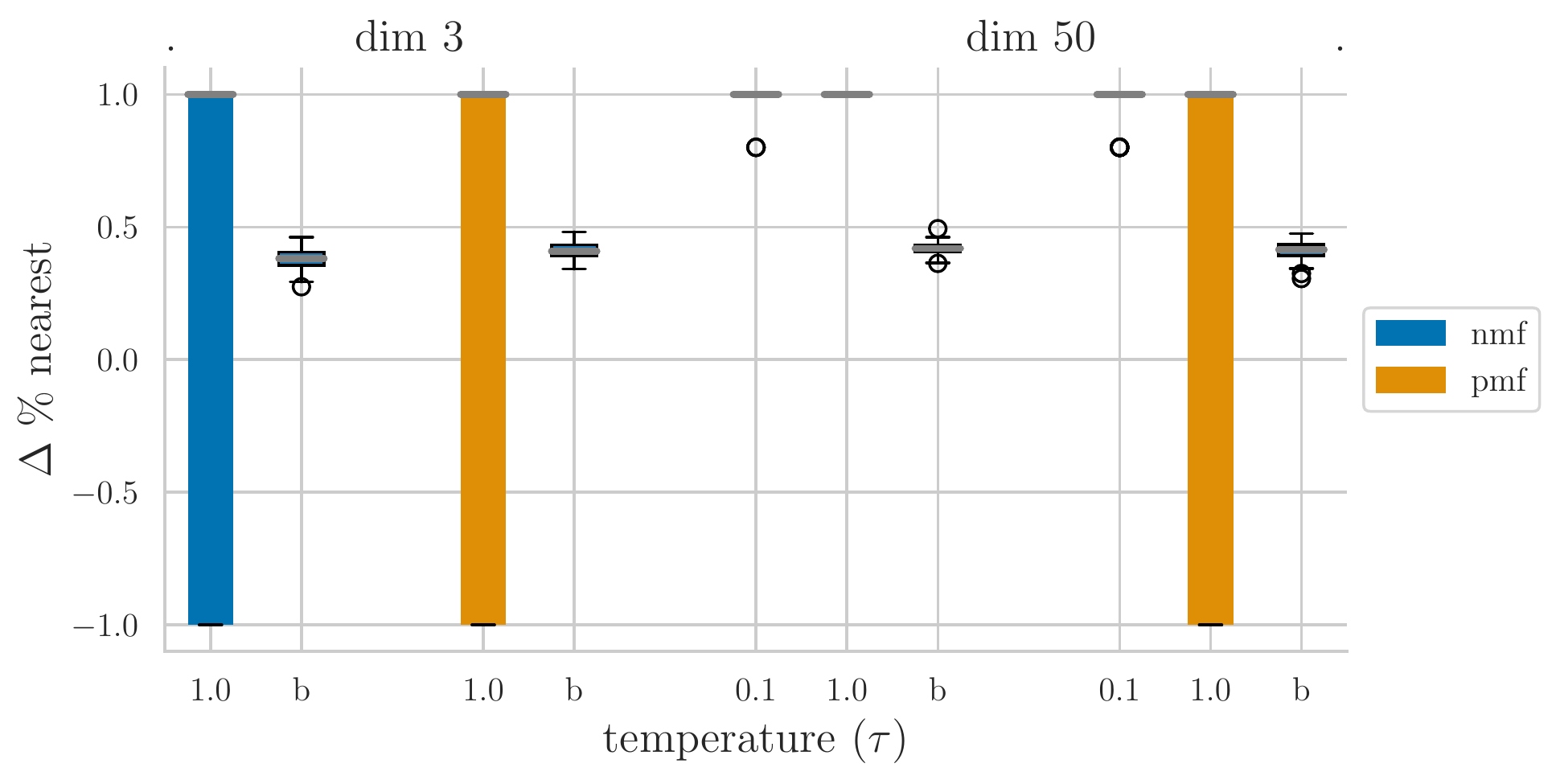}
    \caption{
    A counterpart to \Cref{fig:movielens} with runs where LNE test was inconclusive excluded.}
    \label{fig:lne_movielens}
\end{figure}

\begin{figure}[h]
    \centering
    \includegraphics[keepaspectratio,width=0.485\textwidth]{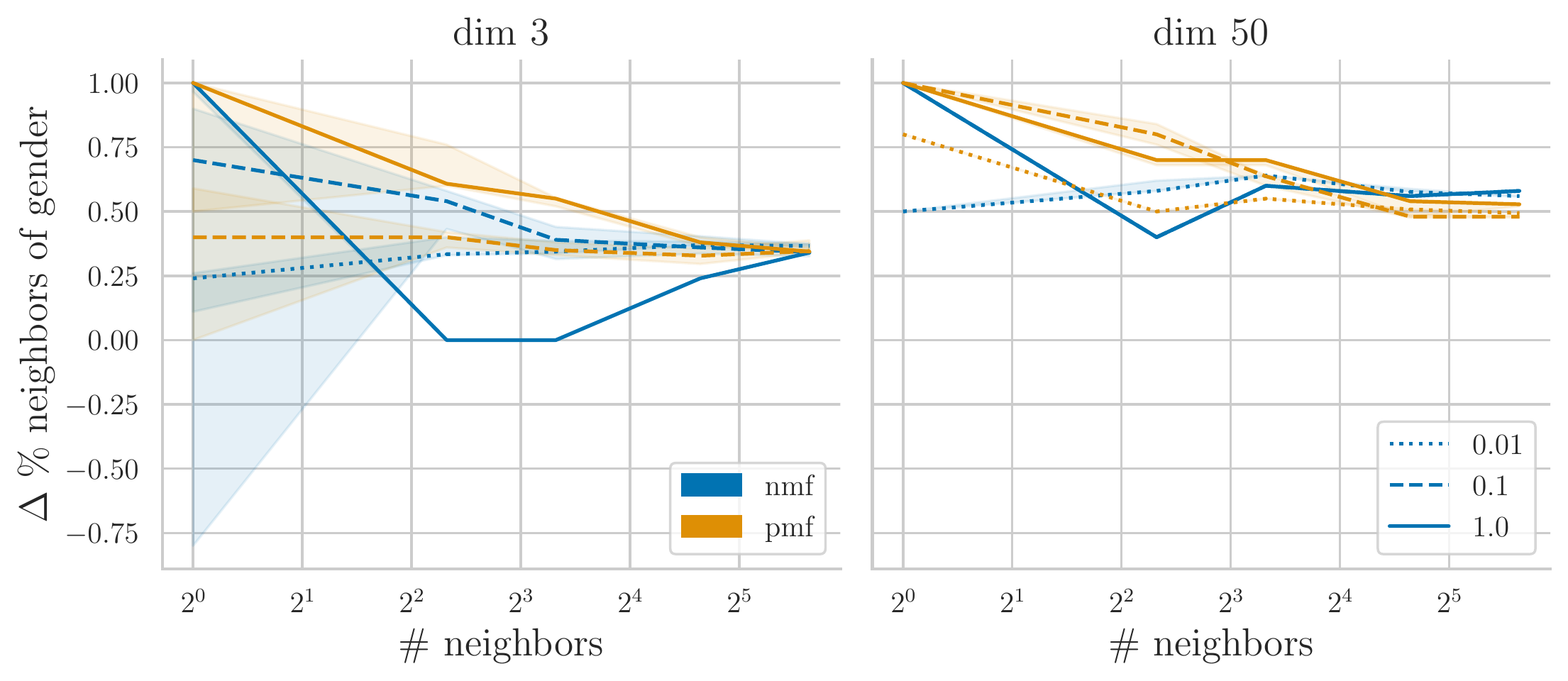}%
    \includegraphics[keepaspectratio,width=0.485\textwidth]{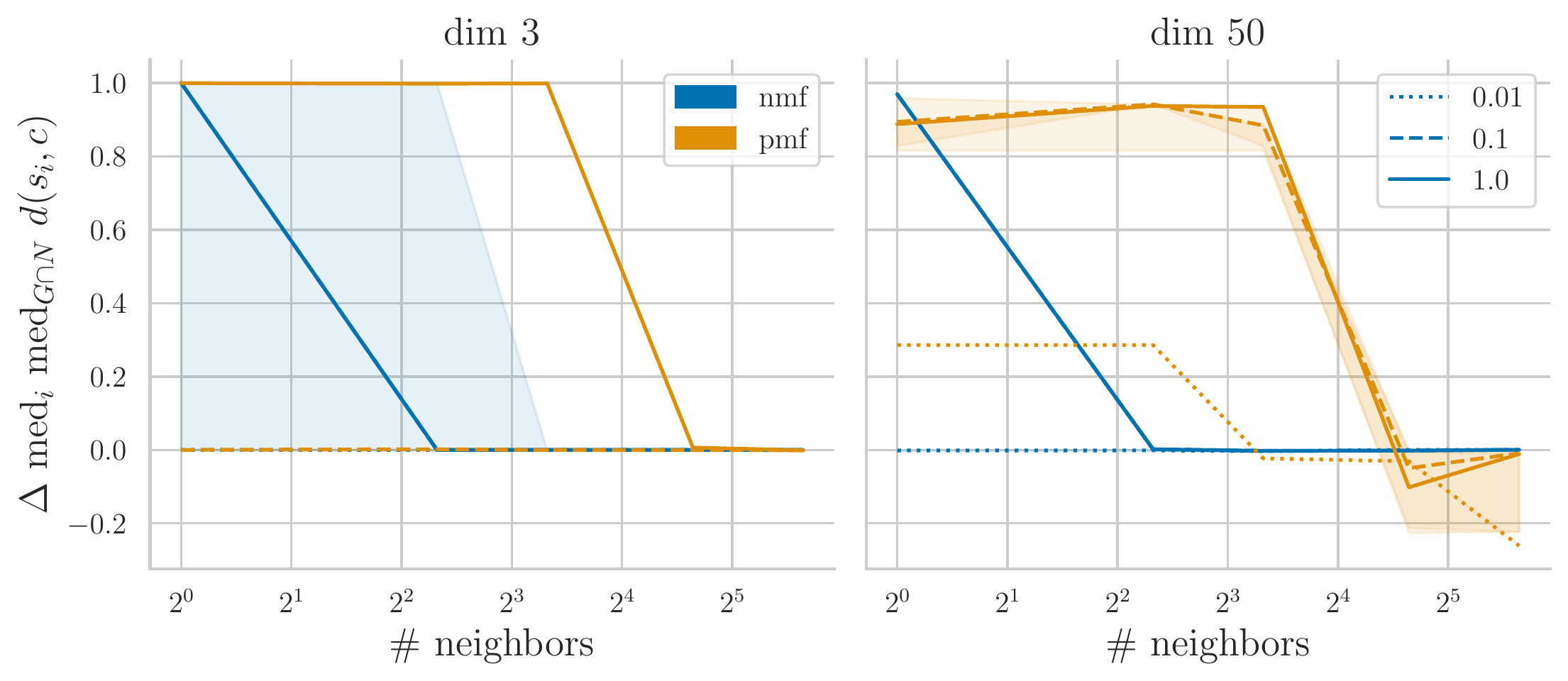}
    \caption{
    A counterpart to \Cref{fig:lastfm} with runs where LNE test was inconclusive excluded.
    }
    \label{fig:lne_lastfm}
\end{figure}

\subsubsection{Matrix factorization (PMF with biases) results}
\label{app:other_plots}

\begin{figure}[h]
    \centering
    \includegraphics[keepaspectratio,width=0.49\textwidth]{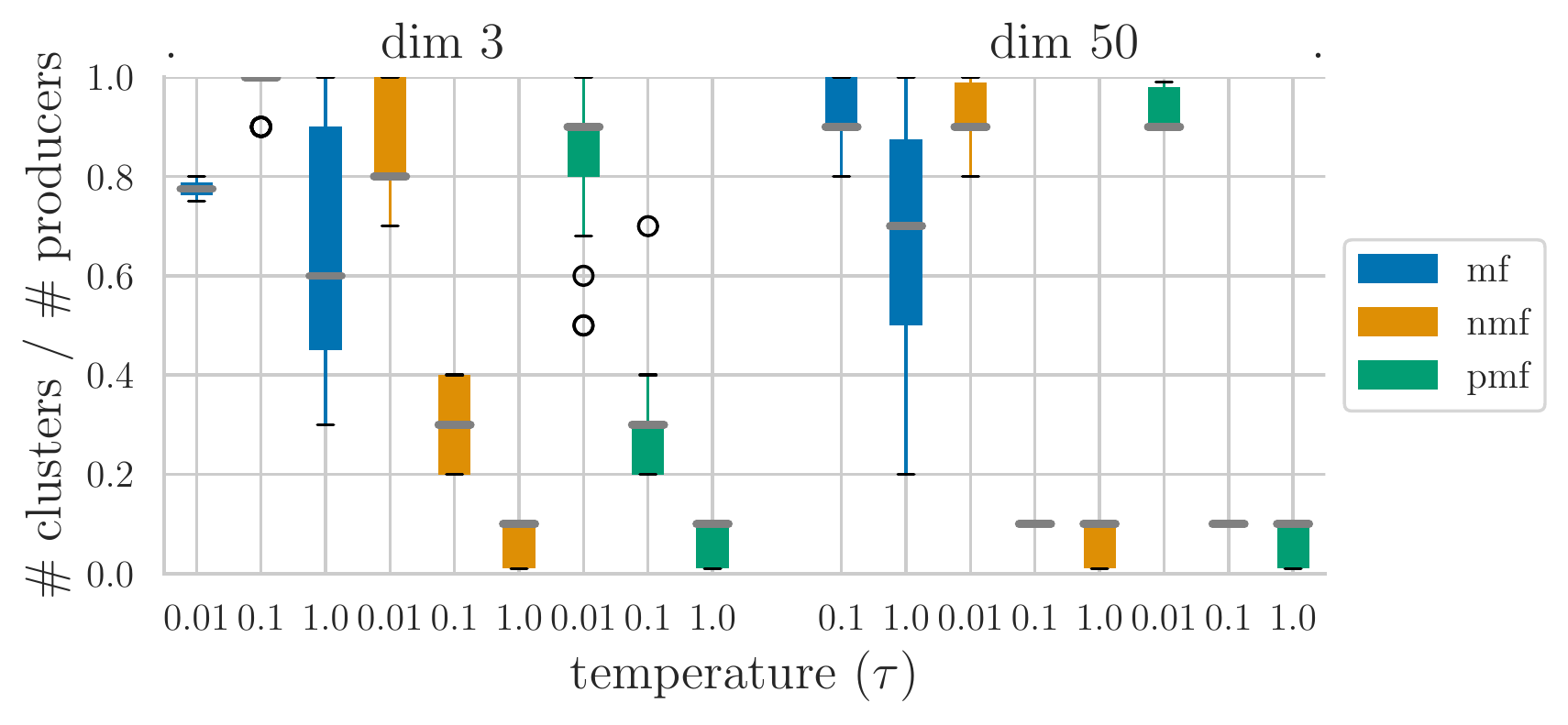}%
    \includegraphics[keepaspectratio,width=0.49\textwidth]{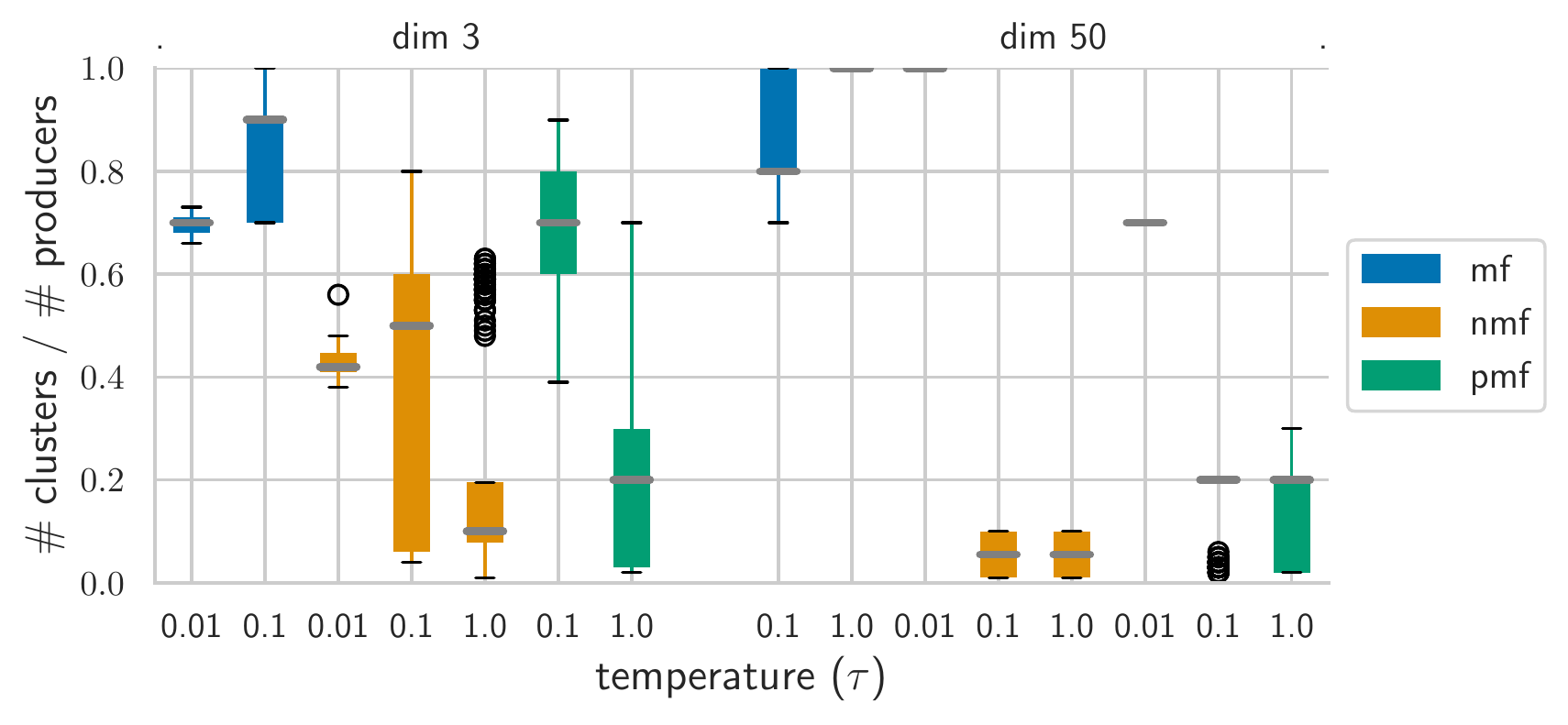}
    \vspace{-1em}
    \caption{
    A counterpart to \Cref{fig:cluster_emergence} with added MF results.}
    \label{fig:all_cluster_emergence}
    \vspace{-1em}
\end{figure}

\begin{figure}[h]
    \centering
    \includegraphics[keepaspectratio,width=0.485\textwidth]{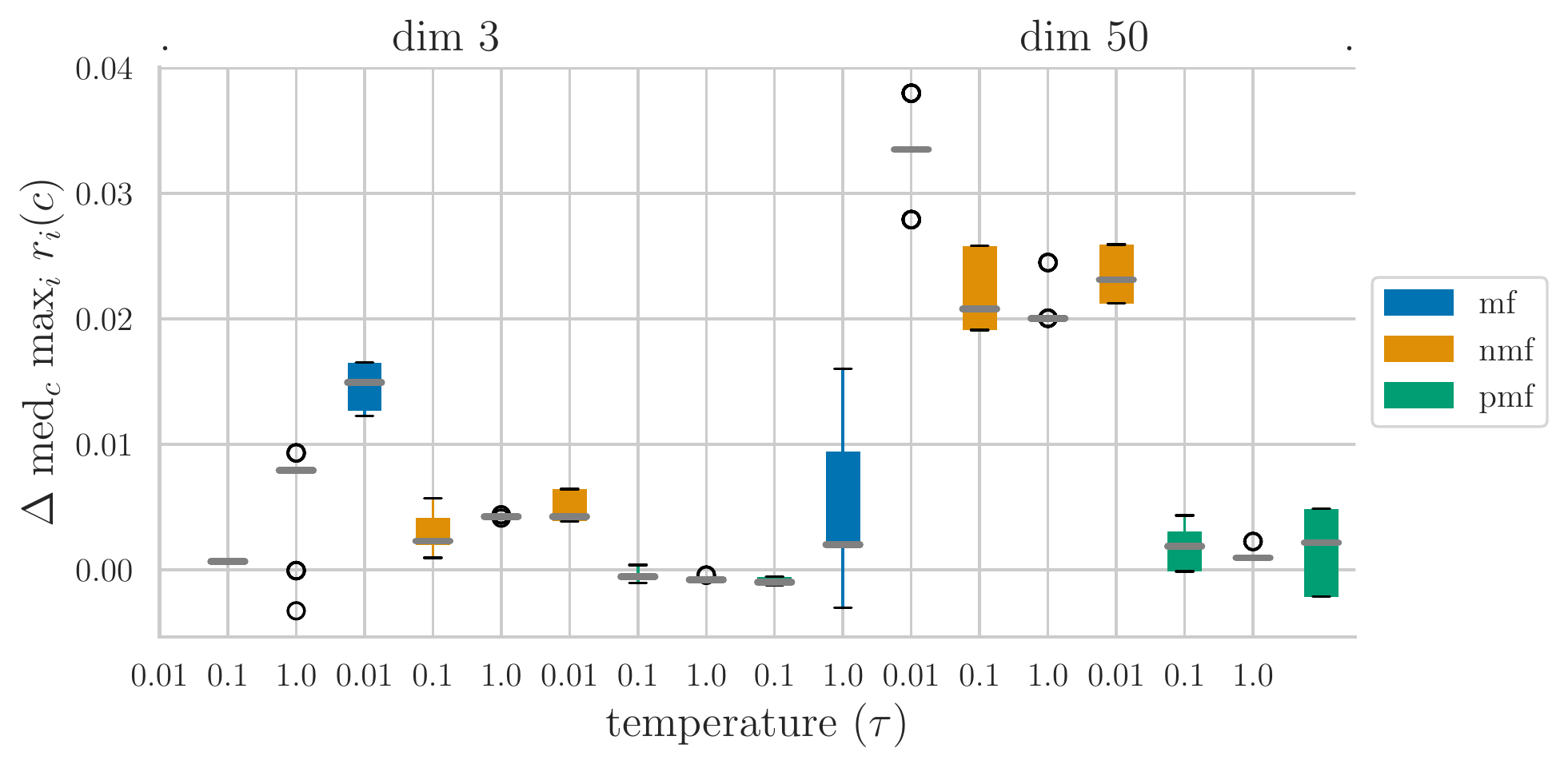}%
    \includegraphics[keepaspectratio,width=0.485\textwidth]{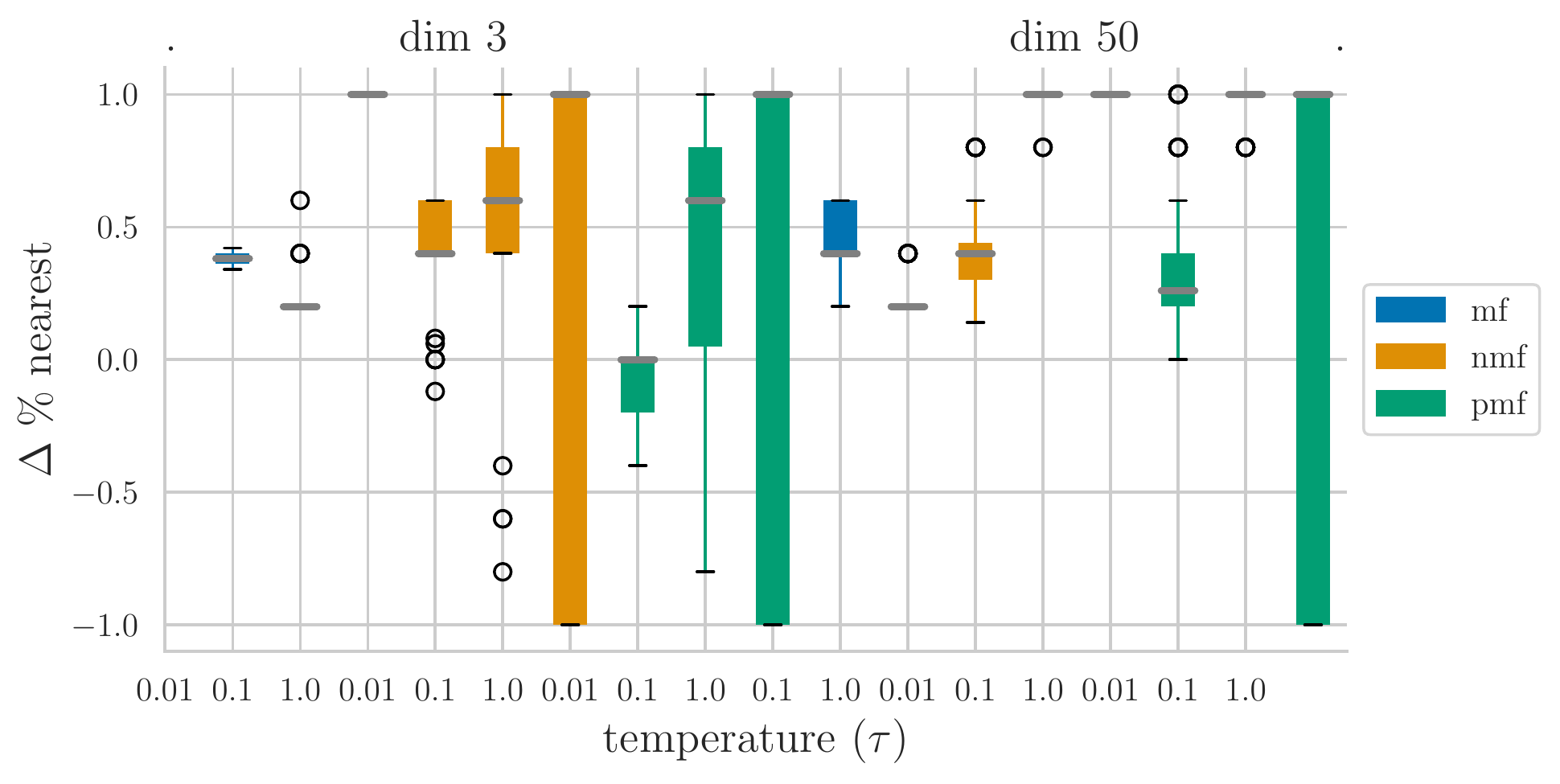}
    \vspace{-1em}
    \caption{
    A counterpart to \Cref{fig:movielens} with added MF results. Baselines omitted to reduce clutter.}
    \label{fig:all_movielens}
    \vspace{-1em}
\end{figure}

\begin{figure}[h]
    \centering
    \includegraphics[keepaspectratio,width=0.485\textwidth]{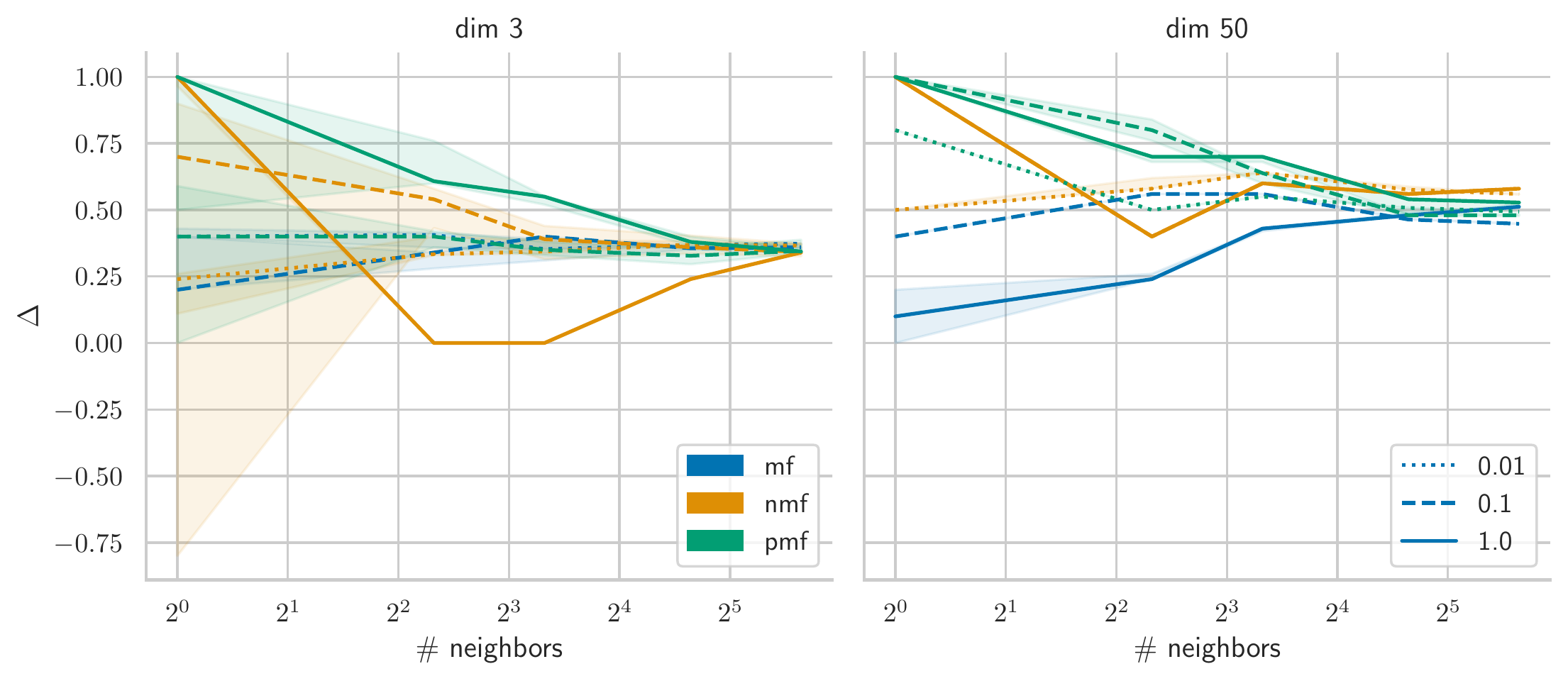}%
    \includegraphics[keepaspectratio,width=0.485\textwidth]{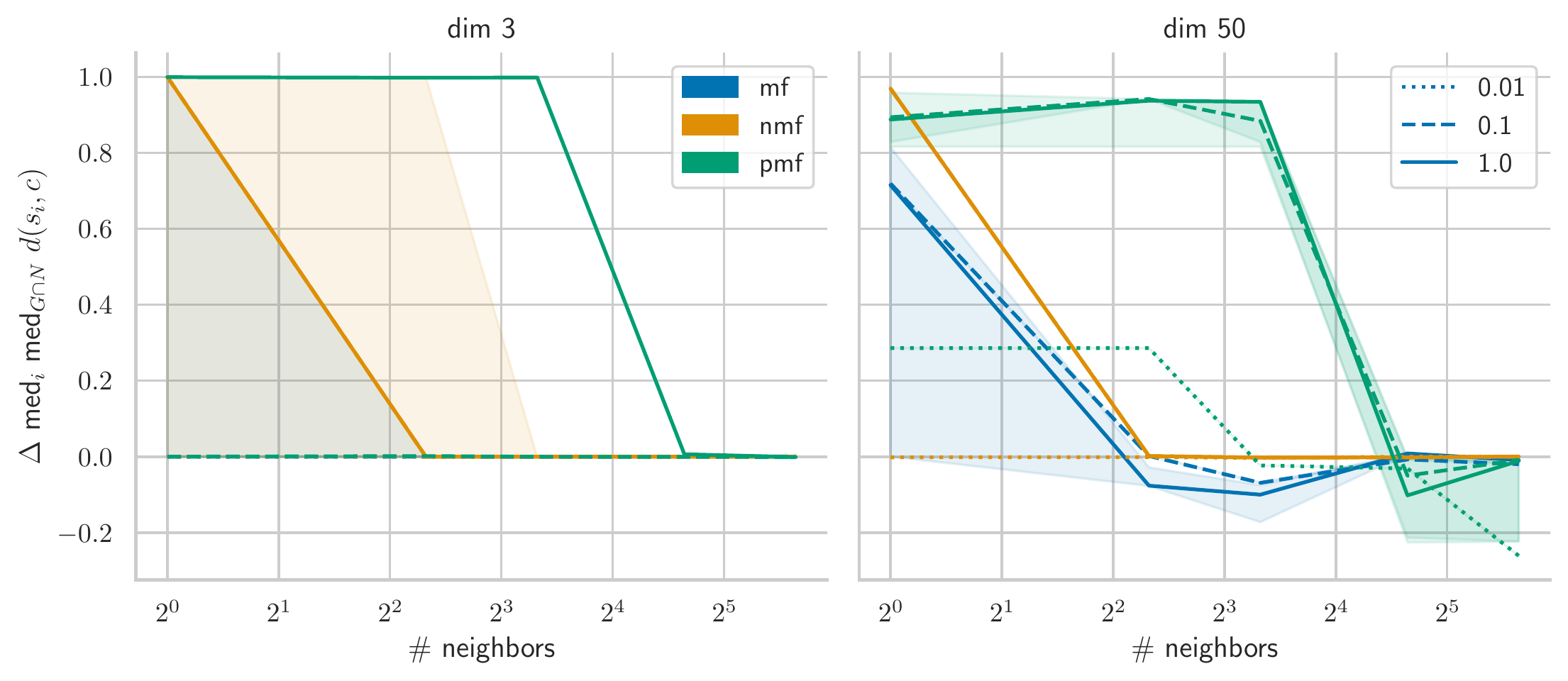}
    \vspace{-1em}
    \caption{
    A counterpart to \Cref{fig:lastfm} with added MF results.
    }
    \label{fig:all_lastfm}
\end{figure}

\end{document}